\documentclass[a4paper,11pt]{article}
\usepackage{jheppub}
\usepackage{lineno}
\usepackage{amsthm}
\usepackage[minimal,nowidering]{yhmath}
\usepackage{tikz}
\usetikzlibrary{arrows.meta}
\usetikzlibrary{patterns}
\usepackage{mathrsfs}
\usepackage{booktabs}
\usepackage{array}
\usepackage{placeins}
\makeatletter
\gdef\@fpheader{Prepared for submission to JHEP}
\makeatother

\newtheorem{proposition}{Proposition}[section]
\newtheorem{conjecture}[proposition]{Conjecture}
\newtheorem{lemma}[proposition]{Lemma}
\newtheorem{theorem}[proposition]{Theorem}
\newtheorem{corollary}[proposition]{Corollary}
\theoremstyle{definition}
\newtheorem{definition}[proposition]{Definition}
\theoremstyle{remark}
\newtheorem{remark}[proposition]{Remark}
\theoremstyle{plain}

\newcommand{\CS}[2]{\text{CS}_{#1}[#2]}
\newcommand{\DCS}[2]{\CS{#1}{#2}\times\CS{-#1}{#2}}
\newcommand{\Ad}{\text{Ad}}
\newcommand{\Irr}{\mathrm{Irr}}

\newcommand{\be}{\begin{equation}}
\newcommand{\ee}{\end{equation}}
\newcommand{\ba}{\begin{aligned}}
\newcommand{\ea}{\end{aligned}}
\newcommand{\bea}{\begin{eqnarray}}
\newcommand{\eea}{\end{eqnarray}}

\def\bp{\begin{pmatrix}}
\def\ep{\end{pmatrix}}

\title{ Large-Level Limit of Doubled Chern-Simons Theory Part I: Regular Sector, Boundaries and Interfaces}

\author[b]{Shengyi Guan}
\author[c]{Qiang Jia}
\author[a]{Jiahua Tian}
\author[b,d]{Yi-Nan Wang}
\affiliation[a]{School of Physics, East China Normal University, \\
Shanghai, 200241, China}
\affiliation[b]{School of Physics, Peking University, \\
Beijing 100871, China}
\affiliation[c]{Department of Physics, Korea Advanced Institute of Science Technology, \\
Daejeon 34141, Korea}
\affiliation[d]{Center for High Energy Physics, Peking University, \\
Beijing, 100871, China}
\emailAdd{guanshengyi12345@gmail.com}\emailAdd{qjia1993@kaist.ac.kr}\emailAdd{jtian1905@gmail.com}\emailAdd{ynwang@pku.edu.cn}

\abstract{We study three-dimensional $BF$ theory as the large-level limit of the doubled Chern-Simons theory $G_k\times G_{-k}$ for simply connected, compact, simple $G$, a limit that connects the symmetry topological field theories of finite and continuous symmetries. Lines whose two highest weights grow linearly with $k$ at fixed difference become $BF$ lines, with magnetic flux set by their average and electric charge equal to their difference. Representing level-$k$ torus states as normalized flux-charge wave packets, we prove that every fixed modular transformation of any fixed normalizable state with regular flux converges in the strong-$*$ topology to its $BF$ counterpart. The Dehn twist gives each line exactly its Aharonov-Bohm phase at every level, while $S$ converges to a Fourier transform exchanging flux and charge, without escaping to singular holonomies or large charges. States that approach singular fluxes at large $k$ are not covered. The natural boundary 2d theory is speculated to be a decompactified boson in the abelian case $G=U(1)$, and $\dim G$ non-compact free scalars for general $G$. For $SU(2)$, we discuss the gapped interfaces between adjacent levels, which must shift some flux or charge by a finite amount. In the $k\rightarrow\infty$ limit the interface is conjectured to be the $c=1$ Runkel-Watts theory.}

\begin{document}
\maketitle
\raggedbottom
\emergencystretch=1em

% section_01.tex
\section{Introduction}
\label{sec:introduction}

The relation between 3d $BF$ theory and doubled Chern--Simons theory  shall be dated to the TQFT formulation of 3d gravity~\cite{Achucarro:1986uwr,Witten:1988hc}. At the classical level, after assigning
\begin{equation}
 A_\pm=A\pm\frac{B}{2k}\,,
 \label{eq:introduction-classical-scaling-revised}
\end{equation}
on a closed three-manifold the doubled Chern--Simons action of $\DCS{k}{G}$ can be rewritten as
\begin{equation}
 \frac{k}{4\pi}
 \bigl(\operatorname{CS}(A_+)-\operatorname{CS}(A_-)\bigr)
 =
 -\frac{1}{2\pi}\int_M\operatorname{tr}(B\wedge F_A)
 -\frac{1}{48\pi k^2}
   \int_M\operatorname{tr}(B\wedge[B,B])\,,
 \label{eq:intro-action-contraction}
\end{equation}
where $A$ and $B$ take values in the compact Lie algebra $\mathfrak g$ of $G$, $F_A=dA+\frac12[A,A]$, and
$\operatorname{CS}(A)=-\int_M\operatorname{tr}\bigl(A\wedge dA
 +\frac13 A\wedge[A,A]\bigr)$. 
As such, the Chern--Simons gravity on a space with $\Lambda>0$ and Euclidean signature is described by the $SU(2)_k\times SU(2)_{-k}$ Chern--Simons theory. The flat space limit shall be taken as $\Lambda\rightarrow 0$, i.e. the large-level limit $k\rightarrow \infty$, and (\ref{eq:intro-action-contraction}) becomes the non-abelian $BF$ theory in 3d.

When such limit is taken, the action
$-\int\operatorname{tr}(B\wedge F_A)$ imposes flatness of $A$, while the shift
symmetry of $B$ removes the local bulk degrees of freedom
\cite{Blau:1989bq}.  Here $BF[G]$ means that $A$ is a connection
for the compact group $G$ and $B$ is an unrestricted adjoint-valued
one-form.  In particular, compactness of $G$ does not make the $B$ field
periodic, hence once an invariant inner
product identifies $\mathfrak g$ with $\mathfrak g^*$, the $BF$-theory can also be interpreted as the 
Chern--Simons theory for $G\ltimes\mathfrak g^*$.

The goal of this paper is to provide a more precise statement of the $BF$-theory as the $k\rightarrow\infty$ limit of the $\DCS{k}{G}$ theory, at the quantum and categorical level.

An obvious difference between the doubled Chern--Simons theory and the $BF$-theory is the spectrum of topological line operators. For a single Chern-Simons theory CS$_k[G]$ with a compact, simple and simply connected group $G$ and positive
integer level $k$, the integrable representations form a modular tensor category
$\mathcal C_k(G)$~\cite{Verlinde:1988sn,Moore:1988qv}.  Pairing the theory with its orientation reverse then gives
\[
 Z(\mathcal C_k(G))\simeq
 \mathcal C_k(G)\boxtimes\mathcal C_k(G)^{\mathrm{rev}},
\]
and cancels the chiral framing anomaly
\cite{Mueger:2001crc}.  Such doubled theories also admit
state-sum and lattice descriptions, namely the Turaev--Viro state sum~\cite{Turaev:1992hq} and the string-net Hamiltonians~\cite{Levin:2004mi}, which generalize Kitaev's quantum-double models for finite groups~\cite{Kitaev:1997wr}. For both the modular tensor category $\mathcal C_k(G)$ and $Z(\mathcal C_k(G))$ with finite $k$, the spectrum of simple anyon lines is finite.

While for the $BF$-theory, from the description of the compact quantum double~\cite{Koornwinder:1996uq,Koornwinder:1998xg}
\[
  D(G)=C(G)\rtimes_{\operatorname{Ad}}G\,,
\]
its simple anyon lines are labeled by a conjugacy class $[g]\subset G$
and an irreducible representation of its centralizer $C_G(g)$.  The first
label measures the holonomy around the line, and the second is an electric
charge compatible with that holonomy. 

There are precedents for taking a Chern--Simons limit directly at the
level of amplitudes, e.g. the computation of the $BF$ partition functions on Seifert homology spheres and the comparison with the large-level limits of doubled Chern--Simons theory or complex-group Chern--Simons matrix integrals
\cite{Blau:2022krm}.  Their analysis exposes a difficulty
that is also relevant here: the formal integral over flat connections
can contain singular contributions, and the choice of limiting
prescription matters. 

The study of continuous symmetry categories and Symmetry Topological Field Theory (SymTFT) provides another motivation. The categorical framework of topological holography and SymTFT for finite symmetry categories (i.e. fusion categories) is well established~\cite{Kong:2015flk,He:2016xpi,Ji:2019jhk,Kong:2020cie,Gaiotto:2020iye,Moradi:2022lqp,Ji:2021esj,Freed:2022qnc,Schafer-Nameki:2023jdn,Bhardwaj:2023fca,Huang:2023pyk,Bhardwaj:2023idu,Bhardwaj:2023bbf,Choi:2024tri,Luo:2025phx,Schafer-Nameki:2025fiy,Qi:2025tal,Bergman:2026lnz}. While for continuous groups the state spaces and
line spectra cease to be finite. Earlier works on SymTFTs for Lie groups include the 
$BF$-type description of continuous symmetries
~\cite{Brennan:2024fgj,Antinucci:2024zjp,Bonetti:2024cjk,Apruzzi:2024htg,Antinucci:2024bcm,Gagliano:2024off,Argurio:2024ewp,Jia:2025jmn,Bonetti:2025dvm,Jia:2026bbo,Jia:2026tfh} for both abelian and non-abelian cases, as well as spacetime (super-)symmetries~\cite{Apruzzi:2025hvs,Borsten:2025pvq,Ambrosino:2026hhv}. 

While for a categorical formulation of the continuous SymTFT, the proper symmetry category and its Drinfeld center must also accommodate continuous families
of objects. The proposals include the categories of skyscraper sheaves and quasi-coherent sheaves~\cite{Freed:2009qp,Jia:2025vrj,Stockall:2025ngz}. While the most complete construction turns out to be operator-algebraic, which uses measurable Hilbert fields and describes the center
through the conjugation groupoid and its twisted $C^*$-algebra
\cite{Jia:2026vcr}.  For the nonanomalous compact case
this gives the same compact-double representation theory used here.
Recent work has also derived the associated Hilbert spaces and regular
modular kernels, and investigated candidate gaugings
\cite{Jia:2026bbo,Jia:2026tsl}.

In the approach of this paper, our comparison of the $BF$-theory and the large $k$ limit of $\DCS{k}{G}$ uses a different
identification of states.  It keeps the difference of the two chiral
weights finite while their average grows with the level, and realizes
the resulting doubled operators on one Hilbert space. Moreover, we test the comparison of the two theories on normalized
states and modular transformations rather than only comparing formal
label sets.

We first fix the classical scaling. The doubled Chern--Simons action is related to the $BF$-action as in (\ref{eq:intro-action-contraction}) 
with the relative lift and invariant form specified in
Section~\ref{sec:Limit_Regular_Simples}. The gauge transformations contract to the ordinary
and shift symmetries of $BF$ theory. On a manifold with boundary, we compare boundary states after specifying the polarization and boundary counterterms.

The quantum version of this scaling requires a dictionary between the
labels of the two theories.  In $\DCS{k}{G}$ a topological line is
labelled by a pair $a\boxtimes\bar b$ of level-$k$ integrable highest weights, one for each
chiral factor, where $\bar b$ is the highest weight of the representation dual to $b$.  All level-$k$ modular data depend on $k$ through the
shifted level $r_k=k+h^\vee$, where $h^\vee$ is the dual Coxeter number
of $G$.  On the $BF$ side a line is labelled
by a conjugacy class and a centralizer representation.  To compare the
two, fix a maximal torus $T\subset G$ (for $SU(N)$, the diagonal
matrices).  Every element of $G$ is conjugate to an element of $T$, and
two elements of $T$ are conjugate in $G$ exactly when they are related
by the Weyl group $W$ (for $SU(N)$, permutations of the diagonal
entries).  A fundamental domain for this action is the Weyl alcove, a
simplex in the Lie algebra of $T$, and we write $Y$ for its interior.
Points of $Y$ correspond to \emph{regular} conjugacy classes of $G$, whose
centralizer is exactly $T$, whereas points on the walls of the alcove
are \emph{singular} classes, with a larger centralizer; the identity
element, whose centralizer is all of $G$, is a vertex.  The irreducible
representations of $T$ are its characters, i.e. the continuous homomorphisms
$T\to U(1)$.  They form the character lattice
$X^*(T)=\operatorname{Hom}(T,U(1))$, which for simply connected $G$ is the 
weight lattice $P$ of $\mathfrak g$.  We therefore label the characters of
$T$ by weights $\eta\in P=X^*(T)$. Writing $\rho$ for the Weyl vector, half the sum of
the positive roots, we assign to a doubled label $a\boxtimes\bar b$ the
pair
\begin{equation}
 x_{a,b}=\frac{a+b+2\rho}{2r_k},
 \qquad
 \eta_{a,b}=a-b.
 \label{eq:introduction-flux-charge-revised}
\end{equation}
The first entry is the average of the two weights, shifted by $\rho$,
measured in units of the shifted level; it is a point of the alcove and
therefore a conjugacy class, which we interpret as the magnetic flux, or
holonomy, carried by the line.  The second entry is the difference of
the two weights; it lies in $P$ and is a character of $T$, which for a
regular flux is precisely an electric charge compatible with that
holonomy.  Thus $(x,\eta)\in Y\times P$ is exactly the label of a simple
line of $BF$ theory with regular flux.  The shift by $\rho$ is the same
one that appears in the Kac--Peterson formula for the modular
$S$-matrix, which evaluates characters of $G$ at the group element
$\exp\bigl(2\pi i(a+\rho)/r_k\bigr)$.  The scaling is a
correlated large-weight limit: both weights grow with $k$ while their
difference stays fixed.  Holding $a$ and $b$ fixed would instead send
$x$ to the identity class, a singular point where the centralizer is the
whole group, and would therefore only ever reach the purely electric
lines.

The two theories are then compared on one Hilbert space.  For $BF$
theory on a torus, the states with regular flux are square-integrable
functions $\psi(x,\eta)$ of a continuous flux $x\in Y$ and a discrete
charge $\eta\in P$. They form the Hilbert space
\be
\mathcal H_0=L^2(Y\times P,dx\otimes\#)\,,\qquad
\|\psi\|_{\mathcal H_0}^2=\sum_{\eta\in P}\int_Y|\psi(x,\eta)|^2\,dx\,.
\ee
Here $dx$ is the flat (Lebesgue) measure on the alcove, induced by the
invariant inner product on the Lie algebra of $T$ normalized so that long
roots have squared length two (section~\ref{sec:Limit_Regular_Simples}),
and $\#$ is the counting measure on $P$, which gives every charge unit
weight.  The product measure $dx\otimes\#$ thus integrates over the continuous flux and sums over the discrete charge.

Equivalently,
after a Fourier transformation in the charge, $\mathcal H_0$ is the space
of wavefunctions of two commuting holonomies around the two cycles of the
torus, invariant under simultaneous conjugation by $W$.  At finite level
the torus Hilbert space $\mathcal H_k$ is spanned by the finitely many
doubled labels, whose images $(x_{a,b},\eta_{a,b})$ form a lattice in
$Y\times P$ of spacing proportional to $1/k$.  We embed $\mathcal H_k$
into $\mathcal H_0$ by sending each normalized basis state to the
normalized characteristic function of a small cell around its lattice
point.  This sampling map is an isometry, so it fixes how a given
continuum wave packet is normalized at each level, and the cells fill
the alcove as $k\to\infty$.

Our main result, Theorem~\ref{thm:strong-modular}, concerns the modular
group $SL(2,\mathbb Z)$ of large diffeomorphisms of the torus, generated
by the exchange $S$ of the two cycles and the Dehn twist $T$ along one
of them.  Any three-dimensional topological field theory represents this
group unitarily on its torus Hilbert space. At finite level this doubled torus
representation $\rho_k$ is given by the modular $S$- and $T$-matrices of
$Z(\mathcal C_k(G))$. Using the sampling map, the
finite-level action can be transported to $\mathcal H_0$, acting
trivially on the part of $\mathcal H_0$ that the finite-level states do
not reach; we denote the result by $\rho_k^\#$.  The theorem states that
for every fixed modular transformation $\gamma$ and every fixed wave
packet $f\in\mathcal H_0$,
\begin{equation}
 \|\rho_k^\#(\gamma)f-\rho_0(\gamma)f\|_{\mathcal{H}_0}\longrightarrow0\,,
 \qquad
 \|\rho_k^\#(\gamma)^*f-\rho_0(\gamma)^*f\|_{\mathcal{H}_0}\longrightarrow0\,.
 \label{eq:intro-main-limit}
\end{equation}
Here $\rho_0$ is the regular $BF$ modular action.  Its Dehn twist
multiplies a flux--charge state by $e^{2\pi i(\eta,x)}$, the
Aharonov--Bohm pairing of its electric and magnetic labels; its
$S$ transformation exchanges the two torus cycles and becomes a
Weyl-symmetrized Fourier transform.  Equation~\eqref{eq:intro-main-limit}
means that transforming a fixed normalizable wave packet before or
after taking the limit gives the same state.  It also gives convergence
of the corresponding matrix elements between fixed wave packets.
The limit is unchanged for the admissible class of sampling
embeddings specified in Section~\ref{sec:Modular_Limit}.

The proof explains why a pointwise expansion of the modular matrix
would not suffice.  The doubled Kac--Peterson formula is a sum over two
Weyl elements.  Equal Weyl elements produce the continuum Fourier
kernel.  Unequal elements have a rapidly varying phase, and Poisson
summation controls their cancellation on smooth packets supported away
from the walls.  One must then exclude loss of norm into charges that
grow with $k$ or into shrinking neighborhoods of the alcove walls.
The no-escape estimate supplies this step and promotes the local
calculation to the Hilbert-space statement
\eqref{eq:intro-main-limit}.

The restriction to regular sector has important physical interpretations.  The alcove boundary
has zero flat measure, so changing a wavefunction only on that boundary
does not change its $L^2$ class.  A sequence of normalized states can
nevertheless concentrate near it.  In particular, the finite-level
vacuum approaches a singular label and need not converge to a
normalizable vector of $\mathcal H_0$.  The theorem concerns fixed
continuum packets, not every level-dependent sequence of states.  It
does not by itself imply convergence of vacuum partition functions,
fusion spaces, or the full line category.  Even for finite modular
categories, the genus-one modular data do not determine the category
\cite{Mignard:2017mms}.

The final part of the paper studies the limit in the presence of
boundaries and interfaces.  On a manifold with boundary, Chern--Simons
theory has edge modes described by a Wess--Zumino--Witten (WZW) model.
The doubled theory has one left-moving and one right-moving copy.  We ask
which boundary theory survives in the $BF$ limit.  For abelian $G$ the
boundary wavefunctions of the two theories can be compared exactly.
This requires a choice of polarization, meaning a choice of the boundary
variables on which the wavefunction depends, together with matching
boundary counterterms.  The comparison shows explicitly that the
direction in field space that becomes $B$ decompactifies.  For
non-abelian $G$ we check the central charge, the large-level expansion of
the WZW action, the global symmetry, and the contraction of the current
algebra.  All four point to a boundary theory of $\dim G$ non-compact
free scalars valued in $\mathfrak g$, with a careful normalization of the zero modes and the flat
gauging of this boundary theory. This gauging
needs a regulator at the special holonomies where extra zero modes
appear.  We regard these results as consistency checks on a candidate
boundary theory.

For neighboring $SU(2)$ levels we ask whether an interface can
preserve a fixed electric charge while moving the magnetic coordinate
by only one lattice scale.  Proposition~\ref{prop:adjacent-wall-nonrefinement}
obstructs this particular transport by a gapped interface obeying the
stated genus-one consistency conditions.  The diagonal cosets
$(SU(2)_k\times SU(2)_1)/SU(2)_{k+1}$ furnish gapless candidates:
their branching rules contain such channels at bounded conformal
weight.  Their neutral diagonal spectra have the
Runkel--Watts large-level scaling
\cite{Goddard:1984vk,Goddard:1986ee,Runkel:2001ng}.
The continuous-orbifold analysis~\cite{Gaberdiel:2011aa} and the construction of fermionic
Chern--Simons interfaces~\cite{Gaiotto:2026qai} also give useful comparisons.

The structure of this paper is as follows.
In section~\ref{sec:Limit_Regular_Simples}, we fix the normalizations of
the root data and of the flat measure on the Weyl alcove, derive the
contraction of the doubled Chern--Simons action and of its gauge
symmetries to $BF$ theory, label the regular simple lines of the compact
quantum double by $(x,\eta)\in Y\times P$, and construct the finite-level
flux--charge lattice that samples them.
In section~\ref{sec:Modular_Limit}, we prove
Theorem~\ref{thm:strong-modular}.  The two main steps of the proof are
the Poisson decay of the unequal-Weyl terms in
Proposition~\ref{prop:off-diagonal-decay} and the no-escape estimate in
Theorem~\ref{thm:no-escape}; we also show that the limit is the same for
every admissible family of sampling embeddings.
In section~\ref{sec:boundary-theories}, we turn to the boundary theories,
whose analysis is not used in the proofs of
section~\ref{sec:Modular_Limit}.  We compare the boundary wavefunctions of
the two theories exactly for $G=U(1)$ in section~\ref{sec:U1_WZW}, give
four checks of the free-scalar candidate for general $G$ in
section~\ref{sec:G_WZW},
propose a regular-sector ansatz for the non-abelian boundary
wavefunctions in section~\ref{sec:nonabelian-BF-wavefunctions}, and study
the flat gauging of the free-scalar candidate in
section~\ref{sec:flat-gauging}.
In section~\ref{sec:interfaces}, we formulate a refinement criterion for
level-changing interfaces, prove the $SU(2)$ obstruction to uniformly
refining gapped transport in
Proposition~\ref{prop:adjacent-wall-nonrefinement}, and
relate the conformal weights of the coset channels to the flux
displacement across the interface.  This leads us to
Conjecture~\ref{conj:runkel-watts} on the large-level limit of the
unit-jump interface.
In section~\ref{sec:discussion}, we summarize our results and discuss
open questions.  Appendix~\ref{app:endpoint-modular-action} gives the
Fourier construction of the endpoint modular representation and its
compatibility with the regular holonomy chart.

% section_02.tex
\section{Classical contraction and regular flux--charge scaling}
\label{sec:Limit_Regular_Simples}

The limit studied here keeps a magnetic holonomy finite while resolving an
electric charge in the small separation of the two chiral labels.  At the
level of fields this means holding $A=(A_++A_-)/2$ and
$B=k(A_+-A_-)$ fixed.  At the level of Wilson-line labels the two highest
weights grow with $k$, whereas their difference remains fixed.  These are
the variables in which a finite set of doubled Chern--Simons states can
approximate wave packets with continuous flux and discrete charge.

Throughout this section and Section~\ref{sec:Modular_Limit}, $G$ is compact,
connected, simple, and simply connected, and $k$ is a positive integer.
The abelian boundary calculation will be treated separately.  We first fix
the conventions needed to compare the classical action with the quantum
modular matrices, then describe the regular line labels and their sampling
lattice.  The distinction between regular wave packets and singular-flux
line insertions will be kept explicit.

\subsection{Holonomies, root data, and the Hilbert-space measure}

A flat connection around a loop is measured by its holonomy in $G$, up to
conjugation.  The alcove is a set of representatives for this magnetic
variable.  The weight lattice describes the electric charges that remain
after a regular holonomy reduces the gauge stabilizer to $T$.  Fixing their
normalizations together is necessary: the same pairing determines charge
quantization, topological spin, and the phase of the modular $S$ kernel.

Write $\mathfrak g=\operatorname{Lie}(G)$ for the compact real Lie algebra,
realized by anti-Hermitian matrices, and choose a maximal torus $T\subset G$.
We use the single real Cartan space
\begin{equation}
 \mathfrak t:=-i\,\operatorname{Lie}(T)\subset\mathfrak g_{\mathbb C},
 \qquad x\longmapsto e^{2\pi i x}\in T.
 \label{eq:normalized-torus-covering}
\end{equation}
Thus $x\in\mathfrak t$ is Hermitian and $2\pi i x\in\mathfrak g$.
The kernel of this map is the cocharacter lattice $X_*(T)$, which equals
the coroot lattice $Q^\vee$ because $G$ is simply connected.

For products of Lie-algebra elements, $\operatorname{tr}$ denotes
$1/(2h^\vee)$ times the trace in the adjoint representation; in particular,
\begin{equation}
 \operatorname{tr}(XY):=
 \frac{1}{2h^\vee}
 \operatorname{Tr}_{\mathfrak g_{\mathbb C}}
 \bigl(\operatorname{ad}_X\operatorname{ad}_Y\bigr),
 \qquad X,Y\in\mathfrak g_{\mathbb C},
 \label{eq:trace-normalization}
\end{equation}
where $h^\vee$ is the dual Coxeter number.  The numerator is the Killing
form, extended complex-bilinearly.  Hence $-\operatorname{tr}(XY)$ is
positive definite on the compact real algebra.  For $SU(N)$,
$\operatorname{tr}(XY)=\operatorname{Tr}_{\mathrm{fund}}(XY)$.
For Lie-algebra-valued differential forms, products inside the trace
include the displayed wedge products.
The level-one normalization is fixed by requiring that the Chern--Weil class
\begin{equation}
 \left[-\frac{1}{8\pi^2}
 \operatorname{tr}(F\wedge F)\right]
 \in H^4(BG;\mathbb R)
 \label{eq:integral-CS-form}
\end{equation}
is the image of the basic generator of $H^4(BG;\mathbb Z)$.  In particular,
for every closed oriented four-manifold $X$ and every $G$-bundle on $X$,
\begin{equation}
 -\frac{1}{8\pi^2}\int_X
 \operatorname{tr}(F\wedge F)\in\mathbb Z.
 \label{eq:integral-CS-period}
\end{equation}
The choice of the basic generator in \eqref{eq:integral-CS-form} fixes the
level-one normalization: the integrality condition
\eqref{eq:integral-CS-period} alone would also allow positive integer
multiples of this form.  The Euclidean inner product used for roots,
alcoves, and Lebesgue measure is simply
\begin{equation}
 (x,y):=\operatorname{tr}(xy),
 \qquad x,y\in\mathfrak t.
 \label{eq:geometric-Cartan-form}
\end{equation}
With this normalization every long root has squared length two.
We use $(\, ,\,)$ for roots, coroots, weights, alcove coordinates,
and covolumes, and extend it complex-bilinearly when necessary.
The conversion to the positive pairing of compact Lie-algebra elements is
\begin{equation}
 -\operatorname{tr}\bigl((2\pi i x)(2\pi i y)\bigr)
 =(2\pi)^2(x,y).
 \label{eq:level-geometric-form-conversion}
\end{equation}
Gauge-field pairings below are written directly as
$-\operatorname{tr}(XY)$; Cartan-coordinate pairings in the boundary and
interface calculations use the same $(\, ,\,)$ as above.
Angle brackets retain their usual meanings for weight--coroot duality,
the explicitly defined split pairing, and Hilbert-space or cotangent
duality pairings.

The roots are normalized real character roots: a root $\alpha$ satisfies
\begin{equation}
 \chi_\alpha(e^{2\pi i x})=e^{2\pi i\alpha(x)}.
 \label{eq:real-character-root-convention}
\end{equation}
Thus $\Delta\subset\mathfrak t^*$.  We use the geometric form in
Eq.~\eqref{eq:geometric-Cartan-form} to identify
$\mathfrak t^*\simeq\mathfrak t$ whenever roots occur in metric formulas.
Fix simple
roots $\alpha_1,\ldots,\alpha_\ell$ and the corresponding positive system
$\Delta_+$, where $\ell=\operatorname{rank}G$.  For a root $\alpha$,
put $\alpha^\vee=2\alpha/(\alpha,\alpha)$.  The root, coroot, and weight
lattices are
\[
 Q=\mathbb Z\Delta,\qquad
 Q^\vee=\mathbb Z\{\alpha^\vee:\alpha\in\Delta\},\qquad
 P=(Q^\vee)^*.
\]
In the common Euclidean realization fixed by the geometric form
\eqref{eq:geometric-Cartan-form},
\begin{equation}
 Q^\vee\subset Q\subset P.
 \label{eq:root-coroot-weight-inclusions}
\end{equation}
Indeed, a long coroot is the corresponding long root, while a short coroot
is the corresponding short root multiplied by the lacing number; the second
inclusion is the integrality of the Cartan matrix.  The first inclusion can
be strict in non-simply-laced types, so $Q$ and $Q^\vee$ will not be
interchanged below.  We also write
\begin{equation}
 P_{\mathbb R}:=P\otimes_{\mathbb Z}\mathbb R,\qquad
 Q_{\mathbb R}:=Q\otimes_{\mathbb Z}\mathbb R
 =Q^\vee\otimes_{\mathbb Z}\mathbb R\simeq\mathfrak t.
 \label{eq:real-weight-space}
\end{equation}
All these lattices have full rank, so
$P_{\mathbb R}=Q_{\mathbb R}\simeq\mathfrak t$.
The canonical weight--coroot pairing is
$\langle\lambda,\alpha^\vee\rangle
=2(\lambda,\alpha)/(\alpha,\alpha)$.  Let
\[
 P_+=\{\lambda\in P:
       \langle\lambda,\alpha_i^\vee\rangle\geq0
       \text{ for }1\leq i\leq\ell\}.
\]
We write $W$ for the Weyl group, $w_0$ for its longest element, and
$\epsilon(w)=\det(w)$.  For a weight $\lambda$, its dual highest weight is
\begin{equation}
 \lambda^*:=-w_0\lambda.
 \label{eq:dual-highest-weight-convention}
\end{equation}
Let $\theta$ be the highest root,
$\rho=\frac12\sum_{\alpha\in\Delta_+}\alpha$ the Weyl vector, with $h^\vee$
as in \eqref{eq:trace-normalization}.  The shifted level is
\begin{equation}
 r_k=k+h^\vee.
 \label{eq:preliminary-dimensions}
\end{equation}
For simply connected $G$, restriction to the center identifies
$P/Q\simeq\widehat{Z(G)}$.  Thus two highest weights define the same central
character precisely when their difference lies in $Q$.
Standard references for the root, coroot, weight, affine-Weyl, and center
conventions used here are \cite{Bourbaki:2002,Kac:1990gs}.

With our exponential convention, the maximal torus is identified as
\begin{equation}
 T\simeq\mathfrak t/Q^\vee,
 \qquad x\longmapsto g_x:=e^{2\pi i x}.
 \label{eq:torus-exponential-convention}
\end{equation}
Using the inner product to identify $\mathfrak t^*$ with $\mathfrak t$,
the weight lattice is the lattice dual to $Q^\vee$:
\begin{equation}
 P=\bigl\{\eta\in\mathfrak t:
       (\eta,q^\vee)\in\mathbb Z
       \text{ for every }q^\vee\in Q^\vee\bigr\}.
 \label{eq:weight-coroot-duality}
\end{equation}
Each $\eta\in P$ determines a character of $T$ by
\[
 \chi_\eta(g_y)=e^{2\pi i(\eta,y)}.
\]
The integrality condition makes this expression independent of the
representative $y$ modulo $Q^\vee$. Every continuous character of $T$
arises uniquely in this way.

Two elements of $T$ are conjugate in $G$ precisely when they lie in the
same Weyl orbit. On $\mathfrak t$, Weyl transformations together with
coroot translations form the affine Weyl group
$W_{\mathrm{aff}}=W\ltimes Q^\vee$, acting by
$x\mapsto wx+q^\vee$. A closed fundamental alcove for this action
and its interior are
\begin{equation}
 \overline Y=
 \bigl\{x\in\mathfrak t:
   (\alpha_i,x)\geq0\ (1\leq i\leq\ell),\quad
   (\theta,x)\leq1\bigr\},
 \qquad Y=\operatorname{int}\overline Y.
 \label{eq:fundamental-alcove}
\end{equation}
Every conjugacy class in $G$ has a unique representative $g_x$ with
$x\in\overline Y$. This representative is regular precisely when
$x\in Y$, in which case $C_G(g_x)=T$. On the boundary
$\partial\overline Y$, the centralizer strictly contains $T$.

Since the highest root $\theta$ is long and has squared length two,
$(\theta,x)=\langle x,\theta^\vee\rangle$. The lower-wall conditions
are likewise equivalent to
$\langle x,\alpha_i^\vee\rangle\geq0$, since
\[
 (\alpha_i,x)
 =\frac{(\alpha_i,\alpha_i)}{2}
   \langle x,\alpha_i^\vee\rangle.
\]
Thus the alcove inequalities use the same coroot pairings as the
dominance and level conditions for integrable weights below.

The geometric inner product \eqref{eq:geometric-Cartan-form} fixes the
flat (Lebesgue) measure $dx$ on $\mathfrak t$.\footnote{
In orthonormal coordinates, this is the usual Euclidean volume element
$dx=dx^1\cdots dx^\ell$. For smooth, compactly supported wave packets,
the norm integrals agree with ordinary Riemann integrals. Such packets
form a dense subspace of the $L^2$ state space used below and may be
kept in mind for physical intuition. The Lebesgue formulation also
includes their $L^2$ limits, which need not be continuous.}
For a full lattice $L\subset\mathfrak t$,
$\operatorname{covol}(L)$ denotes the $dx$-volume of a fundamental cell.
Since $P$ and $Q^\vee$ are dual,
\begin{equation}
 \operatorname{covol}(P)\operatorname{covol}(Q^\vee)=1.
 \label{eq:lattice-and-Haar-normalization}
\end{equation}
More precisely, if $F\subset\mathfrak t$ is any measurable fundamental cell
for $Q^\vee$, normalized Haar measure on $T$ is characterized by
\begin{equation}
 \int_T f(t)\,dm_T(t)
 =\operatorname{covol}(Q^\vee)^{-1}
   \int_F f(e^{2\pi i x})\,dx
 \label{eq:Haar-integral-normalization}
\end{equation}
for continuous $f$.  Unless another measure is displayed, an alcove integral uses the flat
measure $dx$.  This is the measure that appears in the sampling limit and,
after Fourier transformation in the second torus holonomy, in the
genus-one Hilbert space of Section~\ref{sec:Modular_Limit}.  It should be
distinguished from the distribution of conjugacy classes of a
Haar-distributed element of $G$.  For a continuous class function $F$, the
Weyl integration formula gives
\begin{equation}
 \int_G F(g)\,dm_G(g)
 =\frac{1}{\operatorname{covol}(Q^\vee)}
   \int_Y F(g_x)\,\mathcal J(x)\,dx,
 \qquad
 \mathcal J(x)=\prod_{\alpha\in\Delta_+}
                  4\sin^2\!\bigl(\pi(\alpha,x)\bigr).
 \label{eq:alcove-Haar-Jacobian}
\end{equation}
The factor $1/|W|$ in the torus form of the Weyl integration formula is
cancelled by the $|W|$ alcoves in a torus fundamental domain; see
\cite[pp.~123--128]{Bump:2013}.  The Jacobian $\mathcal J$
belongs to the Haar push-forward in Eq.~\eqref{eq:alcove-Haar-Jacobian}.
The flat measure used for genus one is fixed by the torus holonomy
coordinates and their Fourier transform.

The boundary of the alcove consists of holonomies with enhanced gauge
stabilizer.  Such holonomies have zero probability in a wave packet with
a square-integrable density relative to $dx$.  This gives the following
precise meaning to the use of an open alcove in the Hilbert space.

\begin{proposition}[Removing the alcove boundary in $L^2$]
\label{prop:alcove-null-boundary}
Restriction from $\overline Y$ to $Y$ and extension by zero from $Y$ to
$\overline Y$ are inverse unitary maps on $L^2$ equivalence classes:
\begin{equation}
 L^2(Y,dx)\simeq L^2(\overline Y,dx).
 \label{eq:open-closed-alcove-L2}
\end{equation}
The same statement holds after taking the product with $P$ equipped with
counting measure.
\end{proposition}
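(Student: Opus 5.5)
The whole argument reduces to one measure-theoretic fact: the boundary $\partial\overline Y=\overline Y\setminus Y$ has Lebesgue measure zero in $\mathfrak t$. Granting this, the two maps in the statement are well defined on equivalence classes, isometric, and mutually inverse, and the product with $P$ follows fiberwise. So I would first prove the null-boundary fact, then verify the unitarity claims, then pass to $Y\times P$.

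\textbf{Null boundary.} By \eqref{eq:fundamental-alcove}, $\overline Y$ is cut out by the $\ell+1$ affine inequalities $(\alpha_i,x)\geq 0$ and $(\theta,x)\leq 1$. Its interior is the set where all these inequalities are strict: that set is open and contained in $\overline Y$. Conversely, a point where some inequality is an equality has, in every neighborhood, points violating that inequality, because the corresponding linear functional is nonzero. Hence $\partial\overline Y$ is contained in the finite union of affine hyperplanes
\[
\{(\alpha_i,x)=0\}\ (1\leq i\leq\ell),\qquad \{(\theta,x)=1\}.
\]
Each proper affine hyperplane has $dx$-measure zero; for instance, by Fubini in orthonormal coordinates adapted to its normal vector. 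So $\partial\overline Y$ is Lebesgue-null.

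\textbf{Unitary correspondence.} Let $R$ be restriction and $E$ extension by zero. If $f=g$ almost everywhere on $\overline Y$, then $f|_Y=g|_Y$ almost everywhere on $Y$, so $R$ is well defined on classes. Likewise, two extensions by zero of almost-everywhere-equal functions on $Y$ agree almost everywhere, so $E$ is well defined. Both maps are linear. For the norms,
\[
\|f\|^2_{L^2(\overline Y)}=\int_Y|f|^2\,dx+\int_{\partial\overline Y}|f|^2\,dx=\|Rf\|^2_{L^2(Y)},
\]
since the second integral vanishes on a null set. Clearly $RE=\mathrm{id}$. Also $ERf$ differs from $f$ only on $\partial\overline Y$, so $ERf=f$ in $L^2$. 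Thus $R$ and $E$ are mutually inverse isometric bijections, i.e.\ unitaries. Preservation of the inner product follows from the norm identity by polarization, or directly by the same splitting of the integral.

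\textbf{Product with $P$.} The set $\partial\overline Y\times P$ is a countable union of the null sets $\partial\overline Y\times\{\eta\}$, so it is null for $dx\otimes\#$. The same argument then applies verbatim. Concretely, the norm is $\sum_{\eta}\int|\psi(x,\eta)|^2\,dx$, and $R$ and $E$ act on each $\eta$-slice separately, so the norm identity holds term by term.

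\textbf{Main step.} The only nontrivial point is the containment of $\partial\overline Y$ in the wall hyperplanes. It requires the defining functionals to be nonzero, which holds because simple roots and $\theta$ are nonzero. The remaining steps are routine.
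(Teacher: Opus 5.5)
Your proposal is correct and follows essentially the same route as the paper. Both arguments place $\partial\overline Y$ in finitely many affine hyperplanes, show these are Lebesgue-null by Fubini, check that restriction and zero-extension are well-defined mutually inverse isometries, and handle $Y\times P$ fiberwise using countability of $P$; your added justification that the boundary is exactly where some defining inequality is saturated fills in a step the paper states without proof.
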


\begin{proof}
The boundary lies in the finite union of affine hyperplanes
\[
 \{x:(\alpha_i,x)=0\}\quad(1\leq i\leq\ell),
 \qquad \{x:(\theta,x)=1\}.
\]
Choose an orthonormal coordinate normal to one such hyperplane.  In these
coordinates its intersection with a bounded box is contained in a product
of an $(\ell-1)$-dimensional box and a singleton.  Fubini's theorem and the
zero one-dimensional Lebesgue measure of a singleton imply that this
intersection has $\ell$-dimensional measure zero.  The alcove is bounded,
so a finite union of these intersections covers its boundary and
$dx(\partial\overline Y)=0$.

For a measurable $f$ on $\overline Y$, deleting this null set leaves its
norm unchanged:
\[
 \int_{\overline Y}|f|^2\,dx=\int_Y|f|^2\,dx.
\]
Restriction consequently respects almost-everywhere equivalence and is
isometric.  Conversely, extend a representative of $g\in L^2(Y)$ by zero
on $\partial\overline Y$.  This defines an $L^2(\overline Y)$ class with
the same norm.  Restriction followed by extension changes a representative
only on the null boundary, so the two maps are inverse on the quotient
spaces.  Finally, $P$ is countable and
\[
 (dx\otimes\#)(\partial\overline Y\times P)
 =\sum_{\eta\in P}dx(\partial\overline Y)=0.
\]
The argument therefore applies in each charge fiber and to their Hilbert
direct sum.
\end{proof}

This identification does not impose a boundary condition on a wavefunction.
A general $L^2$ class has no boundary value to impose.  It also does not
remove singular-flux representations from the line spectrum: a state or
distribution supported exactly on a wall is outside this absolutely
continuous $L^2$ description.  Its stabilizer and its allowed electric
charges must be specified separately.  We return to this distinction below.
We write $K\Subset U$ when $K$ is compact and contained in the open set $U$.

\subsection{The doubled action and its contraction}

The opposite levels cancel the terms that would remain if the two
connections were equal.  Their first relative variation of the doubled Chern--Simons terms is proportional
to curvature, which is the coupling required for a $BF$ action.  Keeping
that variation finite fixes the factor of $k$ in the definition of $B$.

Let $M$ be an oriented three-manifold and let $A_+$ and $A_-$ be connections
on the same principal $G$-bundle.  Their average is a connection and their
difference is an adjoint-valued one-form.  In a common trivialization put
\begin{equation}
 F_A=dA+\frac12[A,A],
 \qquad
 \operatorname{CS}(A)
 =-\int_M\operatorname{tr}\left(A\wedge dA
 +\frac13 A\wedge[A,A]\right).
 \label{eq:CS-functional}
\end{equation}
The doubled action on a closed manifold is
\begin{equation}
 S_k[A_+,A_-]=\frac{k}{4\pi}
 \bigl(\operatorname{CS}(A_+)
      -\operatorname{CS}(A_-)\bigr)
 \quad\bmod 2\pi\mathbb Z.
 \label{eq:doubled-CS-action}
\end{equation}
Choose the real-valued lift determined by the relative transgression formula
below, and denote the resulting lifted action by $\widetilde S_k$.
Introduce
\begin{equation}
 A=\frac{A_++A_-}{2},\qquad B=k(A_+-A_-),\qquad
 A_\pm=A\pm\frac{B}{2k}.
 \label{eq:classical-field-recombination}
\end{equation}
For an adjoint-valued one-form $\Xi$, set $A_t=A+t\Xi$.  The variation
of the Chern--Simons functional in a fixed boundary trivialization is
\[
 \frac{d}{dt}\operatorname{CS}(A_t)
 =-2\int_M\operatorname{tr}(\Xi\wedge F_{A_t})
   +\int_{\partial M}\operatorname{tr}(A_t\wedge\Xi).
\]
Here
$F_{A_t}=F_A+tD_A\Xi+\tfrac12t^2[\Xi,\Xi]$.
Integrating from $t=-1$ to $t=1$ removes the odd term, while
$\int_{-1}^1t^2dt=2/3$.  Symmetry of the invariant form gives
$\operatorname{tr}(\Xi\wedge\Xi)=0$, so the boundary integral only
contains $A$.  This proves the exact transgression identity
\begin{align}
 \operatorname{CS}(A+\Xi)
 -\operatorname{CS}(A-\Xi)
 ={}&-4\int_M\operatorname{tr}(\Xi\wedge F_A)
 -\frac23\int_M\operatorname{tr}(\Xi\wedge[\Xi,\Xi])\notag\\
 &+2\int_{\partial M}\operatorname{tr}(A\wedge\Xi).
 \label{eq:relative-CS-transgression}
\end{align}
Taking $\Xi=B/(2k)$ yields
\begin{equation}
 \widetilde S_k[A_+,A_-]
 =-\frac1{2\pi}\int_M\operatorname{tr}(B\wedge F_A)
 -\frac1{48\pi k^2}\int_M
 \operatorname{tr}(B\wedge[B,B])
 \label{eq:classical-CS-to-BF}
\end{equation}
on a closed manifold.  With a fixed boundary trivialization the accompanying
boundary term is
\begin{equation}
 \frac1{4\pi}\int_{\partial M}
 \operatorname{tr}(A\wedge B).
 \label{eq:relative-CS-boundary-term}
\end{equation}
The first term in Eq.~\eqref{eq:classical-CS-to-BF} is the $BF[G]$ action
in our normalization.  Its variation with respect to $B$ imposes $F_A=0$;
variation with respect to $A$ imposes $D_AB=0$.  The finite-level equations
make the approach to these constraints equally explicit.  Adding and
subtracting $F_{A_+}=F_{A_-}=0$ gives the equations of motion in the variables $A$ and $B$
\begin{equation}
 F_A+\frac{1}{8k^2}[B,B]=0,\qquad D_AB=0.
 \label{eq:contracted-classical-equations}
\end{equation}
Thus the cubic interaction measures the finite-$k$ deformation of the
flatness constraint.  Freed's relative formulation supplies the global
bundle and boundary interpretation of the action and its transgression
\cite[Secs.~1--3]{Freed:1992vw}.

The infinitesimal gauge transformations contract in the same variables.
To distinguish their parameters from the Weyl sign $\epsilon(w)$, write
\begin{equation}
 \varepsilon_\pm=\varepsilon\pm\frac{\beta}{2k}.
 \label{eq:classical-gauge-recombination}
\end{equation}
Substitution in
$\delta A_\pm=d\varepsilon_\pm+[A_\pm,\varepsilon_\pm]$ yields
\begin{equation}
 \delta A=D_A\varepsilon+\frac1{4k^2}[B,\beta],\qquad
 \delta B=D_A\beta+[B,\varepsilon].
 \label{eq:classical-gauge-contraction}
\end{equation}
The $k\to\infty$ transformations are the ordinary gauge and shift symmetries
of $BF[G]$.
The shift $\delta B=D_A\beta$ preserves the action on a closed manifold:
its variation is proportional to $-\int_Md\operatorname{tr}(\beta F_A)$,
because the Bianchi identity gives $D_AF_A=0$.  This is the extra gauge
redundancy of the momentum field $B$ that survives the contraction.
This is an In\"on\"u--Wigner contraction in the standard sense
\cite{Inonu:1953sp,Saletan:1961iqo}.

The same contraction can be read from the metric Lie algebra.  For
$X\in\mathfrak g$, define $J_X=(X,X)$ and
$P_X=(X,-X)/(2k)$ in $\mathfrak g\oplus\mathfrak g$.  Then
\begin{equation}
 [J_X,J_Y]=J_{[X,Y]},\qquad
 [J_X,P_Y]=P_{[X,Y]},\qquad
 [P_X,P_Y]=\frac1{4k^2}J_{[X,Y]}.
 \label{eq:classical-Lie-contraction}
\end{equation}
For the split level form determined by the two Chern--Simons summands, set
\begin{equation}
 \bigl\langle X_+\oplus X_-,Y_+\oplus Y_-\bigr\rangle_k^{\mathrm{split}}
 :=-k\bigl(
 \operatorname{tr}(X_+Y_+)
 -\operatorname{tr}(X_-Y_-)\bigr).
 \label{eq:split-level-pairing}
\end{equation}
In the rescaled basis,
\begin{equation}
 \langle J_X,J_Y\rangle_k^{\mathrm{split}}
 =\langle P_X,P_Y\rangle_k^{\mathrm{split}}=0,
 \qquad
 \langle J_X,P_Y\rangle_k^{\mathrm{split}}
 =-\operatorname{tr}(XY).
 \label{eq:classical-contracted-pairing}
\end{equation}
Here $J_X$ and $P_X$ are the abstract generators in this $k$-dependent
basis; $P_X=(X,-X)/(2k)$ specifies their realization in
$\mathfrak g\oplus\mathfrak g$ before the contraction.

At $k=\infty$ these formulas give the metric Lie algebra
\[
 \mathfrak g\ltimes_{\operatorname{ad}}\mathfrak g_{\mathrm{ab}}
 \simeq\mathfrak g\ltimes_{\operatorname{ad}^*}\mathfrak g^*,
 \qquad
 \langle J_X,P_Y\rangle_\infty=-\operatorname{tr}(XY),
\]
with vanishing $J$--$J$ and $P$--$P$ pairings.  For the connection
$J_A+P_B$, its Chern--Simons functional is
\[
 -2\int_M\operatorname{tr}(B\wedge F_A)
 +\int_{\partial M}\operatorname{tr}(A\wedge B).
\]
Multiplication by $1/(4\pi)$ gives the leading action in
Eq.~\eqref{eq:classical-CS-to-BF} and the boundary term in
Eq.~\eqref{eq:relative-CS-boundary-term}.  The classical tangent group $G\ltimes\mathfrak g_{\mathrm{ab}}$ and the quantum double enter at different
steps.  The ordinary group $C^*$-algebra of
$G\ltimes\mathfrak g_{\mathrm{ab}}$, after Fourier transformation in the
abelian normal subgroup, has the form
\[
 C_0(\mathfrak g^*)\rtimes_{\operatorname{Ad}^*}G.
\]
Its momentum variable is Lie-algebra valued.  The line algebra used below
is instead $\mathfrak D(G)=C(G)\rtimes_{\operatorname{Ad}}G$, with a
group-valued magnetic holonomy.  The latter arises in the combinatorial
quantization of the Poisson structures for tangent-group Chern--Simons
theory, where $\mathfrak D(G)$ acts as a quantum symmetry
\cite{Meusburger:2003hc}; the $SU(2)$ particle interpretation is discussed
in \cite{Bais:1998yn}.  A contraction of the classical Lie brackets alone
would not identify these two algebras.  Our genus-one comparison uses the
compact double with its group-valued flux.

In the abelian boundary calculation in section~\ref{sec:U1_WZW}, we use lower-case
fields.  The specialization of \eqref{eq:classical-field-recombination} is
\begin{equation}
 a=\frac12(a_++a_-),\qquad b=k(a_+-a_-).
 \label{eq:U1-BF-fields}
\end{equation}
The finite-level periodicities are inherited from the two compact fields.
In normalized holonomy coordinates, integral shifts
$a_\pm\mapsto a_\pm+\lambda_\pm$ give
\begin{equation}
 a\longmapsto a+\frac{\lambda_++\lambda_-}{2},
 \qquad
 b\longmapsto b+k(\lambda_+-\lambda_-).
 \label{eq:U1-recombined-periodicities}
\end{equation}
In particular, transformations which keep the diagonal coordinate fixed
have $\lambda_-=-\lambda_+$ and shift $b$ by $2k\lambda_+$.
Hence the relative period grows linearly with $k$; this is the periodicity
used in the boundary wavefunction calculation.

Equations \eqref{eq:classical-field-recombination}--
\eqref{eq:classical-gauge-contraction} concern families for which $A$ and
$B$ remain fixed while
\begin{equation}
 A_+-A_-=\frac{B}{k}=O(k^{-1}).
 \label{eq:near-diagonal-contraction-rate}
\end{equation}
For such smooth families the cubic term in
\eqref{eq:classical-CS-to-BF} tends to zero.  This is a near-diagonal
classical contraction on a common bundle.  The classical change of variables
uses the integer level $k$, whereas the quantum sampling below naturally
involves the shifted level $r_k=k+h^\vee$; boundary comparisons use the
polarizations fixed in the boundary analysis.

\subsection{Regular compact-double labels}

A line insertion can carry magnetic flux and electric charge at the same
time.  Gauge transformations conjugate its holonomy $g$, so the magnetic
datum is $[g]$.  Once a representative $g$ is fixed, the transformations
that preserve it form $C_G(g)$.  The electric datum is an irreducible
representation of this stabilizer.  For $g=g_x$ with $x\in Y$, the
stabilizer is $T$ and all its irreducible representations are the
characters $\chi_\eta$, $\eta\in P$.  This is the origin of a continuous
flux variable paired with a lattice-valued charge
\cite{Koornwinder:1996uq,Bais:1998yn}.

The corresponding operator algebra contains functions of the holonomy and
unitary gauge transformations.  The multiplication law expressing how a
gauge transformation conjugates holonomy is encoded by a crossed product:
\begin{equation}
  D(G)=C(G)\rtimes_{\Ad}G,
 \qquad
  D_{\mathrm{reg}}(G)
 =C_0(G_{\mathrm{reg}})\rtimes_{\Ad}G.
 \label{eq:full-and-regular-compact-double}
\end{equation}
Here $G_{\mathrm{reg}}$ is the open set of regular elements and $\rtimes$
denotes the full crossed product.  Compactness of $G$ implies agreement
with the reduced crossed product \cite{Williams:2007}.
If $M(F)$ is a holonomy observable and $U_h$ implements a gauge
transformation, the defining covariance relation reads
\begin{equation}
 U_hM(F)U_h^{-1}=M(F\circ\operatorname{Ad}_{h^{-1}}).
 \label{eq:compact-double-covariance}
\end{equation}
This relation is sufficient for the representation model needed here;
no tensor-product decomposition is required for the genus-one limit.

The regular algebra is an ideal because $G_{\mathrm{reg}}$ is invariant
under conjugation.  Its nondegenerate representations extend uniquely to
$\mathfrak D(G)$.  The full algebra also includes singular-flux
representations, with charges in $\operatorname{Irr}C_G(g)$.
For identity flux these are precisely $\operatorname{Irr}G$, including
the tensor unit.  Consequently, the regular label set is a stratum of the
full line spectrum and does not by itself contain a tensor unit.

For $x\in Y$ and $\eta\in P$, Mackey--Green induction associates to the
orbit of $g_x$ and the stabilizer character $\chi_\eta$ the covariant
representation
\begin{equation}
 \pi_{x,\eta}
 =\operatorname{Ind}_{T}^{G}
   (\operatorname{ev}_{g_x},\chi_\eta)
 \quad\text{of}\quad  D_{\mathrm{reg}}(G).
 \label{eq:regular-Mackey-representation}
\end{equation}
Here $\operatorname{ev}_{g_x}$ is evaluation at $g_x$, and
$\mathbb C_\eta$ is the one-dimensional $T$-module with character
$\chi_\eta$.  To fix the induction convention, one may realize the carrier
as
\begin{equation}
 \mathcal H_{x,\eta}
 =\left\{
 \begin{aligned}
 f:G\to\mathbb C:\quad
 &f\text{ is measurable, modulo a.e. equality},\\
 &f(ut)=\chi_\eta(t)^{-1}f(u)
     \quad\text{for every }t\in T\text{ and for a.e. }u\in G,\\
 &\int_{G/T}|f(u)|^2\,d(uT)<\infty
 \end{aligned}
 \right\}.
 \label{eq:explicit-induced-carrier}
\end{equation}
Here $d(uT)$ is the normalized $G$-invariant probability measure on $G/T$.
For $F\in C_0(G_{\mathrm{reg}})$ and $h\in G$, the covariant pair acts by
\begin{equation}
 (M_x(F)f)(u)=F(ug_xu^{-1})f(u),
 \qquad (U_hf)(u)=f(h^{-1}u).
 \label{eq:explicit-compact-double-action}
\end{equation}
The inverse character in \eqref{eq:explicit-induced-carrier} fixes our
induction convention.  It also has a direct geometric meaning: changing
the choice of gauge frame $u$ by $ut$ must be compensated by the electric
character of $t$.  The function $f$ is therefore an $L^2$ section of the
associated line bundle over $G/T$, the conjugacy orbit of $g_x$.
The norm is well defined because $|\chi_\eta(t)|=1$.

Both actions preserve the equivariance condition.  For the first, $t$
commutes with $g_x$, so
$F(utg_xt^{-1}u^{-1})=F(ug_xu^{-1})$; for the second, left translation
commutes with the right $T$-action.  Invariance of the measure on $G/T$
makes $U_h$ unitary, and direct substitution verifies
Eq.~\eqref{eq:compact-double-covariance}.  The imprimitivity theorem
identifies these induced representations as all the irreducible
representations on the regular orbit stratum
\cite{Rieffel:1974,Koornwinder:1996uq,
Koornwinder:1998xg}.  Taking $F\in C(G)$ in the same formula
gives their extension to the full compact double.

Before an alcove representative is chosen, the labels obey
\begin{equation}
 (x,\eta)\sim(wx+q^\vee,w\eta),
 \qquad w\in W,\quad q^\vee\in Q^\vee.
 \label{eq:regular-label-affine-Weyl-identification}
\end{equation}
Indeed, the translation does not change $g_x$, while a representative of
$w$ conjugates $g_x$ and transports the stabilizer character from $\eta$ to
$w\eta$.  Every regular conjugacy class meets $Y$ exactly once.  Moreover,
an affine Weyl element fixing a point of $Y$ must be the identity, since an
interior point lies on no affine reflection hyperplane.  Hence choosing
$x\in Y$ removes the ambiguity in both entries, and the regular irreducible
classes are parametrized without repetition by
\begin{equation}
 (x,\eta)\in Y\times P.
 \label{eq:regular-BF-labels}
\end{equation}

The common comparison space used at genus one is
\begin{equation}
 \mathcal H_0=L^2(Y\times P,dx\otimes\#),\qquad
 \|f\|_{\mathcal H_0}^2
 =\sum_{\eta\in P}\int_Y|f(x,\eta)|^2\,dx,
 \label{eq:preliminary-comparison-Hilbert-space}
\end{equation}
where $P$ has the discrete topology and $\#$ is counting measure.
Proposition~\ref{prop:alcove-null-boundary} permits replacing $Y$ by
$\overline Y$ in this norm.  A vector $f(x,\eta)$ describes amplitudes for
a wave packet in the line labels.  The induced carrier
$\mathcal H_{x,\eta}$ describes the gauge orbit of one fixed line; these
are different uses of a Hilbert space.  Section~\ref{sec:Modular_Limit}
identifies $\mathcal H_0$ with
$L^2(T\times T,dm_T\otimes dm_T)^W$, the space of $L^2$-wavefunction on the moduli space of flat $G$-connections on a spatial torus, where $W$ acts diagonally
and $dm_T$ is normalized Haar measure on $T$.
This identification uses Fourier expansion in the second holonomy.

Smooth wave packets supported in a compact subset of $Y$ and in finitely
many charge fibers are dense in $\mathcal H_0$.  Indeed, the convergent
sum in Eq.~\eqref{eq:preliminary-comparison-Hilbert-space} can first be
truncated to finitely many charges.  For each retained fiber, removing
a shrinking neighborhood of $\partial Y$ changes the $L^2$ norm by a
quantity tending to zero, by absolute continuity of the integral.
Mollification inside the remaining open set then gives a smooth compactly
supported approximation.  These elementary operations explain why the
modular proof can begin with regular wave packets and later extend to
every vector of $\mathcal H_0$.  A separate estimate is still needed to
show that the finite-level operators do not send their norm into smaller
and smaller wall neighborhoods or unbounded charges.

\subsection{Finite labels and the regular scaling window}

The continuum limit requires a choice of which finite-level states are
held at fixed physical flux and charge.  Holding the highest weights
themselves fixed would drive the flux to the identity as $k$ increases.
To probe a regular holonomy, both weights must instead grow linearly with
the shifted level while their difference stays finite.

Let $\mathcal C_k=\mathcal C_k(G)$ be the level-$k$ WZW fusion category.  Its
simple objects are indexed by
\begin{equation}
 P_k^+=\{a\in P_+:\langle a,\theta^\vee\rangle\leq k\}.
 \label{eq:integrable-weight-alcove}
\end{equation}
The shifted chiral coordinate of $a\in P_k^+$ is
\begin{equation}
 z_a=\frac{a+\rho}{r_k}\in Y.
 \label{eq:shifted-chiral-coordinate}
\end{equation}
To check that $z_a$ lies in the open alcove, observe that
for every simple coroot,
\begin{equation}
 \langle z_a,\alpha_i^\vee\rangle
 =\frac{\langle a,\alpha_i^\vee\rangle+1}{r_k}>0,
 \label{eq:shifted-coordinate-lower-walls}
\end{equation}
whereas at the affine wall
\begin{equation}
 \langle z_a,\theta^\vee\rangle
 =\frac{\langle a,\theta^\vee\rangle+h^\vee-1}{r_k}
 \leq\frac{k+h^\vee-1}{k+h^\vee}<1.
 \label{eq:shifted-coordinate-affine-wall}
\end{equation}
The parametrization of integrable affine highest weights by
\eqref{eq:integrable-weight-alcove}, and the shifted level $k+h^\vee$, are
standard; see \cite[Chs.~12--13]{Kac:1990gs}.

We use a charge-oriented convention for doubled simples.  If
$\bar b=-w_0b$ is the highest weight of the dual representation, , cf.\
eq.~\eqref{eq:dual-highest-weight-convention}, the ordered
pair $(a,b)$ denotes the actual center object
\begin{equation}
 Z^{\mathrm{ch}}_{a,b}=a\boxtimes\bar b\in
 Z(\mathcal C_k)\simeq
 \mathcal C_k\boxtimes\mathcal C_k^{\mathrm{rev}},
 \qquad a,b\in P_k^+.
 \label{eq:charge-oriented-center-label}
\end{equation}
For the center equivalence and the reverse-braiding convention, see
\cite{Mueger:2001crc,Bakalov:2001}.
Its flux and charge are respectively the average and the unscaled difference
of the two shifted chiral labels:
\begin{equation}
 x_{a,b}=\frac{z_a+z_b}{2}
 =\frac{a+b+2\rho}{2r_k},
 \qquad
 \eta_{a,b}=r_k(z_a-z_b)=a-b.
 \label{eq:classical-regular-lattice}
\end{equation}
The inverse formulas are
\begin{equation}
 a=r_kx_{a,b}-\rho+\frac{\eta_{a,b}}2,
 \qquad
 b=r_kx_{a,b}-\rho-\frac{\eta_{a,b}}2.
 \label{eq:inverse-flux-charge-map}
\end{equation}
They show that the map $(a,b)\mapsto(x_{a,b},\eta_{a,b})$ is
injective.  Conversely, a pair $(x,\eta)\in\mathfrak{t}\times P$ 
corresponds to a doubled simple precisely when
\begin{equation}
 r_kx-\rho+\frac\eta2\in P_k^+,
 \qquad
 r_kx-\rho-\frac\eta2\in P_k^+.
 \label{eq:flux-charge-integrability-test}
\end{equation}
The lattice condition applies to the two combinations in
Eq.~\eqref{eq:flux-charge-integrability-test}; it allows $\eta/2$ itself to
lie outside $P$.
Equivalently,
\begin{equation}
 \frac{a+\rho}{r_k}-\frac{b+\rho}{r_k}
 =\frac{\eta_{a,b}}{r_k}.
 \label{eq:chiral-flux-separation}
\end{equation}
Since $r_k=k+h^\vee$, this agrees to leading order with the antidiagonal
separation in \eqref{eq:classical-field-recombination}.  The identities
\begin{equation}
 S^{(k)}_{\bar b,\bar d}=S^{(k)}_{bd},\qquad
 \theta^{(k)}_{\bar b}=\theta^{(k)}_b
 \label{eq:dual-label-modular-identities}
\end{equation}
show that the $S$-matrix entry between $a\boxtimes\bar b$ and $c\boxtimes\bar d$ in
$\mathcal C_k\boxtimes\mathcal C_k^{\rm rev}$ and the categorical twist of $a\boxtimes\bar b$ are
$S^{(k)}_{ac}\overline{S^{(k)}_{bd}}$ and
$\theta^{(k)}_a(\theta^{(k)}_b)^{-1}$, respectively.  Thus the convention in
Eq.~\eqref{eq:charge-oriented-center-label} agrees with the modular
conventions of Section~\ref{sec:Modular_Limit}, while the endpoint charge is
the simple difference $a-b$.
Here $\theta_a^{(k)}$ is the categorical twist scalar; it is unrelated to the
highest root $\theta$ used in the alcove inequalities.
The duality identities in \eqref{eq:dual-label-modular-identities} are the
standard unitary ribbon-category identities; see
\cite[Chs.~2--3]{Bakalov:2001}.

The finite set sampled at level $k$ is
\begin{equation}
 \Lambda_k
 :=\left\{\left(\frac{a+b+2\rho}{2r_k},a-b\right):
             a,b\in P_k^+\right\}\subset Y\times P.
 \label{eq:regular-lattice}
\end{equation}
For a fixed charge $\eta\in P$, its exact slice is
\begin{equation}
 \Lambda_{k,\eta}
 =\left\{\left(\frac{\mu+\rho+\eta/2}{r_k},\eta\right):
        \mu\in P_k^+,\ \mu+\eta\in P_k^+\right\}.
 \label{eq:fixed-charge-admitted-mesh}
\end{equation}
The ambient flux affine lattice in this formula is
\begin{equation}
 r_k^{-1}(P+\rho+\eta/2).
 \label{eq:fixed-charge-full-mesh-preview}
\end{equation}
The integrability conditions in
\eqref{eq:fixed-charge-admitted-mesh} are essential near an alcove wall.  On
a compact set $K\Subset Y$, however, they eventually impose no further
deletion.  To see this, set
\begin{equation}
 \delta_K:=\min_{x\in K}
 \min\bigl\{
   \langle x,\alpha_i^\vee\rangle\ (1\leq i\leq\ell),
   1-\langle x,\theta^\vee\rangle
 \bigr\}>0.
 \label{eq:compact-alcove-wall-distance}
\end{equation}
If $x=(\mu+\rho+\eta/2)/r_k$ belongs to the ambient mesh, the two recovered
weights are $b=\mu$ and $a=\mu+\eta$.  For each simple coroot,
\begin{align}
 \langle\mu,\alpha_i^\vee\rangle
 &=r_k\langle x,\alpha_i^\vee\rangle-1
   -\frac12\langle\eta,\alpha_i^\vee\rangle,\notag\\
 \langle\mu+\eta,\alpha_i^\vee\rangle
 &=r_k\langle x,\alpha_i^\vee\rangle-1
   +\frac12\langle\eta,\alpha_i^\vee\rangle.
 \label{eq:fixed-charge-lower-wall-bounds}
\end{align}
Both are bounded below by
$r_k\delta_K-1-\frac12|\langle\eta,\alpha_i^\vee\rangle|$.  At the affine
wall one similarly has
\begin{align}
 k-\langle\mu,\theta^\vee\rangle
 &=r_k\bigl(1-\langle x,\theta^\vee\rangle\bigr)-1
   +\frac12\langle\eta,\theta^\vee\rangle,\notag\\
 k-\langle\mu+\eta,\theta^\vee\rangle
 &=r_k\bigl(1-\langle x,\theta^\vee\rangle\bigr)-1
   -\frac12\langle\eta,\theta^\vee\rangle.
 \label{eq:fixed-charge-affine-wall-bounds}
\end{align}
These quantities are positive once $k$ is large enough, uniformly for
$x\in K$.  The reverse inclusion is immediate from the definition of
$\Lambda_{k,\eta}$, and therefore
\begin{equation}
 \{x:(x,\eta)\in\Lambda_{k,\eta}\}\cap K
 =r_k^{-1}(P+\rho+\eta/2)\cap K
 \label{eq:fixed-charge-local-full-mesh}
\end{equation}
for all sufficiently large $k$.  Thus compact regular flux windows are
eventually sampled by the full translated weight lattice, while the charge
coordinate remains discrete.

The cell volume of this fixed-charge mesh is
\begin{equation}
 \operatorname{covol}(r_k^{-1}P)
 =r_k^{-\ell}\operatorname{covol}(P).
 \label{eq:fixed-charge-mesh-volume}
\end{equation}
This determines the square-root normalization of sampled wavefunctions
in Section~\ref{sec:Modular_Limit}: summing their squared coefficients
must approximate the continuum integral over flux.  In particular, the
factor $1/2$ in $x_{a,b}$ does not halve the mesh spacing at fixed charge.
Increasing both weights by $p\in P$ changes $x$ by $p/r_k$.

For example, take $G=SU(2)$, write its fundamental weight as
$\varpi=\alpha/2$, and normalize $(\alpha,\alpha)=2$.
Then $P=\mathbb Z\varpi$, $Q^\vee=2\mathbb Z\varpi$,
$\rho=\varpi$, and $h^\vee=2$.  A point of $Y$ is $x=s\varpi$ with
$0<s<1$.  The finite labels $a=n\varpi$, $b=m\varpi$ have
$0\leq n,m\leq k$, and
\begin{equation}
 s=\frac{n+m+2}{2(k+2)},\qquad \eta=(n-m)\varpi.
 \label{eq:SU2-flux-charge-example}
\end{equation}
At charge $\eta=q\varpi$, this reduces to
$s=(m+1+q/2)/(k+2)$, with $m,m+q\in\{0,\ldots,k\}$.
Thus the spacing in $s$ is $1/(k+2)$ and its offset depends on $q$.
Since $|\varpi|=1/\sqrt2$, the flat measure and cell volume are
$dx=ds/\sqrt2$ and $1/[\sqrt2(k+2)]$, respectively.  This elementary
example exhibits both the continuous flux and the integer charge that
survive the limit.

We call a sequence of labels $(a_k,b_k)$ a regular large-$k$ sequence if
\begin{equation}
 x_{a_k,b_k}\longrightarrow x\in Y,
 \qquad \eta_{a_k,b_k}=\eta
 \quad\text{for all sufficiently large }k,
 \label{eq:regular-large-level-sequence}
\end{equation}
for some $(x,\eta)\in Y\times P$.
Since $P$ is discrete, the second condition is equivalent to
convergence of the charge to $\eta$.

To formulate convergence of the corresponding operators, we work
on a common Hilbert space.
A sequence of bounded operators $A_k$ converges strongly to $A$
if $\|(A_k-A)\psi\|\to0$ for every fixed vector $\psi$.
This is convergence in the strong operator topology (SOT).
Strong-$*$ convergence requires, in addition, that
$A_k^*\to A^*$ strongly.

In the genus-one convergence theorem,
Theorem~\ref{thm:strong-modular}, we apply these notions
to the modular operators after embedding the finite-level state
spaces into $\mathcal H_0$.
For each fixed normalizable wave packet, the finite-level modular
action then converges in Hilbert-space norm to the limiting action.

% section_03.tex
\section{Strong regular genus-one convergence}
\label{sec:Modular_Limit}

On a spatial torus, $\mathcal S$ exchanges the two cycles and $\mathcal T$
performs a Dehn twist.  In Chern--Simons quantization their matrix elements
are respectively related to Hopf-link amplitudes and line spins
\cite{Witten:1988hf,Elitzur:1989nr}.  These operations provide a quantum test
of the contraction in Section~\ref{sec:Limit_Regular_Simples}: a state with a
smooth distribution of regular magnetic flux and finitely many electric
charges should have the same modular transition amplitudes in the
large-level doubled theory and in $BF$ theory.

A comparison of individual matrix entries does not answer this question.
The number of finite labels grows with $k$, whereas the limiting flux is
continuous.  A normalizable limiting state must therefore be compared with
a superposition of finite-level states, with amplitudes scaled by the square
root of the flux-cell volume.  We construct this comparison explicitly.
Theorem~\ref{thm:strong-modular} states convergence in Hilbert-space norm under any fixed
modular transformation, for every fixed normalizable state in the regular
$BF$ Hilbert space.  Its adjoint statement treats the inverse transformation
at the same time.

The proof of Theorem~\ref{thm:strong-modular} proceeds from local
convergence to a global operator limit.  The endpoint unitarity and
modular relations used below are established independently in
Appendix~\ref{app:endpoint-Fourier-unitarity}.
In Subsection~\ref{subsec:kp-poisson}, we separate the doubled
Kac--Peterson formula into equal-Weyl terms, which reproduce the $BF$
Fourier kernel through Riemann sums, and unequal-Weyl terms, which vanish
after smearing by Poisson summation.  This gives local weighted
convergence on compact regular flux windows with finitely many charges
(Proposition~\ref{prop:local-S-convergence}).
In Subsection~\ref{subsec:no-escape}, global sampling estimates,
together with unitarity of both the finite and limiting transforms,
rule out a loss of probability to arbitrarily large charges or
concentration near the alcove walls.  This upgrades the local result
to global norm convergence for sampled smooth test states
(Theorem~\ref{thm:no-escape}).
In Subsection~\ref{subsec:common-hilbert-modular}, the cell approximation
of Lemma~\ref{lem:global-cell-approximation} transfers this convergence
to the common Hilbert space $\mathcal H_0$; density and uniform
boundedness then extend it to every fixed state in $\mathcal H_0$.
Together with the exact sampled twist formula, this proves strong
convergence of the modular generators.  Unitarity gives convergence
of their adjoints, and finite products yield strong-$*$ convergence
for every fixed element of $SL(2,\mathbb Z)$.
Finally, Subsection~\ref{subsec:embedding-independence} shows that the
same limit holds for every admissible family of sampling embeddings
(Lemma~\ref{lem:admissible-embedding-robustness}), completing the
embedding-independence assertion of the theorem.

Recall from Eq.~\eqref{eq:weight-coroot-duality} that the geometric inner
product of Eq.~\eqref{eq:geometric-Cartan-form}
identifies the weight lattice with the Euclidean dual lattice,
$P=(Q^\vee)^*$.  This geometric inner product determines Lebesgue measure on
$\mathfrak t$ and hence all covolumes below.  In the
long-root-length-two convention the coroot lattice is integral, so
$Q^\vee\subset P$.  Consequently,
\begin{equation}
 \operatorname{covol}(Q^\vee)
 =|P/Q^\vee|\operatorname{covol}(P).
 \label{eq:dual-lattice-covolume-identity}
\end{equation}
Writing $\ell=\operatorname{rank}G$, set
\begin{equation}
 v_k=\operatorname{covol}(P)r_k^{-\ell},\qquad
 \gamma_G=\operatorname{covol}(Q^\vee)^{-1}.
 \label{eq:lattice-cell-normalization}
\end{equation}
Thus $v_k$ is the volume of a fundamental cell of $r_k^{-1}P$.  Let
$F_{Q^\vee}\subset\mathfrak t$ be a measurable fundamental domain for
$Q^\vee$.  Normalized Haar measure on
$T=\mathfrak t/Q^\vee$ is characterized by
\begin{equation}
 \int_T H(t)\,dm_T(t)
 =\gamma_G\int_{F_{Q^\vee}}H([y])\,dy
 \label{eq:normalized-torus-Haar}
\end{equation}
for every integrable $H$.  We occasionally abbreviate this relation by
$dm_T=\gamma_G\,dy$ when working in a torus coordinate chart.

The finite and limiting Hilbert spaces are
\begin{equation}
 \mathcal H_k=\ell^2(\Lambda_k),\qquad
 \mathcal H_0=L^2(Y\times P,dx\otimes\#)
 \cong\bigoplus_{\eta\in P}L^2(Y,dx),
 \label{eq:common-Hilbert-spaces}
\end{equation}
where $\#$ is counting measure on the discrete lattice $P$.  In particular,
\[
 \|g\|_{\mathcal H_0}^2
 =\sum_{\eta\in P}\int_Y|g(x,\eta)|^2\,dx.
\]
The space $\mathcal H_k$ carries the canonical orthonormal basis
$\{\delta_{(x,\eta)}:(x,\eta)\in\Lambda_k\}$.  If
$(x,\eta)\leftrightarrow(a,b)$ and
$(y,\omega)\leftrightarrow(c,d)$ under Eq.~\eqref{eq:regular-lattice}, the
finite doubled modular $S$-operator has, in this basis, the matrix
\begin{equation}
 (\mathcal S_k)_{(x,\eta),(y,\omega)}
 :=S^{(k)}_{ac}\,\overline{S^{(k)}_{bd}}
 =:S^{(k,-k)}_{(x,\eta),(y,\omega)}.
 \label{eq:finite-doubled-S-operator}
\end{equation}
Here the complex conjugate in the second factor reflects the reversed second
chiral theory.

For a pointwise-defined function $g:Y\times P\to\mathbb C$, define the
restriction and weighted sampling maps
\begin{equation}
 Q_kg=g|_{\Lambda_k},\qquad J_kg=\sqrt{v_k}\,Q_kg.
 \label{eq:weighted-sampling-map}
\end{equation}
Both $Q_k g$ and $J_k g$ are vectors in $\mathcal H_k$. We use these maps on smooth test functions; they are not defined on
arbitrary $L^2$ equivalence classes.  The factor $\sqrt{v_k}$ ensures that
$\|J_kg\|_{\mathcal H_k}^2=v_k\sum_{z\in\Lambda_k}|g(z)|^2$ is the Riemann sum
corresponding to the $L^2$ norm displayed above.

Initially we sample the test-function space
\[
 C_c^\infty(Y\times P)
 :=C_c^\infty(Y)\otimes_{\mathrm{alg}}c_{00}(P).
\]
Here $c_{00}(P)$ denotes the finitely supported complex functions on $P$.
Thus a test function has finitely many nonzero charge fibers, each in
$C_c^\infty(Y)$.

Throughout this section, $|\mathord\cdot|$ is the norm induced by the
geometric inner product of Eq.~\eqref{eq:geometric-Cartan-form}, also used
on $\mathfrak t^*$ through the duality identification.  We write
$\mathcal A(Y\times P)$ for the charge-Schwartz class: every fiber
$g(\mathord\cdot,\eta)$ extends smoothly to a neighborhood of $\overline Y$,
and, for every multi-index $\alpha$ and every $N\geq0$,
\begin{equation}
 \sup_{\eta\in P}(1+|\eta|)^N
 \sup_{x\in\overline Y}|\partial_x^\alpha g(x,\eta)|<\infty\,,
 \label{eq:charge-Schwartz}
\end{equation}
thus its flux derivatives decay faster than any inverse power of the charge, uniformly over the closed alcove. The charge-Schwartz class is unchanged if this norm is replaced by any
equivalent norm.  Here $C^\infty(\overline Y)$ means restriction from a
neighborhood of $\overline Y$.  The class contains
$C^\infty(\overline Y)\otimes_{\mathrm{alg}}c_{00}(P)$ and is dense in
$\mathcal H_0$.  The same sampling maps are used on
$\mathcal A(Y\times P)$; Lemma~\ref{lem:global-sampling} provides uniform
control of their norms.

To compare finite-level states with wave packets in $\mathcal H_0$,
we replace each discrete basis state by a normalized, piecewise constant
wavefunction. Fix a bounded half-open fundamental domain $F_P$
of $P$ with $0\in\operatorname{int}F_P$.\footnote{
Here \emph{fundamental} means that every $y\in\mathfrak t$ admits
a unique decomposition $y=u+p$, with $u\in F_P$ and $p\in P$.
Equivalently, $\mathfrak t=\bigsqcup_{p\in P}(F_P+p)$, where
$\bigsqcup$ denotes a disjoint union. Thus $F_P$ contains exactly
one representative of each equivalence class modulo $P$.
The half-open convention avoids duplicate representatives on
cell boundaries.}
For example, given a lattice basis $p_1,\ldots,p_\ell$ of $P$,
we may take
\[
 F_P=\left\{\sum_{j=1}^{\ell}t_jp_j:
                 -\tfrac12\leq t_j<\tfrac12\right\}.
\]
For a sampling point $z=(x,\eta)\in\Lambda_k$, rescale this reference
cell by the factor $r_k^{-1}$, translate it to $x$, and retain its intersection
with the alcove:
\begin{equation}
 \begin{gathered}
 C_{k,z}=Y\cap(x+r_k^{-1}F_P),\qquad
 I_k\delta_z=|C_{k,z}|^{-1/2}1_{C_{k,z}}\delta_\eta\,,
 \end{gathered}
 \label{eq:lattice-cell-isometry}
\end{equation}
Here $|C_{k,z}|$ denotes the $dx$-volume of the cell, $1_{C_{k,z}}$
is its indicator function, and $\delta_\eta$ selects the charge fiber
$\eta$. Thus $I_k\delta_z$ is constant on that cell, vanishes
elsewhere, and has unit norm. Cells crossing an alcove wall are
truncated by the intersection with $Y$.

Since $x\in Y$ and $0\in\operatorname{int}F_P$, every $C_{k,z}$
has positive volume. At fixed charge, the sampling points lie in
a translate of $r_k^{-1}P$, so their cells are disjoint up to
null sets. Different charge fibers are orthogonal. The images of
the basis vectors are therefore orthonormal, and
$I_k:\mathcal H_k\to\mathcal H_0$ is an isometry.
Its Hilbert-space adjoint is given by
\begin{equation}
 (I_k^*f)(z)=|C_{k,z}|^{-1/2}
             \int_{C_{k,z}}f(y,\eta)\,dy,
 \qquad z=(x,\eta).
 \label{eq:cell-embedding-adjoint}
\end{equation}
We introduce $\mathsf P_k=I_kI_k^*$ as the orthogonal projection operator onto
$I_k\mathcal H_k$: it replaces $f$ by its average on each cell,
\[
 (\mathsf P_k f)(y,\eta)
 =\frac{1}{|C_{k,z}|}\int_{C_{k,z}}f(u,\eta)\,du,
 \qquad y\in C_{k,z},
\]
and is zero outside the union of the cells in that charge fiber.
These formulas are understood almost everywhere.

At finite $k$, the cells need not cover all of $Y$.
Lemma~\ref{lem:global-cell-approximation} controls the uncovered
region and the truncated cells, and proves
$\|\mathsf P_kf-f\|_{\mathcal H_0}\to0$ for every fixed
$f\in\mathcal H_0$. Thus the embedded finite-level states
approximate arbitrary normalizable wave packets in norm.

Before defining the endpoint operators, we realize $\mathcal H_0$
as a Hilbert space of square-integrable wavefunctions on the moduli
space of flat $G$-connections on the torus,
\begin{equation}
    \mathcal M_{\mathrm{flat}}(T^2)\simeq (T\times T)/W.
\end{equation}
We work on the subset where the first holonomy is regular. Let
\[
    q:T_{\mathrm{reg}}\times T
    \longrightarrow (T_{\mathrm{reg}}\times T)/W
\]
be the quotient map. The diagonal Weyl action is free because
the first component is regular. We equip the quotient with
\begin{equation}
    m_{\mathrm{quot}}
    :=q_*\bigl((dm_T|_{T_{\mathrm{reg}}})\otimes dm_T\bigr),
    \label{eq:regular-BF-quotient-Haar}
\end{equation}
the pushforward of normalized product Haar measure.

Here the pushforward assigns to each measurable set in the quotient
the measure of its full preimage. Indeed, the full preimage of a measurable set in the quotient
consists of $|W|$ Weyl-related lifts, each with the same product
Haar measure. Since $dm_T=\gamma_G\,dx$ in each alcove chart,
summing these contributions gives
$|W|\gamma_G\,dx\otimes dm_T$. Thus, under the identification
$(T_{\mathrm{reg}}\times T)/W\simeq Y\times T$, the measure
$m_{\mathrm{quot}}$ is represented by
$|W|\gamma_G\,dx\otimes dm_T$.
This differs by a factor of $|W|$ from simply restricting product
Haar measure to a single alcove, which gives
$\gamma_G\,dx\otimes dm_T$.

For later use, define the diagonal Weyl extension
\[
 \operatorname{Ext}:L^2(Y\times P)\longrightarrow L^2(T\times P)^W,
 \qquad
 (\operatorname{Ext}f)(wx,w\eta)=f(x,\eta).
\]
Here $x\in Y$, $\eta\in P$, and $w\in W$, and $wx$ is understood as a point
of $T=\mathfrak t/Q^\vee$.  Every regular
Weyl orbit meets $Y$ once, so the formula is unambiguous on
$T_{\rm reg}\times P$.  The union of the root hypertori,
\[
    T\setminus T_{\mathrm{reg}}
    =\bigcup_{\alpha\in\Delta_+}
      \{t\in T:\chi_\alpha(t)=1\},
\]
is Haar-null. We set $\mathrm{Ext}\,f=0$ on
$(T\setminus T_{\mathrm{reg}})\times P$; this choice does not
affect its $L^2$ equivalence class. All identities involving
$\mathrm{Ext}$ are understood almost everywhere.

The two descriptions of a torus state have different physical uses.  In a
holonomy representation one specifies the commuting $A$-holonomies around
both cycles.  Fourier transformation of the second holonomy replaces it by
an electric charge of the centralizer of the first.  For regular first
holonomy that centralizer is $T$, so its charges are precisely $P$.
The proposition identifies these representations and fixes their relative
normalization.  The quotient measure used here comes from Haar measure on
$T\times T$; it is not the pushforward of Haar measure on $G$ with its Weyl
Jacobian.  No measure on an ambient space of noncommuting pairs is being
restricted to its measure-zero commuting locus.
\begin{proposition}[Regular quotient--Haar model]
\label{prop:regular-BF-Hilbert-model}
There are natural unitary identifications
\begin{equation}
 L^2\!\left((T_{\rm reg}\times T)/W,m_{\rm quot}\right)
 \simeq L^2(T\times T)^W
 \simeq L^2(T\times P)^W
 \simeq L^2(Y\times P,dx\otimes\#)=\mathcal H_0.
 \label{eq:regular-BF-Hilbert-chain-early}
\end{equation}
The left-hand term is the quotient--Haar $L^2$-space on commuting holonomy pairs whose first holonomy is regular. Its measure is the pushforward
$m_{\rm quot}$ defined in Eq.~\eqref{eq:regular-BF-quotient-Haar}.
\end{proposition}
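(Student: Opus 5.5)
The plan is to build the chain in \eqref{eq:regular-BF-Hilbert-chain-early} one arrow at a time. Each arrow will be an explicit isometry between $L^2$ equivalence classes, and I will then show it is surjective. Because the unitaries are only "natural" up to a fixed positive constant, I will track the constants explicitly. The last arrow will carry the factor $\sqrt{|W|\gamma_G}$.

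\emph{First arrow.} Use pullback $F\mapsto F\circ q$. By the definition of $m_{\rm quot}$ as a pushforward, $\int|F\circ q|^2\,d(m_T\otimes m_T)=\int|F|^2\,dm_{\rm quot}$, so pullback is isometric. Its image consists of $W$-invariant classes on $T_{\rm reg}\times T$. Conversely, a $W$-invariant measurable function on $T_{\rm reg}\times T$ is constant on the free $W$-orbits, so it descends to the quotient. For measurability of the descended function I will use the Borel section $Y\times T\to T_{\rm reg}\times T$, i.e.\ the identification $(T_{\rm reg}\times T)/W\simeq Y\times T$ already noted before the proposition. Finally, $T\setminus T_{\rm reg}$ is a finite union of root hypertori and hence Haar-null, by the same hyperplane argument as in Proposition~\ref{prop:alcove-null-boundary}. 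Therefore $L^2(T_{\rm reg}\times T)^W=L^2(T\times T)^W$.

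\emph{Second arrow.} Take the Fourier series in the second holonomy, $\hat f(t,\eta)=\int_T f(t,s)\overline{\chi_\eta(s)}\,dm_T(s)$. The characters $\{\chi_\eta\}_{\eta\in P}$ form an orthonormal basis of $L^2(T,dm_T)$ for normalized Haar measure. Plancherel, applied fiberwise together with Fubini, then gives a unitary $L^2(T\times T)\to L^2(T\times P)$. To see that it respects the Weyl action, I will check the covariance $\chi_\eta(w^{-1}s)=\chi_{w\eta}(s)$. This holds because the inner product is $W$-invariant and $W$ preserves $Q^\vee$. It gives $\widehat{w\cdot f}=w\cdot\hat f$ for the diagonal action $(w\cdot g)(t,\eta)=g(w^{-1}t,w^{-1}\eta)$, so the unitary restricts to a unitary between the $W$-invariant subspaces.

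\emph{Third arrow.} Send $g\mapsto (|W|\gamma_G)^{1/2}\,g|_{Y\times P}$, with inverse $(|W|\gamma_G)^{-1/2}\operatorname{Ext}$. The key step is the norm identity
\[
\int_T\sum_{\eta}|g(t,\eta)|^2\,dm_T=|W|\gamma_G\int_Y\sum_\eta|g(x,\eta)|^2\,dx
\qquad\text{for $W$-invariant $g$.}
\]
To prove it, I will use \eqref{eq:normalized-torus-Haar} with a fundamental domain $F_{Q^\vee}$ chosen, up to a null set, as the union of the $|W|$ alcoves $wY$ translated by suitable coroots. This choice is possible because $W_{\rm aff}=W\ltimes Q^\vee$ acts simply transitively on alcoves and $W_{\rm aff}/Q^\vee\cong W$. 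Each alcove contributes equally: the substitution $x\mapsto wx+q^\vee$ preserves $dx$, and $W$-invariance gives $g(wx,w\eta)=g(x,\eta)$, so after reindexing the charge sum each alcove term equals the $Y$ term. Surjectivity then follows from the fact that $\operatorname{Ext}f$ is $W$-invariant and satisfies $(\operatorname{Ext}f)|_{Y\times P}=f$ almost everywhere.

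I expect the main obstacle to be the measure-theoretic bookkeeping in this third step. One must check that the Weyl images of the alcove tile a $Q^\vee$-fundamental domain up to null sets: the walls are null by Proposition~\ref{prop:alcove-null-boundary}, and $\operatorname{Ext}$ is well defined off the null set $(T\setminus T_{\rm reg})\times P$. One must also avoid double counting the factor $|W|$ relative to the earlier remark distinguishing $m_{\rm quot}$ from restricted Haar measure. As a final consistency check, composing all three arrows should reproduce the explicit description $m_{\rm quot}=|W|\gamma_G\,dx\otimes dm_T$.
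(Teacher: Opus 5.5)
Your proposal is correct and follows essentially the same route as the paper. Pullback along $q$ is isometric because $m_{\rm quot}$ is a pushforward, and the root hypertori are Haar-null. The second-holonomy Fourier transform is unitary by fiberwise Parseval and is Weyl-covariant. Finally, the decomposition $T_{\rm reg}=\bigsqcup_{w\in W}wY$ up to null sets, with $dm_T=\gamma_G\,dx$ on each alcove, yields the unitary $(|W|\gamma_G)^{-1/2}\operatorname{Ext}$, whose inverse is your rescaled restriction.

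The only extra ingredient in the paper is the cited Borel identification $\{(g,h):gh=hg,\ g\in G_{\rm reg}\}/G\simeq(T_{\rm reg}\times T)/W$. It justifies reading the left-hand space as wavefunctions on commuting holonomy pairs, which your plan takes as given.
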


\begin{proof}
A commuting pair whose first member is regular is simultaneously conjugate
to a pair in $T_{\rm reg}\times T$, with residual ambiguity $W$.  This gives
the Borel identification
\(
\{(g,h):gh=hg,\ g\in G_{\rm reg}\}/G
\simeq(T_{\rm reg}\times T)/W
\), as reviewed in \cite[Sec.~3]{Jeffrey:2005}.
Because the measure in Eq.~\eqref{eq:regular-BF-quotient-Haar} is a
pushforward, pullback along $q$ is an isometry onto the invariant subspace.
Root hypertori are Haar-null, so $T_{\rm reg}$ may be replaced by $T$ in
$L^2$.  Explicitly, if $\Psi$ is an invariant holonomy wavefunction, take
its second-variable Fourier coefficients
\begin{equation}
 (\mathscr P_2\Psi)(x,\eta)
 =\int_T\Psi(x,t)e^{-2\pi i(\eta,t)}\,dm_T(t).
 \label{eq:partial-holonomy-Fourier}
\end{equation}
The characters $e^{2\pi i(\eta,t)}$, $\eta\in P$, form an orthonormal basis
of $L^2(T,dm_T)$.  Parseval's identity, followed by integration in $x$,
gives
\[
 \int_{T\times T}|\Psi(x,t)|^2\,dm_T(x)dm_T(t)
 =\sum_{\eta\in P}\int_T|(\mathscr P_2\Psi)(x,\eta)|^2\,dm_T(x).
\]
The formula first holds for finite Fourier sums and extends to all of $L^2$
by completion.  Changing $t$ to $wt$ proves
$(\mathscr P_2\Psi)(wx,w\eta)=(\mathscr P_2\Psi)(x,\eta)$.
The inverse Fourier series gives surjectivity onto the invariant subspace,
which proves the middle identification.  This is the compact-torus case of
Pontryagin--Plancherel theory
\cite[Ch.~4]{Folland:1995}.
Modulo the root hypertori, the regular torus is the disjoint union of the
Weyl translates of the open alcove,
\[
 T_{\mathrm{reg}}=\bigsqcup_{w\in W}wY
 \qquad\text{up to a Haar-null set}.
\]
In each alcove chart, Eq.~\eqref{eq:Haar-integral-normalization} gives
$dm_T=\gamma_G\,dx$.  Since
$(\operatorname{Ext}f)(wx,w\eta)=f(x,\eta)$,
\[
 \|\operatorname{Ext}f\|_{L^2(T\times P)}^2
 =\gamma_G\sum_{w\in W}\sum_{\eta\in P}\int_Y
 |f(x,\eta)|^2\,dx
 =|W|\gamma_G\|f\|_{L^2(Y\times P)}^2.
\]
Thus
\[
 U:=(|W|\gamma_G)^{-1/2}\operatorname{Ext}:
 \mathcal H_0\longrightarrow L^2(T\times P)^W
\]
is the last unitary in Eq.~\eqref{eq:regular-BF-Hilbert-chain-early}.  Its
normalization is determined by $m_{\rm quot}$, normalized Haar measure on
$T$, and counting measure on $P$.
This is the regular Fourier--Haar chart appearing in
\cite{Jia:2026bbo}; the cotangent interpretation is described in
\cite{Baez:1999sr}.
\end{proof}

We define the endpoint generator $\mathcal T_0$ on all of
$\mathcal H_0$ and initially define $\mathcal S_0$ on the dense
test space $C_c^\infty(Y\times P)$ by
\begin{align}
 (\mathcal S_0f)(x,\eta)
 &=\gamma_G\sum_{\omega\in P}\int_Y
   \sum_{w\in W}
   e^{-2\pi i((wx,\omega)+(w\eta,y))}
   f(y,\omega)\,dy,
 \label{eq:BF-S-operator}\\
 (\mathcal T_0f)(x,\eta)
 &=e^{2\pi i(\eta,x)}f(x,\eta).
 \label{eq:BF-T-operator}
\end{align}
The sum over $\omega$ in the first line is finite on this test
space. The operator $\mathcal T_0$ is unitary on $\mathcal H_0$:
its multiplier has unit modulus, and its inverse is multiplication
by $e^{-2\pi i(\eta,x)}$.
The operator $\mathcal S_0$ extends uniquely to a unitary on
$\mathcal H_0$.  Together with $\mathcal T_0$, it satisfies
\[
 \mathcal S_0^4=1,\qquad
 (\mathcal S_0\mathcal T_0)^3=\mathcal S_0^2,
\]
and hence defines a unitary representation
$\rho_0:SL(2,\mathbb Z)\longrightarrow U(\mathcal H_0)$ with
$\rho_0(S)=\mathcal S_0$ and $\rho_0(T)=\mathcal T_0$.
These endpoint properties are established independently of the
finite-level limit in Proposition~\ref{prop:regular-Fourier-unitarity},
whose Fourier construction and proof are given in
Appendix~\ref{app:endpoint-Fourier-unitarity}.  In particular,
$\|\mathcal S_0f\|_{\mathcal H_0}=\|f\|_{\mathcal H_0}$ for every
$f\in\mathcal H_0$.

In the holonomy representation of
Proposition~\ref{prop:regular-BF-Hilbert-model}, the same modular action
is induced by integral linear transformations of the two holonomies.
The first-holonomy-regular chart $T_{\rm reg}\times T$ need not be
preserved pointwise, but it contains a common modular-invariant conull
subset.  Removing its null complement does not change the
quotient--Haar $L^2$ space.  The relation between the two representations
and the precise conull-domain statement are explained in
Appendix~\ref{app:regular-holonomy-action}, especially
Remark~\ref{rem:regular-chart-ae-modular}.

The $S$-generator of the doubled modular data is $\mathcal S_k$, and its
$T$-generator is the categorical twist
\begin{equation}
 \mathcal T_k\delta_{(x_{a,b},\eta_{a,b})}
 =\theta_a\theta_b^{-1}
  \delta_{(x_{a,b},\eta_{a,b})}.
 \label{eq:finite-doubled-twist}
\end{equation}
Indeed, in the reversed factor
$S^{\rm rev}_{\bar b,\bar d}=\overline{S_{bd}}$ and
$\theta^{\rm rev}_{\bar b}=\theta_{\bar b}^{-1}=\theta_b^{-1}$; this is the
matrix convention fixed in Eqs.~\eqref{eq:dual-label-modular-identities} and
\eqref{eq:finite-doubled-S-operator}.  The scalar vacuum phases of the two
chiral $T$-matrices cancel because the chiralities are opposite.
More explicitly, the single-chiral RCFT $T$-matrix is
\begin{equation}
 (T_k^{\mathrm{RCFT}})_{aa}
 =e^{-2\pi i c_k/24}\theta_a,
 \label{eq:RCFT-versus-categorical-twist}
\end{equation}
where
\[
 c_k=\frac{k\,\dim G}{k+h^\vee}
\]
is the level-$k$ WZW central charge.  The reversed chiral factor carries the
complex-conjugate scalar.
Their product has diagonal entry $\theta_a\theta_b^{-1}$, with no residual
central-charge phase.
For the reverse braided and ribbon conventions, see
\cite[Secs.~8.1 and~8.10]{Etingof:2015}.

The doubled action is genuine.  Indeed, for a modular category $\mathcal D$,
put the global dimension and the Gauss sums
\begin{equation}
 \operatorname{Dim}\mathcal D
 :=\sum_{X\in\operatorname{Irr}(\mathcal D)}d_X^2,
 \qquad
 p_\pm(\mathcal D)
 :=\sum_{X\in\operatorname{Irr}(\mathcal D)}
       \theta_X^{\,\pm1}d_X^2,
 \label{eq:modular-Gauss-sums}
\end{equation}
where $\operatorname{Irr}(\mathcal D)$ is the set of simple objects and $d_X$ and
$\theta_X$ are the quantum dimension and the twist of $X$.
M\"uger's Gauss-sum formula
\cite[Theorem~1.2 and Remark~5.19]{Mueger:2001crc}
gives
\[
 \Delta_\pm(Z(\mathcal C_k))=\operatorname{Dim}\mathcal C_k,
 \qquad
 \operatorname{Dim}Z(\mathcal C_k)
 =\bigl(\operatorname{Dim}\mathcal C_k\bigr)^2.
\]
Thus $Z(\mathcal C_k)$ has multiplicative central charge one; the normalized
doubled matrices obey
$\mathcal S_k^4=1$ and
$(\mathcal S_k\mathcal T_k)^3=\mathcal S_k^2$.  They therefore define a
unitary representation
$\rho_k:SL(2,\mathbb Z)\to U(\mathcal H_k)$.

Let $I_k:\mathcal H_k\to\mathcal H_0$ be the cell isometry defined in
Eq.~\eqref{eq:lattice-cell-isometry}, and let
$\mathsf P_k=I_kI_k^*$ be the orthogonal projection onto its image.  We
transport $\rho_k$ to $\mathsf P_k\mathcal H_0$ and let it act as the identity
on the orthogonal complement:
\begin{equation}
 \rho_k^\#(\gamma)
 :=I_k\rho_k(\gamma)I_k^*+(1-\mathsf P_k).
 \label{eq:foreground-embedded-representation}
\end{equation}
Relative to $\mathcal H_0=I_k\mathcal H_k\oplus(I_k\mathcal H_k)^\perp$,
this operator is unitarily equivalent to $\rho_k(\gamma)\oplus1$.
Moreover, $I_k^*I_k=1$ and
$I_k^*(1-\mathsf P_k)=(1-\mathsf P_k)I_k=0$, so
\begin{equation}
 \rho_k^\#(\gamma_1)\rho_k^\#(\gamma_2)
 =I_k\rho_k(\gamma_1\gamma_2)I_k^*+(1-\mathsf P_k)
 =\rho_k^\#(\gamma_1\gamma_2).
 \label{eq:embedded-rho-representation}
\end{equation}
Since $\rho_k^\#(1)=1$, this gives a unitary representation on
$\mathcal H_0$.

In physical terms, the following theorem compares the same prepared
wavepacket at different levels.  For normalized $\psi,\varphi\in\mathcal H_0$
it implies convergence of the amplitude
$\langle\varphi,\rho_k^\#(\gamma)\psi\rangle$ to the $BF$ amplitude, and
therefore of the corresponding transition probability.  Strong convergence
means $\|\rho_k^\#(\gamma)\psi-\rho_0(\gamma)\psi\|\to0$ for each fixed
$\psi$; it does not assert a bound uniform over every normalized state.
In particular, states whose support approaches a singular flux or whose
charge grows with $k$ require a separate scaling analysis.  The extension
by the identity on $(I_k\mathcal H_k)^\perp$ makes each finite-level action
unitary on $\mathcal H_0$; the cell approximation will show that this
complement has vanishing projection on any fixed state as $k\rightarrow \infty$.

\begin{theorem}[Regular genus-one modular convergence]
\label{thm:strong-modular}
Let $\rho_0$ be the endpoint unitary representation on $\mathcal H_0$
fixed above by $\rho_0(S)=\mathcal S_0$ and
$\rho_0(T)=\mathcal T_0$, with generators given by
Eqs.~\eqref{eq:BF-S-operator}--\eqref{eq:BF-T-operator}.
Then every
$\rho_k^\#$ in Eq.~\eqref{eq:foreground-embedded-representation} is a
unitary representation on the same Hilbert space. For each
$\gamma\in SL(2,\mathbb Z)$,
\begin{equation}
 \rho_k^\#(\gamma)
 \xrightarrow[k\to\infty]{\mathrm{SOT}}\rho_0(\gamma),
 \qquad
 \rho_k^\#(\gamma)^*
 \xrightarrow[k\to\infty]{\mathrm{SOT}}\rho_0(\gamma)^*.
 \label{eq:strong-modular}
\end{equation}
Thus the embedded genus-one operators converge pointwise in the strong-$*$
topology on the regular label space $\mathcal H_0$.

More generally, let $(\widetilde I_k)$ be any admissible family of
sampling isometries in the sense of
Eq.~\eqref{eq:admissible-embedding-hypotheses}.
The tilde notation distinguishes this general admissible family
from the specific family $(I_k)$ fixed above. Put
\begin{equation}
 \widetilde{\mathsf P}_k=\widetilde I_k\widetilde I_k^*,
 \qquad
 \widetilde\rho_k^\#(\gamma)
 =\widetilde I_k\rho_k(\gamma)\widetilde I_k^*
  +(1-\widetilde{\mathsf P}_k).
 \label{eq:alternative-embedded-representation}
\end{equation}
For every $k$, this is a unitary representation of
$SL(2,\mathbb Z)$ on $\mathcal H_0$. For each fixed
$\gamma\in SL(2,\mathbb Z)$,
\begin{equation}
 \widetilde\rho_k^\#(\gamma)
 \xrightarrow[k\to\infty]{\mathrm{SOT}}\rho_0(\gamma),
 \qquad
 \widetilde\rho_k^\#(\gamma)^*
 \xrightarrow[k\to\infty]{\mathrm{SOT}}\rho_0(\gamma)^*.
 \label{eq:alternative-strong-modular-limit}
\end{equation}
Lemma~\ref{lem:admissible-embedding-robustness} proves these two limits.
\end{theorem}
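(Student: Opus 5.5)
The proof reduces the statement to the two generators and then passes to arbitrary words. Unitarity of $\rho_k^\#$ is already contained in Eq.~\eqref{eq:embedded-rho-representation} together with the decomposition $\rho_k(\gamma)\oplus1$, so only the limits require work. Suppose bounded operators with $\|A_k\|,\|B_k\|\le1$ satisfy $A_k\to A$ and $B_k\to B$ strongly. Then $A_kB_k\to AB$ strongly, because
\[
\|A_kB_kf-ABf\|\le\|B_kf-Bf\|+\|(A_k-A)Bf\|.
\]
For unitaries, strong convergence $U_k\to U$ also gives $U_k^*\to U^*$ strongly. To see this, write $g=U^{-1}f$; then
\[
\|U_k^*f-U^*f\|=\|g-U_k^*Ug\|=\|U_kg-Ug\|\to0.
\]
Each $\gamma$ is a finite word in $S^{\pm1},T^{\pm1}$. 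It therefore suffices to prove $\rho_k^\#(S)\to\mathcal S_0$ and $\rho_k^\#(T)\to\mathcal T_0$ strongly, with the cell embedding $I_k$ fixed throughout.

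\textbf{The twist.} Substituting Eq.~\eqref{eq:inverse-flux-charge-map} into the Kac--Peterson conformal weight $h_a=((a+2\rho,a))/(2r_k)$ gives
\[
h_a-h_b=\frac{(a+\rho,a+\rho)-(b+\rho,b+\rho)}{2r_k}=(\eta_{a,b},x_{a,b}).
\]
Hence $\mathcal T_k$ is exactly multiplication by $e^{2\pi i(\eta,x)}$ at the lattice points. Since $\mathcal T_0$ commutes with $\mathsf P_k$ up to the oscillation of the multiplier over a cell, we have
\[
\rho_k^\#(T)f-\mathcal T_0f=(I_k\mathcal T_kI_k^*-\mathsf P_k\mathcal T_0)f-(1-\mathsf P_k)(\mathcal T_0-1)f .
\]
The first term is bounded by $\sup_{\text{cells}}|e^{2\pi i(\eta,x)}-e^{2\pi i(\eta,y)}|$ on finitely many charges, up to a tail. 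The second term vanishes by Lemma~\ref{lem:global-cell-approximation}. Both terms are controlled first on $C_c^\infty(Y\times P)$ and then on all of $\mathcal H_0$ by density and uniform boundedness.

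\textbf{The $S$ transformation.} This is the main step. For a test function $f\in\mathcal A(Y\times P)$, write the doubled Kac--Peterson formula
\[
S^{(k)}_{ac}\overline{S^{(k)}_{bd}}\propto r_k^{-\ell}\sum_{w,w'}\epsilon(ww')\exp\!\Big(-2\pi i\big[(w(a+\rho),c+\rho)-(w'(b+\rho),d+\rho)\big]/r_k\Big).
\]
In flux--charge variables the equal-Weyl terms $w=w'$ have phase $-2\pi i((wx,\omega)+(w\eta,y))$ exactly. Their contribution to $\mathcal S_kJ_kf$ is therefore a Riemann sum for $\mathcal S_0f$ with cell volume $v_k$, and the prefactor must match $\gamma_G$ through Eq.~\eqref{eq:dual-lattice-covolume-identity}. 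The unequal-Weyl terms carry a phase of order $r_k(wx-w'x,y)$ in the summed flux variable $y$. Poisson summation over the mesh $r_k^{-1}(P+\rho+\omega/2)$ converts the sum into dual-lattice terms. Because $f$ is smooth with compact regular support, these are $O(r_k^{-N})$ on compact regular windows, since $w\ne w'$ forces $wx\ne w'x$ for regular $x$. This is Proposition~\ref{prop:off-diagonal-decay}, and together with the equal-Weyl Riemann sums it yields the local weighted convergence of Proposition~\ref{prop:local-S-convergence}.

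\textbf{The main obstacle: no escape of norm.} Local convergence does not exclude loss of norm to large charges or to wall neighbourhoods. I would settle this by a unitarity squeeze, which is the content of Theorem~\ref{thm:no-escape}. By unitarity, $\|\mathcal S_kJ_kf\|=\|J_kf\|\to\|f\|=\|\mathcal S_0f\|$, with the middle limit supplied by Lemma~\ref{lem:global-sampling}. The local result gives the same mass on any compact window times finitely many charges, up to $\varepsilon$. The mass outside the window is then forced to be at most $\varepsilon$, which gives global convergence of $\|\mathcal S_kJ_kf-J_k\mathcal S_0f\|$. Lemma~\ref{lem:global-cell-approximation} then transfers this to $\|I_k\mathcal S_kI_k^*f-\mathcal S_0f\|_{\mathcal H_0}\to0$, using $\|I_kJ_kg-\mathsf P_kg\|\to0$ for smooth $g$. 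The identity part $(1-\mathsf P_k)f$ also tends to zero.

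\textbf{Extension and embedding independence.} Density of test functions and $\|\rho_k^\#(S)\|=1$ extend the convergence to all of $\mathcal H_0$. The words argument above then gives the strong-$*$ limits for every $\gamma$. For the admissible family $\widetilde I_k$, Lemma~\ref{lem:admissible-embedding-robustness} gives strong convergence $\widetilde I_kI_k^*-\mathsf P_k\to0$ and $\widetilde{\mathsf P}_k\to1$. Writing $\widetilde\rho_k^\#(\gamma)$ in terms of $\rho_k^\#(\gamma)$ and these differences, it converges strongly to the same limit.
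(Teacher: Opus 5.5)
Your proposal is correct and follows the paper's proof essentially step for step. It uses the reduction to generators and fixed words with the same product and unitary-adjoint estimates, the exact identity $\theta_a\theta_b^{-1}=e^{2\pi i(\eta,x)}$, and equal-Weyl Riemann sums with Poisson decay of the unequal-Weyl terms. It then applies the unitarity squeeze of Theorem~\ref{thm:no-escape}, the cell lemma, density, and Lemma~\ref{lem:admissible-embedding-robustness}. Your direct bound on the cell oscillation of $e^{2\pi i(\eta,x)}$ for the twist is only a minor variant of the paper's exact intertwining $\mathcal T_kJ_k=J_k\mathcal T_0$.

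Two side remarks should be tightened, although the results you cite already cover them. First, the Poisson step needs $f\in C_c^\infty(Y\times P)$, not $f\in\mathcal A(Y\times P)$. Second, the relevant nonresonance is $(w-w')x\notin Q^\vee$, i.e.\ trivial affine-Weyl stabilizer of $x\in Y$, not merely $wx\neq w'x$. This matters because a frequency sampled on $P$ is only defined modulo the reciprocal lattice $Q^\vee$.
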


\subsection{The Kac--Peterson kernel and Poisson decay}
\label{subsec:kp-poisson}

With long roots normalized to have squared length $2$, the Kac--Peterson matrix
\cite{Kac:1984mq,Gannon:1995uv} is
\begin{equation}
 S^{(k)}_{\lambda\mu}
 =\alpha_k^{\mathrm{KP}}\sum_{w\in W}\epsilon(w)
 e^{-\frac{2\pi i}{r_k}(w(\lambda+\rho),\mu+\rho)},
 \qquad
 |\alpha_k^{\mathrm{KP}}|^2
 =\frac{1}{r_k^{\ell}|P/Q^\vee|}.
 \label{eq:KP-normalization}
\end{equation}
This is Eq.~(3.3d) of Ref.~\cite{Gannon:1995uv} in our conventions.
The phase of $\alpha_k^{\mathrm{KP}}$ cancels against its complex
conjugate in the doubled product
$S^{(k)}_{ac}\overline{S^{(k)}_{bd}}$.

To express this product in flux--charge coordinates, we substitute
\begin{equation}
 \begin{aligned}
  a+\rho&=r_kx+\eta/2,&\qquad b+\rho&=r_kx-\eta/2,\\
  c+\rho&=r_ky+\omega/2,& d+\rho&=r_ky-\omega/2.
 \end{aligned}
 \label{eq:four-flux-charge-substitutions}
\end{equation}
For a fixed Weyl pair $(u,v)$, the combined exponent from the first
chiral factor and the complex conjugate of the second is
\begin{align*}
 &-\frac{2\pi i}{r_k}
   \bigl(u(r_kx+\eta/2),r_ky+\omega/2\bigr)\\
 &\hspace{35mm}
 +\frac{2\pi i}{r_k}
   \bigl(v(r_kx-\eta/2),r_ky-\omega/2\bigr).
\end{align*}
Expanding these products gives
\begin{align}
 S^{(k,-k)}_{(x,\eta),(y,\omega)}
 =|\alpha_k^{\mathrm{KP}}|^2\sum_{u,v\in W}\epsilon(u)\epsilon(v)
 \exp\{&-2\pi ir_k((u-v)x,y)
 -\pi i((u+v)x,\omega)\notag\\
 &-\pi i((u+v)\eta,y)
 -\tfrac{\pi i}{2r_k}((u-v)\eta,\omega)\}.
 \label{eq:S_diag}
\end{align}
In the summation, the terms with equal Weyl elements $u=v=w$ contribute $|\alpha_k^{\mathrm{KP}}|^2 K_{\rm diag}$, where the kernel is
\begin{equation}
 K_{\rm diag}((x,\eta),(y,\omega))
 =\sum_{w\in W}
 e^{-2\pi i((wx,\omega)+(w\eta,y))}.
 \label{eq:BF-diagonal-kernel}
\end{equation}

Before moving on, let us consider the example with $G=SU(2)$. Let $\varpi=\alpha/2$ be the fundamental weight, so that
$(\varpi,\varpi)=1/2$, and write
$x=t\varpi$, $y=u\varpi$, $\eta=n\varpi$, $\omega=m\varpi$.
Then $0<t,u<1$, $n,m\in\mathbb Z$, $dx=dt/\sqrt2$, and
$\gamma_{SU(2)}=1/\sqrt2$.  The two Weyl elements give
\begin{equation}
 (\mathcal S_0f)(t,n)
 =\sum_{m\in\mathbb Z}\int_0^1
     \cos\,\!\bigl(\pi(tm+nu)\bigr)f(u,m)\,du,
 \qquad
 (\mathcal T_0f)(t,n)=e^{\pi int}f(t,n).
 \label{eq:SU2-endpoint-kernel-check}
\end{equation}
Here $f(t,n)$ abbreviates $f(t\varpi,n\varpi)$; its norm still uses
$dt/\sqrt2$.  At level $k$, put $r=k+2$.  If the integers
$a,b,c,d\in\{0,\ldots,k\}$ denote the chiral highest weights in units of
$\varpi$, then
\[
 t=\frac{a+b+2}{2r},\quad n=a-b,\qquad
 u=\frac{c+d+2}{2r},\quad m=c-d.
\]
The familiar $SU(2)$ matrix
$S_{ac}=\sqrt{2/r}\sin(\pi(a+1)(c+1)/r)$ consequently gives
\begin{equation}
 S_{ac}S_{bd}
 =\frac1r\left[
    \cos\,\!\bigl(\pi(tm+nu)\bigr)
    -\cos\!\left(2\pi rtu+\frac{\pi nm}{2r}\right)
   \right].
 \label{eq:SU2-doubled-kernel-check}
\end{equation}
At fixed input charge the spacing of $u$ is $1/r$.  The first term is
therefore the Riemann sum for Eq.~\eqref{eq:SU2-endpoint-kernel-check};
the second oscillates rapidly for regular output flux $t$ as $r\rightarrow +\infty$.
The following argument proves the same cancellation for every $G$ under
consideration, including the reciprocal-lattice condition that replaces
the elementary $SU(2)$ oscillation.

In the doubled kernel, the explicit $r_k$-dependent terms in the exponent cancel when the two Weyl elements agree. The surviving phase pairs electric
charge with magnetic flux, as expected for $BF$ lines.  When the Weyl
elements differ, the phase oscillates on the scale $r_k^{-1}$ in the input
flux.  Smearing with a smooth wavepacket suppresses these terms.  Because
the flux is sampled on a lattice, absence of a stationary point by itself
would be insufficient: a lattice frequency can be indistinguishable from
zero modulo the reciprocal lattice.  The next proposition excludes exactly
these resonances on a compact set of regular output fluxes.  Its estimate
concerns the smeared operator; an unsmeared matrix entry still contains the
oscillating terms. 

\begin{proposition}[Rapid decay of the off-diagonal Weyl terms]
\label{prop:off-diagonal-decay}
Let $f\in C_c^\infty(Y\times P)$, $K\Subset Y$, $E\subset P$ finite, and
$N\ge0$.
Let $\mathcal S_k^{\rm off}$ be the operator obtained from
Eq.~\eqref{eq:S_diag} by retaining the terms with $u\ne v$.  Then
\begin{equation}
 \sup_{(x,\eta)\in\Lambda_k\cap(K\times E)}
 |(\mathcal S_k^{\rm off}Q_kf)(x,\eta)|
 =O_{f,K,E,N}(r_k^{-N}).
 \label{eq:off-diagonal-decay}
\end{equation}
\end{proposition}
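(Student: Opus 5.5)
Fix $(x,\eta)\in\Lambda_k\cap(K\times E)$ and a Weyl pair $u\neq v$. Since $f$ has finitely many charge fibers $\omega\in E_f$, it suffices to bound, for each fixed $\omega$ and each pair $(u,v)$ with $u\neq v$, the single-fiber sum
\[
 \Sigma_k=|\alpha_k^{\mathrm{KP}}|^2\sum_{y}\;e^{-2\pi i r_k((u-v)x,y)-\pi i((u+v)\eta,y)}\,\phi_k(y),
\]
where $y$ runs over the mesh $r_k^{-1}(P+\rho+\omega/2)$ and the smooth amplitude
\[
 \phi_k(y)=e^{-\pi i((u+v)x,\omega)-\frac{\pi i}{2r_k}((u-v)\eta,\omega)}f(y,\omega)
\]
is independent of $y$ apart from $f$. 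Because $\operatorname{supp}f(\cdot,\omega)\Subset Y$, the integrability constraints in $\Lambda_k$ impose no deletions for large $k$, by \eqref{eq:fixed-charge-local-full-mesh}. The sum is therefore over the \emph{full} translated lattice, weighted by a $C_c^\infty$ function. I will write $y=r_k^{-1}(p+\rho+\omega/2)$ with $p\in P$ and apply Poisson summation over $P$. This turns $\Sigma_k$ into a sum over the dual lattice $Q^\vee$ of Fourier transforms $\widehat{f}(\cdot,\omega)$ evaluated at the frequencies
\[
 \xi_q=r_k\bigl((u-v)x+\tfrac{1}{2r_k}(u+v)\eta+q\bigr),\qquad q\in Q^\vee,
\]
multiplied by unimodular phases and the prefactor $|\alpha_k^{\mathrm{KP}}|^2 r_k^{\ell}/\operatorname{covol}(P)=O(1)$. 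Here I use the identification $(u-v)^*=u^{-1}-v^{-1}$ through the invariant form, and I set $\xi$ accordingly.

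Since $\widehat f(\cdot,\omega)$ is Schwartz, $|\widehat f(\xi_q)|\le C_M(1+|\xi_q|)^{-M}$. The whole estimate therefore reduces to a uniform non-resonance bound: there are $c>0$ and $k_0$ such that for all $x\in K$, all $\eta\in E$, all $u\neq v$, all $q\in Q^\vee$ and all $k\ge k_0$,
\[
 \Bigl|(u^{-1}-v^{-1})\ldots\Bigr|\;\Longrightarrow\;|(v^{-1}u-1)x+q'|\ge c
\]
for the appropriately transformed lattice vector $q'$. Equivalently, after applying $v^{-1}$ (an isometry preserving $Q^\vee$), we need $|wx-x+q|\ge c$ for $w=v^{-1}u\neq1$. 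This is exactly the statement that the affine Weyl element $x\mapsto wx+q$ has no fixed point near $K$. That holds because points of $Y$ lie on no reflection hyperplane, and a nontrivial element of $W_{\mathrm{aff}}$ fixing a point of $\overline Y$ fixes a wall point (Section~\ref{sec:Limit_Regular_Simples}). More concretely, the continuous function $x\mapsto\min_{(w,q)\neq(1,0)}|wx-x+q|$ is positive on the compact set $K$, so it is bounded below by some $c>0$. The minimum is attained over finitely many $q$ because the expression grows linearly in $|q|$. The bounded charge correction $\tfrac{1}{2r_k}(u+v)\eta$ costs only $O(r_k^{-1})$, uniformly for $\eta\in E$, so it is absorbed for large $k$. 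Hence $|\xi_q|\ge r_k(c/2+|q|/C)$.

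Summing $C_M r_k^{-M}(1+|q|)^{-M}$ over $q\in Q^\vee$ converges for $M>\ell$ and gives $O(r_k^{-M})$ uniformly in $(x,\eta)$. Finitely many pairs $(u,v)$ and fibers $\omega$ then yield the claim with $N=M$. The main obstacle is the non-resonance step. One has to be careful that the relevant linear map on $x$ is $w-1$ composed with the correct Poisson dual lattice ($Q^\vee$, not $Q$), so that the resonance condition is precisely an affine Weyl fixed-point condition. The other delicate point is checking that the Weyl-transpose bookkeeping $(ux,y)=(x,u^{-1}y)$ keeps the lattice $Q^\vee$ invariant.
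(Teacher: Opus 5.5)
Your proposal is correct and follows essentially the same route as the paper. Both arguments use the same four steps: nonresonance of $(u-v)x$ modulo $Q^\vee$, from the trivial affine-Weyl stabilizer of interior alcove points; Poisson summation from $P$ to the dual lattice $Q^\vee$; Schwartz decay summed uniformly over the lattice; and cancellation of the Poisson factor $r_k^{\ell}$ against $|\alpha_k^{\mathrm{KP}}|^2$.

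The only real difference is cosmetic. You keep the charge phase $e^{-\pi i((u+v)\eta,y)}$ in the exponent as a bounded shift $\tfrac12(u+v)\eta$ of the rescaled frequency, while the paper absorbs it into the amplitude $b_{u,v,\eta,\omega}$. Both are uniform because $E$ is finite.

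Two presentation points. Your transpose remark $(u-v)^*=u^{-1}-v^{-1}$ and the unfinished implication display are unnecessary, since the Poisson frequency is simply $(u-v)x$. Also, for small $N$ you should choose $M>\max\{\ell,N\}$ rather than $M=N$.
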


\begin{proof}
For $u\ne v$ and $x\in Y$, suppose for contradiction that
$q:=(u-v)x\in Q^\vee$.  Then $ux=vx+q$, and hence
\[
 (v^{-1}u,-v^{-1}q)\mathbin{\cdot}x
 =v^{-1}ux-v^{-1}q=x.
\]
This is a nonidentity element of the affine Weyl group
$W\ltimes Q^\vee$, because $u\ne v$.  An interior point of the fundamental
alcove has trivial affine-Weyl stabilizer, so the assumed resonance is
impossible.  Since $K\Subset Y$ and $W$ is finite, these
frequencies remain a positive distance from $Q^\vee$, uniformly for
$x\in K$.  This is the nonresonance needed for Poisson summation.

Fix $u\ne v$ and one of the finitely many input charges $\omega$ occurring
in $f$, and put
\begin{equation}
 \xi=(u-v)x,\qquad
 b_{u,v,\eta,\omega}(y)
 =e^{-\pi i((u+v)\eta,y)}f(y,\omega).
 \label{eq:off-diagonal-test-function}
\end{equation}
The flux support of $f$ has positive distance from $\partial Y$, so extension
by zero makes $b_{u,v,\eta,\omega}$ a smooth compactly supported function on
$\mathfrak t$.  For a point in the translated input mesh we may write
$y=(\mu+\rho+\omega/2)/r_k$; the two reconstructed chiral weights are then
$\mu+\omega$ and $\mu$.  Since the flux support stays a fixed positive
distance from the alcove boundary and $\omega$ ranges over a fixed finite
set, both weights satisfy all integrability inequalities for sufficiently
large $k$.  More explicitly, choose $D_f\Subset Y$ containing every flux
support and let $E_f$ be the finite charge support.  Put
\[
 \delta_f=\min_{y\in D_f}\min\!\left\{
 \min_i\langle y,\alpha_i^\vee\rangle,
 1-\langle y,\theta^\vee\rangle\right\}>0
\]
and
\[
 M_f=\max_{\omega\in E_f}\max\!\left\{
 \max_i|\langle\omega,\alpha_i^\vee\rangle|,
 |\langle\omega,\theta^\vee\rangle|\right\}.
\]
The inequality $r_k\delta_f\ge1+M_f/2$ is a single threshold ensuring that
both reconstructed weights are integrable for every charge in $E_f$ and
every Weyl pair.  Beyond this threshold we extend the finite sum to all
$\mu\in P$; the extra terms vanish under the zero extension of $f$.

Apart from the factor
$\epsilon(u)\epsilon(v)|\alpha_k^{\mathrm{KP}}|^2$, the contribution of the
fixed quadruple $(u,v,\eta,\omega)$ is
\begin{equation}
 \exp\!\left[-2\pi i(\xi,\rho+\omega/2)
 -\pi i((u+v)x,\omega)
 -\frac{\pi i}{2r_k}((u-v)\eta,\omega)\right]
 \Sigma_{r_k}(\xi),
 \label{eq:off-diagonal-exact-fixed-phase}
\end{equation}
where
\begin{equation}
 \Sigma_{r_k}(\xi)
 =\sum_{\mu\in P}e^{-2\pi i(\xi,\mu)}
 b_{u,v,\eta,\omega}
 \left(\frac{\mu+\rho+\omega/2}{r_k}\right).
 \label{eq:off-diagonal-lattice-sum}
\end{equation}
The exponential factor in
Eq.~\eqref{eq:off-diagonal-exact-fixed-phase} has modulus one.
With
$\widehat b(q)=\int_{\mathfrak t}b(y)e^{-2\pi i(q,y)}\,dy$, our convention is
\begin{equation}
 \sum_{\mu\in P}g(\mu)
 =\operatorname{covol}(P)^{-1}
 \sum_{p\in Q^\vee}\widehat g(p).
 \label{eq:Poisson-convention-weight-lattice}
\end{equation}
For completeness, set $a_\omega=\rho+\omega/2$ and apply the formula to
$g(t)=e^{-2\pi i(\xi,t)}b((t+a_\omega)/r_k)$.  The change of variables
$t=r_ky-a_\omega$ gives
\[
 \widehat g(p)
 =r_k^\ell e^{2\pi i(\xi+p,a_\omega)}
     \widehat b\bigl(r_k(\xi+p)\bigr).
\]
Thus dilation and translation, including their phase, give
\begin{equation}
 \Sigma_{r_k}(\xi)
 =\frac{r_k^\ell}{\operatorname{covol}(P)}
 \sum_{p\in Q^\vee}
 e^{2\pi i(\xi+p,\rho+\omega/2)}
 \widehat b_{u,v,\eta,\omega}\bigl(r_k(\xi+p)\bigr).
 \label{eq:off-diagonal-Poisson}
\end{equation}
This is the standard Poisson formula in the normalization of
\cite[Ch.~5, Sec.~3, and Ch.~6]{Stein:2003}.

To estimate the dual-lattice sum uniformly, define
\[
 B_K:=\{(u-v)x:x\in\overline K,\ u,v\in W,\ u\ne v\}
\]
and let $\delta_K=\operatorname{dist}(B_K,Q^\vee)>0$.  The positivity follows
from the nonresonance above, together with compactness of $K$ and finiteness
of $W$.

For every $L>\ell$,
\begin{equation}
 \sup_{\xi\in B_K}\sum_{p\in Q^\vee}|\xi+p|^{-L}<\infty.
 \label{eq:uniform-dual-lattice-sum}
\end{equation}
Indeed, $B_K$ is bounded.  The finitely many terms with $|p|$ in a fixed
bounded range are controlled by $\delta_K^{-L}$, while for all remaining
terms $|\xi+p|\ge |p|/2$ uniformly in $\xi\in B_K$; the lattice series
$\sum_{p\in Q^\vee\setminus\{0\}}|p|^{-L}$ converges precisely when
$L>\ell$.

The functions in Eq.~\eqref{eq:off-diagonal-test-function} range over finite
$u,v,\eta,\omega$ sets.  To see their uniform Fourier decay directly,
integrate by parts with the Euclidean Laplacian:
\[
(1+4\pi^2|q|^2)^m \widehat{b}(q)
=
\widehat{(1-\Delta)^m b}(q),
\qquad
\left|\widehat{(1-\Delta)^m b}(q)\right|
\leq
\left\|(1-\Delta)^m b\right\|_{L^1}.
\]
Their common compact support and finitely many charge labels bound the last
norm uniformly.  Choosing $m$ with $2m\geq L$ gives
$|\widehat b(q)|\leq C_L(1+|q|)^{-L}$ for every prescribed $L$.
No surface terms appear because the input wavepacket vanishes near the
walls.  This is the Fourier rapid-decay estimate used below
\cite[Chs.~5--6]{Stein:2003}.  Combining that estimate with
Eq.~\eqref{eq:uniform-dual-lattice-sum} gives, for $r_k\ge1$,
\begin{align}
 \sup_{\xi\in B_K}
 \sum_{p\in Q^\vee}
 \left|\widehat b_{u,v,\eta,\omega}
       \bigl(r_k(\xi+p)\bigr)\right|
 &\le C_L\sup_{\xi\in B_K}
       \sum_{p\in Q^\vee}(1+r_k|\xi+p|)^{-L}\notag\\
 &\le C'_{L,K}r_k^{-L}.
 \label{eq:uniform-Schwartz-lattice-bound}
\end{align}
Thus Eq.~\eqref{eq:off-diagonal-Poisson} is
$O(r_k^{\ell-L})$.  Multiplication by the exact Kac--Peterson factor
\[
 |\alpha_k^{\mathrm{KP}}|^2
 =|P/Q^\vee|^{-1}r_k^{-\ell}
\]
leaves $O(r_k^{-L})$.  Given the exponent $N$ in the statement, choose
$L>\max\{\ell,N\}$ and sum over the finite Weyl-pair and input-charge
sets.  This proves Eq.~\eqref{eq:off-diagonal-decay} with all constants
uniform on $K\times E$.
\end{proof}

Since $Q^\vee\subset P$ has index $|P/Q^\vee|$,
\[
 \frac{|\alpha_k^{\mathrm{KP}}|^2}{v_k}
 =\frac1{|P/Q^\vee|\operatorname{covol}(P)}
 =\gamma_G.
\]
The diagonal Weyl terms therefore give precisely the endpoint operator
$\mathcal S_0$ already defined in Eq.~\eqref{eq:BF-S-operator};
Eq.~\eqref{eq:BF-diagonal-kernel} is its integral kernel.

Consider a measurement that resolves the output flux only within a compact
regular window $K$ and a finite set of electric charges $E$.  The following
proposition says that the finite and limiting outgoing wavefunctions agree
in norm on this window.  Its volume factor $v_k$ converts the discrete
probability sum into the continuum integral.  The proof combines the
Riemann sum from the equal-Weyl terms with the oscillatory estimate just
established.  Control of all outcomes, including those outside the window,
will require unitarity in the next subsection.

\begin{proposition}[Local weighted convergence]
\label{prop:local-S-convergence}
For $f\in C_c^\infty(Y\times P)$, $K\Subset Y$, and finite $E\subset P$,
\begin{equation}
 v_k\sum_{z\in\Lambda_k\cap(K\times E)}
 \left|(\mathcal S_kQ_kf-Q_k\mathcal S_0f)(z)\right|^2
 \longrightarrow0 .
 \label{eq:local-S-convergence}
\end{equation}
\end{proposition}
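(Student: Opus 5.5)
The plan is to split $\mathcal S_k=\mathcal S_k^{\rm diag}+\mathcal S_k^{\rm off}$ according to Eq.~\eqref{eq:S_diag}, with equal and unequal Weyl pairs, and to bound the two contributions separately on the window $\Lambda_k\cap(K\times E)$. The number of sample points in this window is at most $C_{K,E}\,r_k^{\ell}$, so the prefactor $v_k=\operatorname{covol}(P)r_k^{-\ell}$ times the number of terms stays bounded. It therefore suffices to prove two facts: the off-diagonal part is $o(1)$ uniformly on the window, and the diagonal part converges to $Q_k\mathcal S_0 f$ uniformly on the window.

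For the off-diagonal part, Proposition~\ref{prop:off-diagonal-decay} gives $\sup|(\mathcal S_k^{\rm off}Q_kf)(z)|=O(r_k^{-N})$ uniformly on $\Lambda_k\cap(K\times E)$. With $N=1$ the weighted sum of squares is $O(v_k r_k^{\ell} r_k^{-2})\to0$, so this contribution is finished.

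For the diagonal part, fix $(x,\eta)$ in the window. The equal-Weyl terms give
\[
 (\mathcal S_k^{\rm diag}Q_kf)(x,\eta)=|\alpha_k^{\mathrm{KP}}|^2\sum_{\omega\in E_f}\sum_{y}K_{\rm diag}((x,\eta),(y,\omega))f(y,\omega),
\]
where $y$ runs over the admitted mesh $\Lambda_{k,\omega}$. Since $|\alpha_k^{\mathrm{KP}}|^2=\gamma_G v_k$, this equals $\gamma_G$ times a Riemann sum with cell volume $v_k$ for the integral defining $(\mathcal S_0f)(x,\eta)$ in Eq.~\eqref{eq:BF-S-operator}. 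Only finitely many input charges occur. The flux support of $f$ lies in a fixed compact $D_f\Subset Y$, so by Eq.~\eqref{eq:fixed-charge-local-full-mesh} the admitted mesh eventually coincides there with the full translated lattice $r_k^{-1}(P+\rho+\omega/2)$. The integrand $y\mapsto K_{\rm diag}((x,\eta),(y,\omega))f(y,\omega)$ extends by zero to a smooth compactly supported function on $\mathfrak t$. Its $C^1$ norms, and in fact all $C^m$ norms, are bounded uniformly in $x\in K$ and $\eta\in E$, because the phase depends smoothly and boundedly on these parameters.

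The main step is to make the Riemann sum estimate uniform in $(x,\eta)$. I would do this by Poisson summation in the form of Eq.~\eqref{eq:Poisson-convention-weight-lattice}, applied to the dilated lattice sum exactly as in the proof of Proposition~\ref{prop:off-diagonal-decay}, but now with frequency $\xi=0$. The $p=0$ term reproduces the integral exactly. The terms with $p\ne0$ are bounded by $\sum_{p\ne0}|\widehat b(r_kp)|$, which is $O(r_k^{-L})$ by the uniform Fourier decay obtained from integration by parts, since the $C^m$ bounds are uniform in $(x,\eta)\in K\times E$. This gives
\[
 |(\mathcal S_k^{\rm diag}Q_kf-Q_k\mathcal S_0f)(z)|=O(r_k^{-L})
\]
uniformly on the window, and the weighted sum of squares again tends to zero. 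A cruder alternative is the mean value theorem on each cell, which gives an $O(r_k^{-1})$ error and also suffices.

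Finally, the triangle inequality in the weighted $\ell^2$ norm on the window combines the two estimates and proves Eq.~\eqref{eq:local-S-convergence}. The only delicate point I expect is the bookkeeping that the translated mesh offset $\rho+\omega/2$ and the integrability threshold are uniform. Both are handled by the finite charge supports $E$ and $E_f$ together with the threshold $r_k\delta_f\ge1+M_f/2$ from the proof of Proposition~\ref{prop:off-diagonal-decay}.
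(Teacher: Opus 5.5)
Your proposal is correct and follows the paper's proof. It uses the same split into equal and unequal Weyl pairs, Proposition~\ref{prop:off-diagonal-decay} for the unequal pairs, the eventual agreement of the admitted mesh with the full translated lattice on $D_f$, and the count $v_k\#(\Lambda_k\cap(K\times E))=O_{K,E}(1)$ to turn uniform pointwise convergence into the weighted $\ell^2$ statement. The only variation is the equal-Weyl Riemann sum: you make it uniform by zero-frequency Poisson summation with an $O(r_k^{-L})$ error, whereas the paper uses the first-order cell comparison \eqref{eq:uniform-translated-Riemann-bound} with a uniform gradient bound, giving $O(r_k^{-1})$ — exactly the cruder alternative you mention, and either suffices.
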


\begin{proof}
The support of $f$ meets only finitely many input-charge fibers, say
$\omega\in E_f$, and its flux support has positive distance from
$\partial Y$.  For the equal-Weyl terms, Eq.~\eqref{eq:KP-normalization}
and $|\alpha_k^{\mathrm{KP}}|^2/v_k=\gamma_G$ give, exactly,
\[
 \gamma_G v_k\sum_{\omega\in E_f}
 \sum_{y:(y,\omega)\in\Lambda_k}
 K_{\rm diag}((x,\eta),(y,\omega))f(y,\omega).
\]
For each
$\omega\in E_f$, extend $f(\mathord\cdot,\omega)$ by zero from $Y$ to
$\mathfrak t$.  Because its support is compactly contained in $Y$, the
extension belongs to $C_c^\infty(\mathfrak t)$.  Choose a compact set
$D_f\Subset Y$ whose interior contains the supports of all these fibers, and
set, for $y\in\mathfrak t$,
\[
 h_{x,\eta,\omega}(y)
 =K_{\rm diag}((x,\eta),(y,\omega))f(y,\omega).
\]
The $y$-gradients of these functions are bounded uniformly for
$(x,\eta)\in K\times E$ and $\omega\in E_f$.  Let
$D_P=\sup_{q\in F_P}|q|$.  Comparing the value at the anchor of each
translated cell with its integral over that cell gives
\begin{equation}
 \left|v_k\sum_{y\in r_k^{-1}(P+s)}h_{x,\eta,\omega}(y)
       -\int_{\mathfrak t}h_{x,\eta,\omega}(y)\,dy\right|
 \le |(D_f)_{2D_P/r_k}|\,
      \frac{D_P}{r_k}\,
      \sup_{y\in\mathfrak t}|\nabla_yh_{x,\eta,\omega}(y)|,
 \label{eq:uniform-translated-Riemann-bound}
\end{equation}
where $(D_f)_r=\{y:\operatorname{dist}(y,D_f)<r\}$.  Only cells meeting
the support of $h_{x,\eta,\omega}$ contribute; their union lies in
$(D_f)_{2D_P/r_k}$.  All volumes and integrals use the fixed geometric
Lebesgue measure.  The estimate is independent of the translate $s$.
For the mesh with charge $\omega$ one
takes $s=\rho+\omega/2$; because
$D_f\Subset Y$ and $E_f$ is finite, the admitted mesh and the full translated
lattice agree on $D_f$ for all sufficiently large $k$.  Thus
Eq.~\eqref{eq:uniform-translated-Riemann-bound} proves the required uniform
convergence to Eq.~\eqref{eq:BF-S-operator}.  Proposition
\ref{prop:off-diagonal-decay} controls the remaining Weyl pairs uniformly.
The translated-lattice count
$v_k\#(\Lambda_k\cap(K\times E))=O_{K,E}(1)$ then converts this uniform
pointwise convergence into Eq.~\eqref{eq:local-S-convergence}.
\end{proof}

\subsection[No escape of mass]{No escape of mass\protect\footnote{%
Here ``mass'' refers to the squared $L^2$ norm of a wavefunction
(probability for a normalized state), not to physical mass.
For each fixed input wave packet, no nonzero amount of probability
can accumulate at the boundary $\partial Y$ by concentrating
in an increasingly thin boundary layer as $k\to\infty$.
Likewise, no nonzero amount of probability can escape
to arbitrarily large charges.}}
\label{subsec:no-escape}

To pass from Proposition~\ref{prop:local-S-convergence} to a global $L^2$
estimate, we must control samples near $\partial Y$ and in the large-charge
region.  We first check that $\mathcal S_0f$ belongs to the
charge-Schwartz class whenever $f\in C_c^\infty(Y\times P)$.
Only finitely many input charges $\omega$
occur.  Differentiating Eq.~\eqref{eq:BF-S-operator} in $x$ introduces powers
of $w^{-1}\omega$, and these are uniformly bounded on that finite set.  To
estimate the output charge, let $\Delta_y$ be the Euclidean Laplacian for
the geometric inner product.  Its Weyl invariance gives
\[
 (1-\Delta_y)^N e^{-2\pi i(w\eta,y)}
 =(1+4\pi^2|\eta|^2)^N e^{-2\pi i(w\eta,y)}.
\]
Integrating by parts moves $(1-\Delta_y)^N$ onto
$f(\mathord\cdot,\omega)$.  There are no boundary terms because every input
fiber is compactly supported in $Y$.  The resulting integral is bounded
uniformly in $x\in\overline Y$, so $(\mathcal S_0f)(x,\eta)=
 O_N((1+|\eta|)^{-2N})$ uniformly in $x$; the same argument applies after any number of
$x$-derivatives.  This is the usual Fourier-transform rapid-decay argument
\cite[Chs.~5--6]{Stein:2003}.  Hence
\begin{equation}
 f\in C_c^\infty(Y\times P)\quad\Longrightarrow\quad
 \mathcal S_0f\in\mathcal A(Y\times P).
 \label{eq:S0-Schwartz}
\end{equation}

Weighted sampling must preserve the norm of a physical wavepacket as the
flux mesh is refined.  There are two potential difficulties: integrability
removes a charge-dependent strip of fluxes near the walls, and the number of
allowed charges increases with $k$.  The lemma shows that, at fixed charge,
the removed strip has vanishing volume.  A lattice count uniform in the
charge then controls the sum of the rapidly decreasing charge tails.  This
is why the charge-Schwartz class is useful after applying $\mathcal S_0$,
which generally turns a state with finite charge support into one with
infinitely many charges.
\begin{lemma}[Global sampling and charge-tail control]
\label{lem:global-sampling}
For $\eta\in P$ write
\[
 L_{k,\eta}:=\{x\in Y:(x,\eta)\in\Lambda_k\}.
\]
There is a constant $C_{\mathrm{lat}}$, independent of $k$ and $\eta$, such that
\begin{equation}
 v_k\,\#L_{k,\eta}\le C_{\mathrm{lat}} .
 \label{eq:uniform-lattice-count}
\end{equation}
For every fixed $\eta\in P$ and every $\phi\in C(\overline Y)$,
\begin{equation}
 v_k\sum_{x\in L_{k,\eta}}\phi(x)
 \longrightarrow\int_Y\phi(x)\,dx.
 \label{eq:fixed-charge-riemann}
\end{equation}
Moreover, if $g\in\mathcal A(Y\times P)$ and
$E_M:=\{\eta\in P:|\eta|\le M\}$, then
\begin{equation}
 \sup_k v_k\sum_{\eta\notin E_M}
        \sum_{x\in L_{k,\eta}}|g(x,\eta)|^2
 \le C_{\mathrm{lat}}\sum_{\eta\notin E_M}
        \sup_{x\in Y}|g(x,\eta)|^2
 \xrightarrow[M\to\infty]{}0.
 \label{eq:uniform-charge-tail}
\end{equation}
Consequently
\begin{equation}
 \|J_kg\|_{\mathcal H_k}^2
 =v_k\sum_{z\in\Lambda_k}|g(z)|^2
 \longrightarrow \|g\|_{\mathcal H_0}^2.
 \label{eq:global-sampling}
\end{equation}
The analogous Riemann-sum convergence holds on $K\times E$ for finite
$E\subset P$ whenever $K\Subset Y$ has Lebesgue-null boundary.
\end{lemma}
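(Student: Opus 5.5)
The plan is to treat the four assertions in order. Each follows from elementary lattice counting together with the explicit description of the fixed-charge slice in Eq.~\eqref{eq:fixed-charge-admitted-mesh}.

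\emph{Uniform count.} For the bound \eqref{eq:uniform-lattice-count}, observe that $L_{k,\eta}$ is contained in $\overline Y\cap r_k^{-1}(P+\rho+\eta/2)$, a translate of the mesh $r_k^{-1}P$ intersected with the bounded alcove. The scaled cells $x+r_k^{-1}F_P$ for distinct $x$ in this translate are disjoint. For $r_k\ge1$ they all lie in the fixed neighborhood $(\overline Y)_{D_P}$, with $D_P=\sup_{q\in F_P}|q|$. Each cell has volume $v_k$, so $v_k\#L_{k,\eta}\le|(\overline Y)_{D_P}|=:C_{\mathrm{lat}}$. This bound is independent of $\eta$ and of the translate.

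\emph{Fixed-charge Riemann sums.} For \eqref{eq:fixed-charge-riemann}, fix $\eta$ and $\varepsilon>0$. Choose $K\Subset Y$ with Lebesgue-null boundary and $|Y\setminus K|<\varepsilon$; a slightly shrunken homothetic copy of the alcove will do. By \eqref{eq:fixed-charge-local-full-mesh}, for large $k$ the set $L_{k,\eta}\cap K$ equals the full translated mesh intersected with $K$. The standard Riemann-sum argument then applies: uniform continuity of $\phi$ on $\overline Y$, and the fact that cells meeting $\partial K$ have total volume $o(1)$ because $\partial K$ is null and compact. Together these give $v_k\sum_{L_{k,\eta}\cap K}\phi\to\int_K\phi$. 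The remaining points lie in $\overline Y\setminus K$. Their cells sit in an $O(r_k^{-1})$-neighborhood of $\overline Y\setminus \operatorname{int}K$, which has volume at most $\varepsilon+o(1)$, so their contribution is bounded by $\|\phi\|_\infty(\varepsilon+o(1))$. The same argument over finitely many charges gives the final $K\times E$ statement.

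\emph{Charge tails and the norm limit.} For \eqref{eq:uniform-charge-tail}, bound each fiber sum by $\#L_{k,\eta}\sup_x|g(x,\eta)|^2$ and apply the uniform count. The charge-Schwartz bound \eqref{eq:charge-Schwartz} with $N>\ell/2$ gives $\sup_x|g(x,\eta)|^2\le C(1+|\eta|)^{-2N}$, which is summable over the rank-$\ell$ lattice $P$; its tail outside $E_M$ therefore tends to zero. Finally, for \eqref{eq:global-sampling} we split $v_k\sum_{\Lambda_k}|g|^2$ into the charges in $E_M$ and those outside. The finitely many fibers in $E_M$ converge by \eqref{eq:fixed-charge-riemann} applied to $\phi=|g(\cdot,\eta)|^2\in C(\overline Y)$. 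The complement is small uniformly in $k$ by \eqref{eq:uniform-charge-tail}, and the corresponding tail of $\|g\|_{\mathcal H_0}^2$ is small by the same summable bound. A standard $\varepsilon/3$ argument then finishes the proof.

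The main obstacle is the wall region in the Riemann sum. The integrability conditions delete a charge-dependent strip of mesh points near $\partial Y$, so the exact mesh description \eqref{eq:fixed-charge-local-full-mesh} is unavailable there. The plan therefore avoids describing the mesh near the walls at all: it bounds that contribution purely by the cell-packing volume estimate, which is the same one that yields $C_{\mathrm{lat}}$.
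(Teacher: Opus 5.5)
Your proof is correct. For the uniform count \eqref{eq:uniform-lattice-count}, the charge-tail bound \eqref{eq:uniform-charge-tail}, and the final $\varepsilon/3$ argument for \eqref{eq:global-sampling}, it coincides with the paper's.

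The step that differs is the fixed-charge Riemann sum \eqref{eq:fixed-charge-riemann}, specifically how the wall region is handled.

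\emph{The paper's route.} It first proves convergence of the Riemann sum over the full translated lattice $r_k^{-1}(P+\rho+\eta/2)$ on all of $Y$. Full cells are controlled by the modulus of continuity $\omega_\phi(D/r_k)$, and boundary cells by a finite-facet slab bound of order $r_k^{-1}$. It then removes the non-integrable anchors explicitly. The dominance and level conditions are rewritten as the charge-dependent inequalities \eqref{eq:charge-dependent-admissibility}. These place every deleted anchor in a wall strip of width $C(1+|\eta|)/r_k$, as in \eqref{eq:charge-boundary-strip}, so the deleted points carry total weight $O_\eta(r_k^{-1})$.

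\emph{Your route.} You exhaust $Y$ by a Jordan-measurable compact $K$ and invoke the already established identity \eqref{eq:fixed-charge-local-full-mesh} inside $K$. Everything outside $K$ is then absorbed by the same cell-packing bound that produces $C_{\mathrm{lat}}$. For that bound you use continuity of Lebesgue measure on the shrinking neighbourhoods of the compact set $\overline Y\setminus\operatorname{int}K$, whose volume is $|Y\setminus K|<\varepsilon$ because $\partial K$ and $\partial Y$ are null.

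\emph{Comparison.} Your argument is softer and more economical. It never needs the explicit shape of the deleted strip or the slab estimate, and it gives the $K\times E$ clause directly. You leave implicit the trivial term $|\int_{Y\setminus K}\phi|\le\|\phi\|_\infty\varepsilon$, which is harmless. The paper's route instead gives an explicit rate in $r_k^{-1}$ with explicit $\eta$-dependence, up to the modulus of continuity of $\phi$. More importantly, it produces the strip bound \eqref{eq:charge-boundary-strip}, which is reused in Lemma~\ref{lem:global-cell-approximation} to control the uncovered region \eqref{eq:uncovered-cell-strip}. With your proof, that estimate would have to be established separately at that point.
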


\begin{proof}

Solving Eq.~\eqref{eq:regular-lattice} for $a,b$ shows that, at fixed charge,
the full translated lattice and its admitted subset are, respectively,

\begin{equation}
 \widetilde L_{k,\eta}=r_k^{-1}(P+\rho+\eta/2),
 \qquad
 L_{k,\eta}=\widetilde L_{k,\eta}\cap Y_{k,\eta},
 \label{eq:fixed-charge-lattice}
\end{equation}
where the simultaneous dominance and level conditions on $a$ and $b$ are
equivalent to

\begin{align}
 r_k\langle x,\alpha_j^\vee\rangle
 &\ge 1+\tfrac12|\langle\eta,\alpha_j^\vee\rangle|,
 &&1\le j\le \ell,\notag\\
 r_k(1-\langle x,\theta^\vee\rangle)
 &\ge 1+\tfrac12|\langle\eta,\theta^\vee\rangle| .
 \label{eq:charge-dependent-admissibility}
\end{align}
Here we used $\langle\rho,\alpha_j^\vee\rangle=1$ and
$\langle\rho,\theta^\vee\rangle=h^\vee-1$ in the
long-root-length-two convention.
Since the alcove has finitely many affine walls, there is a group-dependent
constant $C$ such that
\begin{equation}
 Y\setminus Y_{k,\eta}\subset
 \left\{x\in Y:\operatorname{dist}(x,\partial Y)
 \le C\frac{1+|\eta|}{r_k}\right\}.
 \label{eq:charge-boundary-strip}
\end{equation}
In particular, for fixed $\eta$ the deleted points lie within an
$O_\eta(r_k^{-1})$ neighborhood of $\partial Y$.

Using the fundamental parallelotope $F_P$ fixed before Theorem~\ref{thm:strong-modular}, set
\[
  D:=\sup_{q\in F_P}|q|.
\]
The cells $x+r_k^{-1}F_P$ based at the full translated lattice in
Eq.~\eqref{eq:fixed-charge-lattice} are disjoint up to null boundaries and
have volume $v_k$.  Those based at points of $L_{k,\eta}$ lie in
$Y+r_k^{-1}F_P\subset\{y:\operatorname{dist}(y,Y)\leq D\}$, a bounded set independent of $k$ and $\eta$.  Hence their total volume is bounded
uniformly in $k$ and in the translate, proving eq.~\eqref{eq:uniform-lattice-count}.

For fixed $\eta$, the width in Eq.~\eqref{eq:charge-boundary-strip} tends to
zero.  The translated-lattice Riemann sum over $Y$ converges uniformly in its
translate.  One can prove this without a choice of nested meshes.  On a
full cell the difference between the sampled value and $\phi$ is bounded
by the modulus of continuity
\[
 \omega_\phi(D/r_k)
 :=\sup_{\substack{x,y\in\overline Y\\|x-y|\leq D/r_k}}
       |\phi(x)-\phi(y)|,
\]
which tends to zero.  Cells meeting the boundary occupy a two-sided strip
of thickness $O(r_k^{-1})$.  Its volume is $O(r_k^{-1})$ because the alcove
has finitely many flat faces.  Their contribution is bounded by
$O(r_k^{-1})\|\phi\|_\infty$.  Summing the full-cell and boundary-cell
bounds proves the asserted ordinary Riemann-sum convergence.
Every deleted anchor lies in the strip of
Eq.~\eqref{eq:charge-boundary-strip}.  Its full cell is therefore contained
in the two-sided $\bigl(C(1+|\eta|)+D\bigr)/r_k$ neighborhood of
$\partial Y$.  These full cells are disjoint up to null boundaries and have
volume $v_k$.  The finite-facet slab estimate for the enlarged strip gives
\[
 v_k\#\bigl(\widetilde L_{k,\eta}
       \cap(Y\setminus Y_{k,\eta})\bigr)
 =O_\eta(r_k^{-1}).
\]
Deleting these points therefore changes the Riemann sum by $o(1)$.  This proves
Eq.~\eqref{eq:fixed-charge-riemann}.  Equation
\eqref{eq:uniform-lattice-count} then gives the first inequality in
Eq.~\eqref{eq:uniform-charge-tail}.  The charge-Schwartz condition makes its
right-hand side summable on the lattice $P$: explicitly, choose
$N>\ell/2$ in Eq.~\eqref{eq:charge-Schwartz}.  Then, for a suitable $C_N$,
\[
 \sup_{x\in Y}|g(x,\eta)|^2
 \le C_N(1+|\eta|)^{-2N},
\]
and the right-hand side is summable because $P$ has rank $\ell$.  Its tail
therefore tends to zero.
The continuous charge tail satisfies the analogous estimate
\[
 \sum_{\eta\notin E_M}\int_Y|g(x,\eta)|^2\,dx
 \le |Y|\sum_{\eta\notin E_M}
       \sup_{x\in\overline Y}|g(x,\eta)|^2
 \xrightarrow[M\to\infty]{}0.
\]
Apply Eq.~\eqref{eq:fixed-charge-riemann} to
$\phi(x)=|g(x,\eta)|^2$ for $\eta\in E_M$, and then let $M\to\infty$ using
Eq.~\eqref{eq:uniform-charge-tail} and the preceding continuous-tail
estimate.  This proves
Eq.~\eqref{eq:global-sampling}.  The proof on $K\times E$ is the same
finite-charge argument; the null-boundary hypothesis ensures convergence of
the Riemann sums for $\mathbf 1_K|g(\mathord\cdot,\eta)|^2$.
\end{proof}

Let
\[
 r_Y=\max_{x\in\overline Y}\operatorname{dist}(x,\partial Y)>0.
\]
For $0<\delta<r_Y$ we use the compact polytope
\begin{equation}
 K_\delta=\{x\in Y:\operatorname{dist}(x,\partial Y)\ge\delta\}.
 \label{eq:interior-alcove-window}
\end{equation}
It is nonempty and its boundary has Lebesgue measure zero.  Moreover
$K_\delta\uparrow Y$ as $\delta\downarrow0$.
For $M>0$ put $E_M=\{\eta\in P:|\eta|\le M\}$, and let
$\Pi_{k,\delta,M}$ be the coordinate projection in $\mathcal H_k$ onto
$\Lambda_k\cap(K_\delta\times E_M)$.

Local convergence alone would allow a finite amount of probability to
accumulate near a singular conjugacy class or at charges tending to
infinity.  Exact finite-level unitarity removes this possibility for the
fixed wavepackets considered here.  The limiting Fourier transform has the
same total norm, and a sufficiently large compact window already captures
almost all of that norm.  Once the finite-level amplitudes agree inside the
window, probability conservation leaves only a small norm outside it.
This argument does not assign states to the singular strata; it proves that
they do not absorb a finite norm from a fixed regular wavepacket in this
limit.
\begin{theorem}[No escape of mass]
\label{thm:no-escape}
For every $f\in C_c^\infty(Y\times P)$,
\begin{equation}
 \|\mathcal S_kJ_kf-J_k\mathcal S_0f\|_{\mathcal H_k}
 \longrightarrow0 .
 \label{eq:global-sampled-S}
\end{equation}
More explicitly, for every $\varepsilon>0$ there are $\delta>0$, $M<\infty$,
and $k_0$ such that, for $k\ge k_0$,
\begin{equation}
 \|(1-\Pi_{k,\delta,M})\mathcal S_kJ_kf\|
 +\|(1-\Pi_{k,\delta,M})J_k\mathcal S_0f\|<\varepsilon.
 \label{eq:explicit-no-escape}
\end{equation}
Here $f$ is fixed before the window $(\delta,M)$ is chosen.
\end{theorem}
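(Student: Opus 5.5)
The plan is the standard ``local convergence plus conservation of norm'' argument.  Fix $f\in C_c^\infty(Y\times P)$ and write $a_k=\mathcal S_kJ_kf$ and $b_k=J_k\mathcal S_0f$ in $\mathcal H_k$.  Since $\mathcal S_k$ is unitary, $\|a_k\|=\|J_kf\|$.  By Lemma~\ref{lem:global-sampling} (applied to $f$, which lies in the charge-Schwartz class), $\|J_kf\|\to\|f\|_{\mathcal H_0}$.  By \eqref{eq:S0-Schwartz} and the same lemma, $\|b_k\|\to\|\mathcal S_0f\|_{\mathcal H_0}$.  Unitarity of $\mathcal S_0$ (Proposition~\ref{prop:regular-Fourier-unitarity}) identifies the two limits.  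Hence $\|a_k\|$ and $\|b_k\|$ converge to the same number $\|f\|$.

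Given $\varepsilon>0$, I first choose the window using only $g:=\mathcal S_0f$.  Because $K_\delta\uparrow Y$ and $\sum_\eta\int_Y|g|^2<\infty$, there are $\delta,M$ with
\[
\sum_{\eta\in E_M}\int_{K_\delta}|g(x,\eta)|^2\,dx>\|f\|^2-\varepsilon^2 .
\]
The set $K_\delta$ has null boundary and $E_M$ is finite.  The final clause of Lemma~\ref{lem:global-sampling} then gives
\[
\|\Pi_{k,\delta,M}b_k\|^2\to\sum_{\eta\in E_M}\int_{K_\delta}|g|^2 ,
\]
so $\|(1-\Pi_{k,\delta,M})b_k\|^2=\|b_k\|^2-\|\Pi b_k\|^2<2\varepsilon^2$ for large $k$.

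Next, Proposition~\ref{prop:local-S-convergence} with $K=K_\delta$ and $E=E_M$, multiplied by the factor $v_k$ coming from $J_k=\sqrt{v_k}Q_k$, states exactly that $\|\Pi_{k,\delta,M}(a_k-b_k)\|\to0$.  Therefore $\|\Pi a_k\|^2$ and $\|\Pi b_k\|^2$ have the same limit, and
\[
\|(1-\Pi)a_k\|^2=\|a_k\|^2-\|\Pi a_k\|^2
\]
is eventually below $2\varepsilon^2$ as well.  This is the no-escape step: unitarity of $\mathcal S_k$ forbids $a_k$ from depositing norm near the walls or at large charge beyond what $b_k$ does.  Rescaling $\varepsilon$ gives \eqref{eq:explicit-no-escape}.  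Finally,
\[
\|a_k-b_k\|\le\|\Pi(a_k-b_k)\|+\|(1-\Pi)a_k\|+\|(1-\Pi)b_k\| ,
\]
which yields \eqref{eq:global-sampled-S}.

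The real content sits in the inputs: Proposition~\ref{prop:off-diagonal-decay} inside the local convergence, and the uniform charge-tail bound \eqref{eq:uniform-charge-tail} needed for $\|b_k\|\to\|\mathcal S_0f\|$.  Both are already established.  The main point to check is that the window convergence for $b_k$ follows from the finite-charge Riemann-sum clause of Lemma~\ref{lem:global-sampling}.  That clause needs continuity of $|g(\cdot,\eta)|^2$ on $\overline Y$, which holds because $g$ lies in the charge-Schwartz class.  It also needs the admitted mesh to agree with the full mesh on $K_\delta$ for large $k$, which is \eqref{eq:fixed-charge-local-full-mesh}.
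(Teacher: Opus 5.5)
Your proposal is correct and follows the paper's proof essentially step for step: the norm limits $\|a_k\|,\|b_k\|\to\|f\|$ from unitarity of $\mathcal S_k$ and $\mathcal S_0$ together with Lemma~\ref{lem:global-sampling}, a window chosen from $\mathcal S_0f$ alone, the localized Riemann-sum clause and Proposition~\ref{prop:local-S-convergence} on $\Pi_{k,\delta,M}$, Pythagoras for the complementary pieces, and a final triangle inequality. The only difference is cosmetic $\varepsilon$-bookkeeping: you bound the squared tail mass of $\mathcal S_0f$ by $\varepsilon^2$, whereas the paper bounds the tail norm by $\varepsilon/4$.
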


\begin{proof}

Set $a_k=\mathcal S_kJ_kf$ and $b_k=J_k\mathcal S_0f$.  Exact unitarity of
the finite doubled matrix, \eqref{eq:global-sampling}, and
\eqref{eq:S0-Schwartz}, together with norm preservation by $\mathcal S_0$
from Proposition~\ref{prop:regular-Fourier-unitarity}, give
\[
 \|a_k\|\to\|f\|,\qquad
 \|b_k\|\to\|\mathcal S_0f\|=\|f\|.
\]
Given $\varepsilon>0$, choose $\delta>0$ and $M<\infty$ so that
the $\mathcal H_0$-norm of $\mathcal S_0f$ outside
$\Omega_{\delta,M}:=K_\delta\times E_M$ is $<\varepsilon/4$.  Put
$\Pi_k=\Pi_{k,\delta,M}$.
Proposition~\ref{prop:local-S-convergence}
gives $\|\Pi_k(a_k-b_k)\|\to0$, while the localized part of
Lemma~\ref{lem:global-sampling} gives
$\|\Pi_kb_k\|^2\to\|1_{\Omega_{\delta,M}}\mathcal S_0f\|^2$.  Therefore
$\|\Pi_ka_k\|\to\|1_{\Omega_{\delta,M}}\mathcal S_0f\|$ as well.  Since
$\|a_k\|,\|b_k\|\to\|\mathcal S_0f\|$, Pythagoras gives the explicit limits
\begin{align}
 \|(1-\Pi_k)a_k\|^2
 &=\|a_k\|^2-\|\Pi_ka_k\|^2
   \longrightarrow
   \|1_{\Omega_{\delta,M}^c}\mathcal S_0f\|^2,\notag\\
 \|(1-\Pi_k)b_k\|^2
 &=\|b_k\|^2-\|\Pi_kb_k\|^2
   \longrightarrow
   \|1_{\Omega_{\delta,M}^c}\mathcal S_0f\|^2.
 \label{eq:no-escape-pythagoras}
\end{align}
The limiting norm is $<\varepsilon/4$.  For all sufficiently
large $k$ their sum is therefore $<\varepsilon$, which proves
\eqref{eq:explicit-no-escape}.  Together with
$\|\Pi_k(a_k-b_k)\|\to0$, the decomposition into $\Pi_k$ and
$1-\Pi_k$ proves \eqref{eq:global-sampled-S}.
\end{proof}

\subsection{A common Hilbert space and the modular group}
\label{subsec:common-hilbert-modular}

The cell isometry in Eq.~\eqref{eq:lattice-cell-isometry} replaces a finite
basis vector by a normalized wavefunction localized in its flux cell.
Its adjoint averages a continuum wavefunction over that cell.  Sampling at
the cell anchor and averaging should agree when the wavefunction varies
slowly on the scale $r_k^{-1}$.  The next lemma makes this statement precise,
including the cells intersected by the walls.  It also proves that every
fixed normalizable $BF$ state can be approximated by these finite-level
states: $\mathsf P_k\to1$ strongly means
$\|\mathsf P_k\psi-\psi\|\to0$ for every $\psi\in\mathcal H_0$.

\begin{lemma}[Global cell approximation]
\label{lem:global-cell-approximation}
With $I_k$ defined by Eq.~\eqref{eq:lattice-cell-isometry},
\begin{equation}
 \mathsf P_k\xrightarrow{\rm SOT}1,\qquad
 \|I_k^*g-J_kg\|\to0,\qquad
 \|I_kJ_kg-g\|\to0
 \quad(g\in\mathcal A(Y\times P)).
 \label{eq:cell-approximation}
\end{equation}
\end{lemma}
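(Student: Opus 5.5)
I will prove the three statements in the order: the two sampled-versus-averaged estimates for $g\in\mathcal A(Y\times P)$ first, and then $\mathsf P_k\to1$ by a density argument. Everything is computed fiberwise in the charge $\eta$. On each fiber I split cells into \emph{full cells}, those with $x+r_k^{-1}F_P\subset Y$, and \emph{truncated cells}, those meeting $\partial Y$. Charge tails are handled by the uniform lattice count \eqref{eq:uniform-lattice-count} of Lemma~\ref{lem:global-sampling}.

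For $\|I_k^*g-J_kg\|\to0$, fix $z=(x,\eta)\in\Lambda_k$. On a full cell $|C_{k,z}|=v_k$, so
\[
 (I_k^*g)(z)-(J_kg)(z)=v_k^{-1/2}\int_{C_{k,z}}\bigl(g(y,\eta)-g(x,\eta)\bigr)\,dy .
\]
This is bounded by $v_k^{1/2}(D/r_k)\sup|\nabla_xg(\cdot,\eta)|$. Squaring and summing over the at most $C_{\mathrm{lat}}/v_k$ points per fiber gives $O(r_k^{-2})\sup_x|\nabla g(\cdot,\eta)|^2$. This is summable in $\eta$ by \eqref{eq:charge-Schwartz}.

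On a truncated cell I bound crudely, using Cauchy--Schwarz for the averaged term:
\[
 |(I_k^*g)(z)|^2+|(J_kg)(z)|^2\le 2v_k\sup_x|g(x,\eta)|^2 .
\]
The anchors of truncated cells lie within $D/r_k$ of $\partial Y$. By the finite-facet slab estimate from the proof of Lemma~\ref{lem:global-sampling}, their number times $v_k$ is $O(r_k^{-1})$, uniformly in $\eta$. Summing over $\eta$ with the charge-Schwartz decay gives $O(r_k^{-1})$ overall.

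For $\|I_kJ_kg-g\|\to0$, I write $I_kJ_kg$ as the piecewise function equal to $\sqrt{v_k/|C_{k,z}|}\,g(z)$ on $C_{k,z}$. I then compare it with $g$ in three regions, working on fibers $|\eta|\le M$ and using \eqref{eq:uniform-charge-tail} for the tail:
\begin{itemize}
\item On full cells the error is at most $(D/r_k)\sup|\nabla g|$ pointwise.
\item On truncated cells, $\sqrt{v_k/|C_{k,z}|}$ may be large. However, the squared $L^2$ contribution of such a cell is exactly $v_k|g(z)|^2$, plus $\int_{C_{k,z}}|g|^2$, so the total is again $O(r_k^{-1})$ by the slab count.
\item On the uncovered set $Y\setminus\bigcup_z C_{k,z}$, the error is $\int|g|^2$ over that set. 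For fixed $\eta$ this set lies in the strip \eqref{eq:charge-boundary-strip} enlarged by $D/r_k$, so its volume is $O_\eta(r_k^{-1})$.
\end{itemize}
Letting $k\to\infty$ and then $M\to\infty$ gives the claim.

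For the strong limit, $\|\mathsf P_k\|\le1$, so it suffices to check $\mathsf P_kg\to g$ on the dense class $\mathcal A(Y\times P)$. Using $\mathsf P_k=I_kI_k^*$ and that $I_k$ is an isometry,
\[
 \|\mathsf P_kg-g\|\le\|I_k(I_k^*g-J_kg)\|+\|I_kJ_kg-g\|\to0 .
\]
This uses the two estimates already proved. An $\varepsilon/3$ argument then extends the convergence to all of $\mathcal H_0$.

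I expect the main obstacle to be the uniformity of the uncovered region in $\eta$, since the deleted strip widens like $(1+|\eta|)/r_k$. I will avoid needing uniformity by truncating to $E_M$ first. Outside $E_M$ I bound $\|g-I_kJ_kg\|^2\le 2\|g1_{E_M^c}\|^2+2\|J_kg1_{E_M^c}\|^2$, which is small uniformly in $k$ by \eqref{eq:uniform-charge-tail}.
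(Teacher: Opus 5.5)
Your proposal is correct and follows essentially the same route as the paper: a full/truncated cell decomposition with a gradient bound on full cells, a Cauchy--Schwarz bound plus the facet-slab count on truncated cells, and the widened uncovered strip handled fiberwise after truncating to $E_M$. As in the paper, you control the charge tail by \eqref{eq:uniform-charge-tail} and finish by density with $\|\mathsf P_k\|\le1$. The only cosmetic difference is that for $\|I_k^*g-J_kg\|$ you sum the $\eta$-uniform per-fiber bounds directly against the charge-Schwartz decay instead of truncating there as well, which is equally valid.
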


\begin{proof}
We first prove the two geometric estimates used below.  Since $Y$ is a
bounded polytope with finitely many facets, there is a constant $C_Y$ such
that, for every $s>0$,
\begin{equation}
 \left|\{y\in Y:\operatorname{dist}(y,\partial Y)<s\}\right|
 \le C_Y\min\{1,s\}.
 \label{eq:polytope-boundary-strip-volume}
\end{equation}
Indeed, after increasing the constant for $s\ge1$, the set on the left is
contained in the union of finitely many slabs of thickness $s$ adjacent to
the facets of $Y$.  Orthogonal projection to a facet and Fubini's theorem
bound the volume of each slab by $s$ times a fixed bounded
$(\ell-1)$-dimensional cross-sectional volume.

Let $D=\sup\{|q|:q\in F_P\}$.  For fixed $\eta$, the cells of the full
translated lattice $\widetilde L_{k,\eta}$
tile $\mathfrak t$ up to null boundaries.  If $y\in Y$ does not belong to a
cell anchored at a point of $L_{k,\eta}$, let $q\in\widetilde L_{k,\eta}$ be
the anchor of the full cell containing $y$.  Then $|y-q|\le D/r_k$.  If
$q\notin Y$, the point $y$ lies within $D/r_k$ of $\partial Y$.  If
$q\in Y\setminus Y_{k,\eta}$, Eq.~\eqref{eq:charge-boundary-strip} places
$q$, and hence $y$, in a boundary strip of width
$C'(1+|\eta|)/r_k$.  Applying
Eq.~\eqref{eq:polytope-boundary-strip-volume} proves
\begin{equation}
 \left|Y\setminus\bigcup_{x\in L_{k,\eta}}
   \bigl(Y\cap(x+r_k^{-1}F_P)\bigr)\right|
 \le C\min\!\left\{1,\frac{1+|\eta|}{r_k}\right\}.
 \label{eq:uncovered-cell-strip}
\end{equation}

For the second estimate, suppose that an admitted cell anchored at $x$ is
truncated, so that $|C_{k,(x,\eta)}|<v_k$.  Its full cell meets the complement
of $Y$.  Since $x\in Y$ and that cell has diameter at most $2D/r_k$, both its
anchor and every point of the cell lie in the $2D/r_k$-neighborhood of
$\partial Y$.  The full cells are pairwise disjoint up to null sets and each
has volume $v_k$.  Their union has volume $O(r_k^{-1})$ by the same
facet-slab argument.  Comparing volumes gives
\begin{equation}
 v_k\#\{x\in L_{k,\eta}:|C_{k,(x,\eta)}|<v_k\}
 \le \frac{C}{r_k},
 \label{eq:truncated-cell-count}
\end{equation}
with $C$ independent of the translated lattice and hence of $\eta$.

Now fix $\eta\in P$ and identify $h\in C^1(\overline Y)$ with
$h\delta_\eta\in\mathcal H_0$.
On a full cell in the $\eta$-summand, $I_k^*h$ is $\sqrt{v_k}$ times the cell average, whereas $J_kh$ is
$\sqrt{v_k}h(x)$.  If $C=x+r_k^{-1}F_P$ is full, Jensen's inequality and
$|y-x|\le D/r_k$ give
\[
 \left|v_k^{-1/2}\int_C h(y)\,dy-\sqrt{v_k}h(x)\right|^2
 \le \int_C|h(y)-h(x)|^2\,dy
 \le \frac{D^2}{r_k^2}\|\nabla h\|_\infty^2|C|.
\]
Summing this inequality and using
Eq.~\eqref{eq:uniform-lattice-count} gives a total full-cell error
$O(r_k^{-2})\|\nabla h\|_\infty^2$.  On truncated cells, Cauchy--Schwarz
gives
\[
 \left||C|^{-1/2}\int_C h-\sqrt{v_k}h(x)\right|^2
 \le 4v_k\|h\|_\infty^2.
\]
After summing over the truncated cells,
Eq.~\eqref{eq:truncated-cell-count} gives
\[
 \sum_{\substack{C\ {\rm truncated}}}
 \left||C|^{-1/2}\int_C h-\sqrt{v_k}h(x_C)\right|^2
 \le \frac{C}{r_k}\|h\|_\infty^2.
\]
Here $x_C$ denotes the anchor of $C$.  Together with the full-cell estimate,
this proves
\[
 \|I_k^*h-J_kh\|_{\ell^2(L_{k,\eta})}\longrightarrow0.
\]
The same decomposition, now also using
Eq.~\eqref{eq:uncovered-cell-strip}, gives
\begin{equation}
 \|I_kJ_kh-h\|_{L^2(Y)}\longrightarrow0.
 \label{eq:fixed-charge-cell-sampling}
\end{equation}
For the truncated-cell part of the second statement, note that the
squared norm of $I_kJ_kh$ on such a cell is exactly
$v_k|h(x)|^2$, independently of its volume.  Its total contribution is again
$O(r_k^{-1})\|h\|_\infty^2$, while the $L^2$ mass of $h$ on the union of
the truncated and uncovered regions tends to zero by the two strip estimates.

For $g\in\mathcal A(Y\times P)$, let $g_M=1_{E_M}g$.  The preceding
fixed-charge estimates, summed over the finite set $E_M$, apply to $g_M$.
For the remaining tail $g^{>M}=g-g_M$, contractivity of $I_k^*$ and isometry
of $I_k$ give
\begin{align*}
 \|I_k^*g^{>M}-J_kg^{>M}\|
 &\leq \|g^{>M}\|+\|J_kg^{>M}\|,\\
 \|I_kJ_kg^{>M}-g^{>M}\|
 &\leq \|J_kg^{>M}\|+\|g^{>M}\|.
\end{align*}
Here $\|g^{>M}\|\to0$ as $M\to\infty$, while
Eq.~\eqref{eq:uniform-charge-tail} gives
$\sup_k\|J_kg^{>M}\|\to0$.  This proves the last two limits in
Eq.~\eqref{eq:cell-approximation}.  On the same dense class,
\[
 \|\mathsf P_kg-g\|
 \leq \|I_k^*g-J_kg\|+\|I_kJ_kg-g\|\longrightarrow0.
\]
To extend this statement beyond smooth functions, fix
$\psi\in\mathcal H_0$ and choose $g\in\mathcal A(Y\times P)$ with
$\|\psi-g\|<\varepsilon$.  Such approximants exist by first truncating the
charge sum, then approximating each of the finitely many $L^2(Y)$ functions
by a compactly supported smooth function.  Since every $\mathsf P_k$ is a
contraction,
\[
 \|\mathsf P_k\psi-\psi\|
 \leq 2\varepsilon+\|\mathsf P_kg-g\|.
\]
Taking $k\to\infty$ and then $\varepsilon\downarrow0$ proves the strong
limit on all of $\mathcal H_0$.
\end{proof}

\begin{proof}[Proof of Theorem~\ref{thm:strong-modular}]
The endpoint representation is supplied by
Proposition~\ref{prop:regular-Fourier-unitarity}, proved independently in
Appendix~\ref{app:endpoint-Fourier-unitarity}.
For the $S$-generator, put
\begin{equation}
 \mathcal S_k^\#=I_k\mathcal S_kI_k^*+(1-\mathsf P_k).
 \label{eq:embedded-S}
\end{equation}
Thus $\mathcal S_k^\#=\rho_k^\#(S)$.
For $f\in C_c^\infty(Y\times P)$, insert $J_kf$ and
$J_k\mathcal S_0f$.  The function $\mathcal S_0f$ belongs to
$\mathcal A(Y\times P)$ by Eq.~\eqref{eq:S0-Schwartz}, so the cell lemma
applies to both functions:
\[
 \|\mathcal S_k^\#f-\mathcal S_0f\|
 \le\|I_k^*f-J_kf\|
 +\|\mathcal S_kJ_kf-J_k\mathcal S_0f\|
 +\|I_kJ_k\mathcal S_0f-\mathcal S_0f\|
 +\|(1-\mathsf P_k)f\|.
\]
Equations~\eqref{eq:cell-approximation} and
\eqref{eq:global-sampled-S} make all four terms vanish, and density then gives
\begin{equation}
 \mathcal S_k^\#\xrightarrow{\rm SOT}\mathcal S_0.
 \label{eq:strong-S}
\end{equation}
Here the extension from the test space uses only uniform boundedness:
for an arbitrary $\psi$ and a test function $f$,
\[
 \|(\mathcal S_k^\#-\mathcal S_0)\psi\|
 \leq 2\|\psi-f\|+\|(\mathcal S_k^\#-\mathcal S_0)f\|.
\]
First choose $f$ close to $\psi$, then take $k$ large.  The factor two comes
from the unit norm of each operator and is independent of $k$.
The adjoints converge strongly as well.  Indeed, if $U_k$ and $U$ are
unitary and $U_k\to U$ strongly, then, for every $\psi$,
\begin{equation}
 \|(U_k^*-U^*)\psi\|
 =\|\psi-U_kU^*\psi\|\longrightarrow0.
 \label{eq:unitary-strong-adjoint-identity}
\end{equation}
Applying this identity to $\mathcal S_k^\#$ and $\mathcal S_0$ gives
strong-$*$ convergence.

For the twist, the WZW formula
\cite[Eq.~(3.3b)]{Gannon:1995uv} and cancellation of the two
vacuum phases give an exact identity.  Substitution of
Eq.~\eqref{eq:classical-regular-lattice} gives
$ (a,a+2\rho)-(b,b+2\rho)=2r_k(\eta,x)$, and hence the doubled twist
$\theta_{(a,b)}:=\theta_a\theta_b^{-1}$ of eq.~\eqref{eq:finite-doubled-twist} is
\begin{align}
 \theta_{(a,b)}
 &=\exp\!\left[
 \frac{2\pi i}{2r_k}
 \bigl((a,a+2\rho)-(b,b+2\rho)\bigr)\right]\notag\\
 &=e^{2\pi i(\eta,x)}=\chi_\eta(g_x).
 \label{eq:twist-flux-charge-pairing}
\end{align}
Thus the endpoint twist is the operator already defined in
Eq.~\eqref{eq:BF-T-operator}, and the exact sampling identity is
$\mathcal T_kJ_k=J_k\mathcal T_0$.
Both $\mathcal T_0$ and $\mathcal T_0^*$ preserve
$C_c^\infty(Y\times P)$.  Put
$\mathcal T_k^\#=I_k\mathcal T_kI_k^*+(1-\mathsf P_k)$; then
$\mathcal T_k^\#=\rho_k^\#(T)$.  For
$f\in C_c^\infty(Y\times P)$, exact intertwining on sampled vectors gives
\begin{align}
 \|\mathcal T_k^\#f-\mathcal T_0f\|
 \le{}&\|I_k^*f-J_kf\|
       +\|I_kJ_k\mathcal T_0f-\mathcal T_0f\|
       +\|(1-\mathsf P_k)f\|.
 \label{eq:strong-twist-core-estimate}
\end{align}
All three terms tend to zero by
Lemma~\ref{lem:global-cell-approximation}.  The same calculation with
$\mathcal T_k^*$ and $\mathcal T_0^*$ is valid because
$\mathcal T_k^*J_k=J_k\mathcal T_0^*$.  Since the sampling core is dense and
all four operators are unitary, these two estimates extend by density and
prove $\mathcal T_k^\#\to\mathcal T_0$ in the strong-$*$ topology.

The generators $\mathcal S_k^\#$ and $\mathcal T_k^\#$, together with their
inverses, converge strongly to the corresponding operators for $\rho_0$.
Fix $\gamma\in SL(2,\mathbb Z)$ and choose a word in
$S^{\pm1},T^{\pm1}$ representing it.  For two factors the elementary bound
\[
 \|(A_kB_k-AB)\psi\|
 \leq\|A_k\|\,\|(B_k-B)\psi\|
     +\|(A_k-A)B\psi\|
\]
shows why uniformly bounded strongly convergent operators can be
multiplied.  Induction applies this estimate to the chosen finite word,
giving $\rho_k^\#(\gamma)\to\rho_0(\gamma)$.  The word is held fixed as
$k\to\infty$; the argument asserts no uniformity for word lengths growing
with the level.  Applying the same argument to
$\gamma^{-1}$ gives convergence of the adjoints.
This proves the pointwise strong-$*$ convergence in
Eq.~\eqref{eq:strong-modular}.  The assertion for admissible embeddings
follows from Lemma~\ref{lem:admissible-embedding-robustness}, proved below.
\end{proof}

The norm limit also controls any bounded measurement on the torus Hilbert
space.  For a bounded operator $B$ and $\psi\in\mathcal H_0$, set
$\psi_k=\rho_k^\#(\gamma)\psi$ and $\psi_0=\rho_0(\gamma)\psi$.  Then
\begin{equation}
 \bigl|\langle\psi_k,B\psi_k\rangle
       -\langle\psi_0,B\psi_0\rangle\bigr|
 \leq 2\|B\|\,\|\psi\|\,\|\psi_k-\psi_0\|\longrightarrow0.
 \label{eq:bounded-observable-convergence}
\end{equation}
Taking $B$ to be the projection onto a fixed set of fluxes and charges gives
convergence of its measurement probability.  Traces of modular operators,
unsmeared line insertions, and unbounded observables require additional
estimates; they do not follow from strong convergence alone.  In particular,
the finite vacuum $(a,b)=(0,0)$ has $x=\rho/r_k$ and approaches an alcove
wall.  Its embedded unit vector is not a fixed normalizable state of the
limiting space.  The theorem therefore does not by itself establish the
large-level limit of vacuum-to-vacuum surgery amplitudes or closed-manifold
partition functions.
\subsection{Independence of the sampling embeddings}
\label{subsec:embedding-independence}

The preceding proof uses two approximation properties of $I_k$: its
adjoint agrees asymptotically with weighted sampling, and sampled smooth
functions reconstruct in $L^2$.  We now show that these properties suffice
to determine the operator limit.

A different placement or shape of the small flux cells describes the same
continuum state if its discrete coefficients agree with weighted sampling
and its reconstructed wavefunction converges in norm.  We use these two
physical requirements as the definition of an admissible comparison.
\begin{definition}[Admissible sampling embeddings]
A family of isometries
$\widetilde I_k:\mathcal H_k\to\mathcal H_0$ is called admissible if, for
every $g\in\mathcal A(Y\times P)$,
\begin{equation}
 \|\widetilde I_k^*g-J_kg\|\longrightarrow0,
 \qquad
 \|\widetilde I_kJ_kg-g\|\longrightarrow0.
 \label{eq:admissible-embedding-hypotheses}
\end{equation}
\end{definition}

The statement below compares two continuum descriptions of the same
finite-level vector.  The transition operator between them approaches the
identity on every fixed state.  It follows that a limiting modular
amplitude cannot depend on how an admissible family of shrinking cells was
chosen.  This is a statement about the normalization and localization of
wavepackets; arbitrary level-dependent changes of basis need not satisfy
its hypotheses.
\begin{lemma}[Robustness under admissible sampling embeddings]
\label{lem:admissible-embedding-robustness}
For an admissible family $(\widetilde I_k)$, put
$\widetilde{\mathsf P}_k=\widetilde I_k\widetilde I_k^*$.
Then
\begin{equation}
 W_k:=\widetilde I_kI_k^*\longrightarrow1,
 \qquad W_k^*=I_k\widetilde I_k^*\longrightarrow1
 \quad\text{strongly on }\mathcal H_0.
 \label{eq:admissible-embedding-transition}
\end{equation}
Consequently, if $U_k\in U(\mathcal H_k)$ and
\begin{equation}
 I_kU_kI_k^*+(1-\mathsf P_k)\longrightarrow U_\infty
 \quad\text{strongly-$*$},
 \label{eq:first-embedding-operator-limit}
\end{equation}
then the same limit is obtained with the alternative embeddings:
\begin{equation}
 \widetilde I_kU_k\widetilde I_k^*+
 (1-\widetilde{\mathsf P}_k)\longrightarrow U_\infty
 \quad\text{strongly-$*$}.
 \label{eq:alternative-embedding-operator-limit}
\end{equation}
\end{lemma}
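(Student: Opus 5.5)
The plan is to prove the transition statement \eqref{eq:admissible-embedding-transition} on the dense class $\mathcal A(Y\times P)$ first, extend it by uniform boundedness, and then obtain the operator comparison by an algebraic rearrangement.

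\textbf{Step 1: the transition operators.} For $g\in\mathcal A(Y\times P)$ write
\[
 W_kg-g=\widetilde I_k(I_k^*g-J_kg)+(\widetilde I_kJ_kg-g).
\]
The first term has norm at most $\|I_k^*g-J_kg\|$, because $\widetilde I_k$ is an isometry, and this tends to zero by Lemma~\ref{lem:global-cell-approximation}. The second term tends to zero by the admissibility hypothesis \eqref{eq:admissible-embedding-hypotheses}. The estimate for $W_k^*g-g$ is the same decomposition with the roles of $I_k$ and $\widetilde I_k$ exchanged; this uses the second hypothesis for $I_k$ from Lemma~\ref{lem:global-cell-approximation} and the first admissibility hypothesis for $\widetilde I_k$. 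Both $W_k$ and $W_k^*$ are contractions and $\mathcal A$ is dense, so the $2\varepsilon$ argument already used for $\mathsf P_k$ gives strong convergence to $1$ on all of $\mathcal H_0$.

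\textbf{Step 2: the two projections agree in the limit.} From $\widetilde I_k^*\widetilde I_k=1$ one computes $W_k^*W_k=I_k\widetilde I_k^*\widetilde I_kI_k^*=\mathsf P_k$. Since $W_k\to1$ strongly, $\widetilde{\mathsf P}_k\psi-\psi$ should tend to zero as well. To see this directly, note that $\|\widetilde{\mathsf P}_kg-g\|\le\|\widetilde I_k^*g-J_kg\|+\|\widetilde I_kJ_kg-g\|\to0$ on $\mathcal A$. Density and contractivity then give $\widetilde{\mathsf P}_k\to1$ strongly, and consequently $1-\widetilde{\mathsf P}_k\to0$ and $1-\mathsf P_k\to0$ strongly.

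\textbf{Step 3: the operator comparison.} Write $V_k=I_kU_kI_k^*$ and $\widetilde V_k=\widetilde I_kU_k\widetilde I_k^*$. The identities $\widetilde I_k=\widetilde I_kI_k^*I_k$ and $\widetilde I_k^*=I_k^*I_k\widetilde I_k^*$ give the exact relation
\[
 \widetilde V_k=W_kV_kW_k^*.
\]
By hypothesis \eqref{eq:first-embedding-operator-limit}, $V_k=(V_k+1-\mathsf P_k)-(1-\mathsf P_k)\to U_\infty$ strongly, with $\|V_k\|\le1$. For fixed $\psi$ we then decompose
\[
 W_kV_kW_k^*\psi-U_\infty\psi
 =W_kV_k(W_k^*\psi-\psi)+W_k(V_k-U_\infty)\psi+(W_k-1)U_\infty\psi.
\]
Each of the three terms tends to zero, because all the operators involved are contractions and $\psi$ and $U_\infty\psi$ are fixed vectors. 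Adding $1-\widetilde{\mathsf P}_k\to0$ yields \eqref{eq:alternative-embedding-operator-limit} in the strong sense. For the adjoints, the same decomposition applied to $\widetilde V_k^*=W_kV_k^*W_k^*$ uses $V_k^*\to U_\infty^*$, which follows from the strong-$*$ hypothesis. The claim in Theorem~\ref{thm:strong-modular} then follows by taking $U_k=\rho_k(\gamma)$.

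The main subtlety is Step 2. Without the density argument it would not be clear that the complementary identity pieces $1-\widetilde{\mathsf P}_k$ vanish strongly, and that is exactly what lets the two identity pieces $1-\mathsf P_k$ and $1-\widetilde{\mathsf P}_k$ be discarded.
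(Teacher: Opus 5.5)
Your proof is correct and follows the paper's argument. You use the same insertion of the common sampled vector $J_kg$ to control $W_k$, $W_k^*$ and $\widetilde{\mathsf P}_k$ on $\mathcal A(Y\times P)$ before extending by density, and the same conjugation identity $\widetilde I_kU_k\widetilde I_k^*=W_k(\cdot)W_k^*$ with strong convergence of uniformly bounded products; the only cosmetic difference is that the paper conjugates the full $U_k^\#=I_kU_kI_k^*+(1-\mathsf P_k)$ using $W_k(1-\mathsf P_k)=0$, whereas you first remove $1-\mathsf P_k$ via $\mathsf P_k\to1$ and conjugate $V_k=I_kU_kI_k^*$.
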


\begin{proof}
Fix $g\in\mathcal A(Y\times P)$.  Insert the common sampled vector $J_kg$:
\begin{align}
 \|W_kg-g\|
 &\leq
 \|I_k^*g-J_kg\|+\|\widetilde I_kJ_kg-g\|,\notag\\
 \|W_k^*g-g\|
 &\leq
 \|\widetilde I_k^*g-J_kg\|+\|I_kJ_kg-g\|.
 \label{eq:embedding-transition-dense-estimate}
\end{align}
Both right-hand sides tend to zero by
Lemma~\ref{lem:global-cell-approximation} and
\eqref{eq:admissible-embedding-hypotheses}.  Since $W_k$ and $W_k^*$ are
contractions, density extends the two limits to all of $\mathcal H_0$.
The same two assumptions also give
\[
 \|\widetilde{\mathsf P}_kg-g\|
 \leq \|\widetilde I_k^*g-J_kg\|
      +\|\widetilde I_kJ_kg-g\|\longrightarrow0
\]
on $\mathcal A(Y\times P)$, and hence
$\widetilde{\mathsf P}_k\to1$ strongly by density.

Let $U_k^\#=I_kU_kI_k^*+(1-\mathsf P_k)$.  The identity
\begin{equation}
 \widetilde I_kU_k\widetilde I_k^*
 =W_kU_k^\#W_k^*
 \label{eq:embedding-conjugacy-on-ranges}
\end{equation}
holds because $W_k(1-\mathsf P_k)=0$ and
$(1-\mathsf P_k)W_k^*=0$.  Products of strongly convergent uniformly bounded
operators converge strongly.  Equations
\eqref{eq:admissible-embedding-transition} and
\eqref{eq:first-embedding-operator-limit} therefore show that the right-hand
side of~\eqref{eq:embedding-conjugacy-on-ranges} converges strongly to $U_\infty$.
The complementary projection tends strongly to zero.  Repeating the same
argument with $U_k^*$ proves the adjoint convergence.
\end{proof}

The abstract sampling requirements can be checked by measuring the cells.
Their diameters must shrink, their anchors must approach them, and the
normalization factors must approach $\sqrt{v_k}$ in the summed square
error.  Finally, the uncovered volume in each fixed charge fiber must tend
to zero.  These conditions permit irregular cell shapes and cells cut by
the alcove boundary.  They ensure that neither the position nor the
probability assigned to a smooth wavepacket changes in the continuum
limit.
\begin{corollary}[Cell criterion for admissible embeddings]
\label{cor:cell-criterion-admissible-embeddings}
For each $z=(x_z,\eta_z)\in\Lambda_k$, let
$\widetilde C_{k,z}\subset Y$ be measurable with positive measure, with cells
in the same charge fiber pairwise disjoint modulo null sets, and define
\[
 \widetilde I_k\delta_z
 =|\widetilde C_{k,z}|^{-1/2}
 1_{\widetilde C_{k,z}}\delta_{\eta_z}.
\]
Assume that for every finite $E\subset P$,
\begin{align}
 \sup_{\substack{z\in\Lambda_k\\\eta_z\in E}}
 \bigl(\operatorname{diam}\widetilde C_{k,z}
 +\operatorname{dist}(x_z,\widetilde C_{k,z})\bigr)&\longrightarrow0,
 \label{eq:admissible-cell-diameter}\\
 \sum_{\substack{z\in\Lambda_k\\\eta_z\in E}}
 \bigl(|\widetilde C_{k,z}|^{1/2}-v_k^{1/2}\bigr)^2&\longrightarrow0,
 \label{eq:admissible-cell-volume}\\
 \left|Y\setminus
 \bigcup_{\substack{z\in\Lambda_k\\\eta_z=\eta}}
 \widetilde C_{k,z}\right|&\longrightarrow0
 \quad(\eta\in E).
 \label{eq:admissible-cell-cover}
\end{align}
Then $(\widetilde I_k)$ is admissible, and hence gives the same strong-$*$
limit as $I_k$.
\end{corollary}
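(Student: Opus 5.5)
The plan is to verify the two admissibility conditions \eqref{eq:admissible-embedding-hypotheses} directly. Lemma~\ref{lem:admissible-embedding-robustness} then gives the strong-$*$ limit. I will follow the template of Lemma~\ref{lem:global-cell-approximation}: first handle a finite set of charges $E_M$ with the cell hypotheses, then control the charge tail uniformly.

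\textbf{Charge tail.} For $g\in\mathcal A(Y\times P)$ write $g=g_M+g^{>M}$ with $g_M=1_{E_M}g$. The adjoint $\widetilde I_k^*$ is a contraction and $\widetilde I_k$ is an isometry, so exactly as before
\[
\|\widetilde I_k^*g^{>M}-J_kg^{>M}\|+\|\widetilde I_kJ_kg^{>M}-g^{>M}\|\le 2\|g^{>M}\|+2\|J_kg^{>M}\|.
\]
By \eqref{eq:uniform-charge-tail}, this is small uniformly in $k$ once $M$ is large. It therefore suffices to treat a single fixed charge $\eta$ and a function $h\in C^1(\overline Y)$.

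\textbf{First condition at fixed charge.} We have $(\widetilde I_k^*h)(z)=|\widetilde C_{k,z}|^{-1/2}\int_{\widetilde C_{k,z}}h$. Split
\[
\widetilde I_k^*h-J_kh=\Bigl(|\widetilde C|^{-1/2}\!\int_{\widetilde C}h-|\widetilde C|^{1/2}h(x_z)\Bigr)+\bigl(|\widetilde C|^{1/2}-v_k^{1/2}\bigr)h(x_z).
\]
For the first bracket, Jensen's inequality gives a square bounded by $\int_{\widetilde C}|h-h(x_z)|^2\le \|\nabla h\|_\infty^2 d_k^2|\widetilde C|$, where $d_k=\sup_z(\operatorname{diam}\widetilde C_{k,z}+\operatorname{dist}(x_z,\widetilde C_{k,z}))\to0$ by \eqref{eq:admissible-cell-diameter}. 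Summing over cells, which are disjoint inside $Y$, gives at most $d_k^2\|\nabla h\|_\infty^2|Y|\to0$. For the second bracket, the summed square is at most $\|h\|_\infty^2$ times the quantity in \eqref{eq:admissible-cell-volume}, which tends to zero.

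\textbf{Second condition at fixed charge.} $\widetilde I_kJ_kh$ equals $v_k^{1/2}|\widetilde C|^{-1/2}h(x_z)$ on $\widetilde C_{k,z}$ and vanishes on the uncovered set $U_k$. Hence
\[
\|\widetilde I_kJ_kh-h\|^2=\sum_z\int_{\widetilde C}\bigl|v_k^{1/2}|\widetilde C|^{-1/2}h(x_z)-h\bigr|^2+\int_{U_k}|h|^2.
\]
The last term is at most $\|h\|_\infty^2|U_k|\to0$ by \eqref{eq:admissible-cell-cover}. On each cell, write $v_k^{1/2}|\widetilde C|^{-1/2}h(x_z)-h=(h(x_z)-h)+(v_k^{1/2}-|\widetilde C|^{1/2})|\widetilde C|^{-1/2}h(x_z)$. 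The first piece is handled as above. The second integrates to $(v_k^{1/2}-|\widetilde C|^{1/2})^2|h(x_z)|^2$, which is controlled by \eqref{eq:admissible-cell-volume}. The pieces are combined with $(A+B)^2\le2A^2+2B^2$.

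\textbf{Assembly and main obstacle.} Summing over $\eta\in E_M$ and letting $k\to\infty$ and then $M\to\infty$ proves \eqref{eq:admissible-embedding-hypotheses}. Lemma~\ref{lem:admissible-embedding-robustness} then gives the strong-$*$ limit. The main point to watch is that no uniform lattice count such as \eqref{eq:uniform-lattice-count} is needed for the new cells. This holds because disjointness inside $Y$ bounds $\sum|\widetilde C|\le|Y|$, and the volume discrepancy is absorbed entirely into \eqref{eq:admissible-cell-volume}.
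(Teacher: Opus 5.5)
Your proof is correct and follows essentially the same route as the paper. Both arguments split $\widetilde I_k^*g-J_kg$ and $\widetilde I_kJ_kg-g$ into a within-cell oscillation term and a volume-mismatch term controlled by \eqref{eq:admissible-cell-volume}, use the cover condition on the uncovered set, and handle the charge tail via \eqref{eq:uniform-charge-tail} before invoking Lemma~\ref{lem:admissible-embedding-robustness}. The only cosmetic difference is that you bound the oscillation by $\|\nabla h\|_\infty$, which implicitly uses convexity of the alcove, whereas the paper uses the modulus of continuity $\omega_{g,E}$, which needs only continuity on $\overline Y$.
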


\begin{proof}
Fix a finite set $E\subset P$ and write
\[
 \delta_{k,E}:=
 \sup_{\substack{z\in\Lambda_k\\ \eta_z\in E}}
 \bigl(\operatorname{diam}\widetilde C_{k,z}
 +\operatorname{dist}(x_z,\widetilde C_{k,z})\bigr).
\]
For $g\in\mathcal A(Y\times P)$, let
\[
 \omega_{g,E}(r):=
 \max_{\eta\in E}
 \sup_{\substack{x,y\in\overline Y\\ |x-y|\leq r}}
 |g(x,\eta)-g(y,\eta)|.
\]
Then $\delta_{k,E}\to0$ by
Eq.~\eqref{eq:admissible-cell-diameter}, while
$\omega_{g,E}(r)\to0$ as $r\downarrow0$.

Put $\widetilde C_z=\widetilde C_{k,z}$.  Directly from the definitions,
\begin{align*}
 (\widetilde I_k^*g)_z-(J_kg)_z
 &=\frac{1}{|\widetilde C_z|^{1/2}}
   \int_{\widetilde C_z}
   \bigl(g(y,\eta_z)-g(x_z,\eta_z)\bigr)\,dy\\
 &\quad+\bigl(|\widetilde C_z|^{1/2}-v_k^{1/2}\bigr)
   g(x_z,\eta_z).
\end{align*}
The cells in each charge fiber are disjoint.  Hence, with
$\|g\|_{\infty,E}:=\max_{\eta\in E}\sup_{\overline Y}|g(\mathord\cdot,\eta)|$,
\begin{align}
 \|\widetilde I_k^*(1_Eg)-J_k(1_Eg)\|^2
 &\leq 2|E|\,|Y|\,\omega_{g,E}(\delta_{k,E})^2\notag\\
 &\quad+2\|g\|_{\infty,E}^2
 \sum_{\substack{z\in\Lambda_k\\ \eta_z\in E}}
 \bigl(|\widetilde C_z|^{1/2}-v_k^{1/2}\bigr)^2
 \longrightarrow0.
 \label{eq:admissible-cell-first-sampling-estimate}
\end{align}

On $\widetilde C_z$ the second sampling operator is
\[
 (\widetilde I_kJ_kg)(y,\eta_z)
 =\left(\frac{v_k}{|\widetilde C_z|}\right)^{1/2}
  g(x_z,\eta_z).
\]
Integrating the squared modulus of the difference from $g(y,\eta_z)$ over the cells gives
\begin{align}
 &\sum_{\substack{z\in\Lambda_k\\ \eta_z\in E}}
 \int_{\widetilde C_z}
 \bigl|(\widetilde I_kJ_kg)(y,\eta_z)-g(y,\eta_z)\bigr|^2dy
 \notag\\
 &\qquad\leq 2|E|\,|Y|\,\omega_{g,E}(\delta_{k,E})^2
 +2\|g\|_{\infty,E}^2
 \sum_{\substack{z\in\Lambda_k\\ \eta_z\in E}}
 \bigl(v_k^{1/2}-|\widetilde C_z|^{1/2}\bigr)^2.
 \label{eq:admissible-cell-second-sampling-estimate}
\end{align}
The right-hand side tends to zero.  On the uncovered part of a charge fiber,
$\widetilde I_kJ_kg$ vanishes, and
Eq.~\eqref{eq:admissible-cell-cover} shows that the remaining integral of
$|g|^2$ also tends to zero.  This proves both sampling limits for $1_Eg$.

Finally, take $E_M=\{\eta\in P:|\eta|\leq M\}$ and put
$g^{>M}=(1-1_{E_M})g$.  Since $\widetilde I_k$ is an isometry,
\begin{align*}
 \|\widetilde I_k^*g^{>M}-J_kg^{>M}\|
 &\leq\|g^{>M}\|+\|J_kg^{>M}\|,\\
 \|\widetilde I_kJ_kg^{>M}-g^{>M}\|
 &\leq\|J_kg^{>M}\|+\|g^{>M}\|.
\end{align*}
The charge-Schwartz condition gives $\|g^{>M}\|\to0$, and
Lemma~\ref{lem:global-sampling} gives
$\sup_k\|J_kg^{>M}\|\to0$.  First let $k\to\infty$ on the
finite-charge part and then let $M\to\infty$.  The two conditions in
Eq.~\eqref{eq:admissible-embedding-hypotheses} follow, and
Lemma~\ref{lem:admissible-embedding-robustness} applies.
\end{proof}

If the cells satisfy the preceding criterion, their embeddings give
$\rho_0$ itself.  Multiplying the embeddings by phases with a strong-$*$
limit conjugates this realization.  Suppose that
$\widetilde I_k=M_{\phi_k}I_k$, where $|\phi_k|=1$ almost everywhere, and
that $M_{\phi_k}\to M_\phi$ strongly-$*$.  The corresponding extensions
satisfy the exact identity
\[
 \widetilde I_k\rho_k(\gamma)\widetilde I_k^*
 +(1-\widetilde I_k\widetilde I_k^*)
 =M_{\phi_k}\rho_k^\#(\gamma)M_{\phi_k}^*.
\]
Consequently, for every fixed $\gamma\in SL(2,\mathbb Z)$,
\begin{equation}
 \widetilde I_k\rho_k(\gamma)\widetilde I_k^*
 +(1-\widetilde I_k\widetilde I_k^*)
 \xrightarrow{\mathrm{SOT}}
 M_\phi\rho_0(\gamma)M_\phi^*,
 \label{eq:phase-conjugate-modular-limit}
\end{equation}
and the same convergence holds for the adjoints.

% section_06.tex
\section[The large-k limit of the boundary theories]
{The large-$k$ limit of the boundary theories}
\label{sec:boundary-theories}

This section records boundary evidence compatible with the regular-sector bulk
limit, but it is not used in the proofs of the bulk modular-convergence theorem.  Viewing the
three-dimensional TQFT as the SymTFT of a boundary suggests a corresponding
sequence of boundary theories $\mathcal B_k[G]\to\mathcal B[G]$, associated
respectively with boundary conditions $\mathcal B_k$ of $\DCS{k}{G}$ and
$\mathcal B$ of $BF[G]$; neither the boundary condition nor the comparison across
levels is canonical.  We focus on the CS--WZW
correspondence~\cite{Witten:1988hf,Elitzur:1989nr} and on a pair of WZW factors.
For $G=U(1)$, subsection~\ref{sec:U1_WZW} gives an exact comparison of the
wavefunctions and momentum lattices in the stated polarizations, including the
decompactifying radius.For compact simple $G$ under the hypotheses stated
below,
subsection~\ref{sec:G_WZW} gives four compatible checks for a sigma-model
candidate with target $\mathfrak g$. Subsection~\ref{sec:nonabelian-BF-wavefunctions} then proposes a regular-sector ansatz
for the nonabelian $BF$ boundary wavefunctions and compares its classical part with the large-$k$ limit of doubled Chern--Simons theory.
Finally, subsection~\ref{sec:flat-gauging} studies the formal flat gauging of the
free scalar candidate through its twisted partition functions, including the
infrared divergences and singular sectors that must be addressed to define the
gauged theory. We do not prove convergence of the full
boundary CFT, its local fields, OPE coefficients, sewing data, or correlation
functions.

\subsection[The U(1) case]{Boundary wavefunction for the $U(1)$ case}\label{sec:U1_WZW}

We begin with $G=U(1)$, where the wavefunction calculation can be carried out in
closed form and the relevant limit is visible in the momentum lattice of a
compact boson.  The abelian Lie algebra is of course not semisimple, so this case
falls outside the hypotheses used elsewhere in this paper.  Here and below,
``exact'' refers to the wavefunctions, lattice identities, and radii obtained
after the stated choices of polarization and symmetry boundary.

We take $k$ to be positive and even in this abelian example, so that $U(1)_k$ is a bosonic Chern--Simons theory.  Odd levels require spin structures and their modular transformations, which are not included here.

Throughout this subsection, we use the Euclidean action and the weight
$e^{-S}$.  This is the Wick-rotated form of the topological normalization in
section~\ref{sec:Limit_Regular_Simples}; the diagonal and rescaled antidiagonal fields
themselves will be normalized exactly as in~\eqref{eq:classical-field-recombination}.
For radii we use the convention
\begin{equation}\label{eq:compact-boson-radius-convention}
 p_{L,R}=\frac1{\sqrt2}\left(\frac{n}{R}\mathbin{\pm}wR\right),
 \qquad (h,\bar h)=\left(\frac{p_L^2}{2},\frac{p_R^2}{2}\right).
\end{equation}

\paragraph{$U(1)$ Chern-Simons theory $\text{CS}_k$[$U(1)$]}

As a warm-up, we review the wavefunction for the $U(1)$ Chern-Simons theory following \cite{Elitzur:1989nr}. The method can be generalized to noncompact abelian $BF$ theory.

Our starting point is $\CS{k}{U(1)}$ on a three-manifold $\mathcal M$ whose boundary is a torus of modular parameter $\tau$, parametrized by $z=\sigma^1+\tau\sigma^2$ with $\sigma^i\in[0,1)$, with the action
\begin{equation}\label{eq:U1CS-action}
S \;=\; \frac{ik}{4\pi}\int_{\mathcal M}A\wedge dA \,.
\end{equation}
Varying \eqref{eq:U1CS-action} leaves a boundary term
    \begin{equation}
        \delta S\big|_{dA=0}=-\frac{ik}{4\pi}\int_{\partial\mathcal{M}}A \wedge \delta A= -\frac{k}{2\pi} \int_{\partial \mathcal{M}} d^2z (A_z \delta A_{\bar{z}}-A_{\bar{z}}\delta A_z)\,,
    \end{equation}
where the holomorphic and antiholomorphic components are
    \begin{equation}
        A_z = \frac{A_2 - \bar{\tau}A_1}{2 i \tau_2}\,,\quad A_{\bar{z}}=\frac{\tau A_1-A_2}{2i \tau_2}\,,
    \end{equation}
and we fix the conventions
    \begin{equation}
        dz \wedge d \bar{z} = -2 i \tau_2 d \sigma^1 \wedge d\sigma^2\,,\quad d^2 z \equiv \frac{i}{2} dz \wedge d \bar{z}\,,\quad \int d^2z = \tau_2\,.
    \end{equation}

We add the standard
counterterm at the boundary~\cite{Elitzur:1989nr} and consider the action
\begin{equation}
 S'=S-\frac{k}{2\pi}\int_{\partial\mathcal M}d^2z\,A_zA_{\bar z}.
 \label{eq:U1-CS-boundary-counterterm}
\end{equation}
The boundary variation of $S'$ vanishes when $A_{\bar z}$ is fixed, while $A_z$ may vary.  This specifies the polarization used in the path integral. Introduce the wavefunction $\Psi(A_{\bar{z}})$ as
    \begin{equation}
        \Psi(A_{\bar{z}}) = \int_{A_{\bar{z}}|_{\partial \mathcal{M}}\text{ fixed}} \mathcal{D}A e^{-S'}\,,
    \end{equation}
which depends on the boundary value of $A_{\bar{z}}$. A large gauge transformation $A\rightarrow A+\omega$ at the boundary multiplies the wavefunction by a factor assembled
from the bulk action
    \begin{equation}
        \Psi(A_{\bar{z}}) \rightarrow \Psi(A_{\bar{z}}) e^{\frac{ik}{4\pi} \int_{\partial \mathcal{M}} \omega \wedge A}e^{\frac{k \tau_2}{2\pi}(A_z \omega_{\bar{z}}+A_{\bar{z}}\omega_z+|\omega_z|^2)}e^{i \varphi(\omega)}\,,
    \end{equation}
where the first term is from $S$, the second term is from the boundary term $S_{\partial}$, and the last term is an additional cocycle to assure associativity of large gauge
transformations.

These are transition functions for the prequantum line bundle over the
boundary flat-connection space.  We use the holomorphic trivialization of
\cite{Elitzur:1989nr}.  A large gauge transformation of the boundary need not
extend over a fixed filling, so its multiplier is specified by this line bundle
and its cocycle, rather than by treating every such transformation as a bulk
gauge redundancy.

Parametrize the value of $A_{\bar{z}}$ at the boundary as
    \begin{equation}
        A_{\bar{z}} = \frac{i \pi}{\tau_2} \xi\,,
    \end{equation}
and the large gauge transformation shifts $\xi$ according to
    \begin{equation}
        \xi \rightarrow \xi +n+m\tau\,,\quad n,m\in \mathbb{Z}\,.
    \end{equation}
Then we can write
    \begin{equation}
        \Psi(\xi +n+m\tau) = \Psi(\xi) \exp \left[ \frac{k \pi}{\tau_2}(n+m\bar{\tau})\xi +\frac{k\pi}{2\tau_2}|n+m\tau|^2+i\pi k nm \right]\,,
    \end{equation}
where we choose the cocycle as $\varphi = \pi k n m$. It is useful to introduce
$f=\Psi\,e^{-k\pi\xi^2/2\tau_2}$, which reduces the multiplier to
\begin{equation}\label{eq:U1-theta-quasiperiodicity}
f(\xi+1)=f(\xi)\,,\qquad
f(\xi+\tau)=f(\xi)\,e^{-2\pi ik\xi-i\pi k\tau}\,,
\end{equation}
which is the quasi-periodicity of a level-$k$ theta function. Since $f(\xi)$ is periodic along the real axis, we can do Fourier expansion
    \begin{equation}
        f=\sum_{\lambda \in \mathbb{Z}}c_{\lambda} e^{2\pi i \lambda \xi}\,,
    \end{equation}
and it turns the second relation in \eqref{eq:U1-theta-quasiperiodicity} into the recursion relation
    \begin{equation}
        c_{\lambda+k} = c_{\lambda} e^{2\pi i \lambda \tau + \pi i k \tau}\,,
    \end{equation}
which connects $c_{\lambda}$ to $c_{\lambda+k}$, therefore the solution space is $k$-dimensional. Solving the recursion relation using the ansatz $c_{\lambda} = e^{i\pi \tau\lambda^2/k}$, and setting $\lambda=\sqrt k\,p$ with $p=\sqrt k\,(n+j/k)$ and $u=\sqrt k\,\xi$, the $k$ independent wavefunctions read
\begin{equation}\label{eq:U1-CS-wavefunctions}
\Psi_j(\xi)\;=\;\frac{1}{\eta(\tau)}\sum_{n\in\mathbb Z}
\exp\!\Big[i\pi\tau p^2+2\pi ipu+\tfrac{\pi}{2\tau_2}u^2\Big]\,,
\qquad j=0,\dots,k-1\,,
\end{equation}
with $1/\eta(\tau)$ the one-loop determinant of the topologically trivial fluctuations. We recover the standard count of $k$ states, and $j$ is the anyon label of $U(1)_k$.

\paragraph{Non-compact abelian BF theory}

The calculation for non-compact abelian $BF$ theory follows the same steps, but only the compact field admits large gauge transformations. The action is
    \begin{equation}\label{eq:U1-BF-action}
        S = \frac{i}{2\pi} \int_{\mathcal{M}} b \wedge da \;\simeq\; \frac{i}{4\pi}\int_{\mathcal{M}} b\wedge da+\frac{i}{4\pi}\int_{\mathcal{M}} a\wedge db\,,
    \end{equation}
where $\simeq$ denotes equality after a local boundary counterterm.  Here $a$ remains a compact $U(1)$ gauge field while $b$ is valued in $\mathbb R$, so that $b$ admits no large gauge transformations. We use the symmetric representative in the calculation that follows.

Introduce the combinations
    \begin{equation}\label{eq:U1-BF-CS}
        A_{\pm} = \frac{1}{\sqrt{2}}\left(\frac{a}{r}\pm rb \right)\,,
    \end{equation}
so that the $BF$ action \eqref{eq:U1-BF-action} can be written as a pair of Chern--Simons theories at opposite unit level, 
    \begin{equation}
        S=\frac{i}{4\pi}\int (A_+\wedge dA_+-A_-\wedge dA_-)\,,
    \end{equation}
where $r$ is a real and positive parameter. Consider the two \emph{independent} holomorphic coordinates parametrized by
    \begin{equation}
        \xi = \frac{\tau_2}{i \pi} (A_+)_{\bar{z}}=\frac{1}{\sqrt{2}}\left(\frac{u}{r}+vr \right)\,,\quad \bar{\xi}=-\frac{\tau_2}{i\pi}(A_-)_{z}=\frac{1}{\sqrt{2}}\left(\frac{\bar{u}}{r}-\bar{v}r \right)\,,
    \end{equation}
with
    \begin{equation}
        u=\frac{\tau_2}{i\pi}a_{\bar{z}}\,,\quad v=\frac{\tau_2}{i\pi} b_{\bar{z}}\,.
    \end{equation}
Deform the action with the proper boundary term
    \begin{equation}
        S'=S-\frac{1}{2\pi} \int_{\partial \mathcal{M}}d^2 z\left(|A_+|^2+|A_-|^2 \right)\,,
    \end{equation}
and we can repeat the computation for the $U(1)$ Chern-Simons case with the two independent holomorphic coordinates $\xi$ and $\bar\xi$.

The large gauge transformation is generated purely by $a$, and we have
    \begin{equation}
        \xi \rightarrow \xi + \frac{1}{\sqrt{2}r}(n+m\tau)\,,\quad \bar{\xi}\rightarrow \bar{\xi}+\frac{1}{\sqrt{2}r}(n+m\bar{\tau})\,.
    \end{equation}
Denote $ \delta=\frac{1}{\sqrt{2}r}(n+m\tau)$, the wavefunction transforms as follows
    \begin{equation}
        \Psi(\xi+\delta,\bar{\xi}+\delta^*)=\Psi(\xi,\bar{\xi}) \exp \left[\frac{\pi}{\tau_2}(\xi \delta^*+\bar{\xi}\delta+|\delta|^2)\right]\,,
    \end{equation}
which is associative under the large gauge transformation. Similarly, let us introduce $f=\Psi e^{-\frac{\pi}{2\tau_2}(\xi^2 + \bar{\xi}^2)}$, and one can check the large gauge transformation is generated by
    \begin{equation}
    \begin{gathered}
        f\left(\xi +\frac{1}{\sqrt{2}r},\bar{\xi} +\frac{1}{\sqrt{2}r}\right)=f(\xi,\bar{\xi})\,,\\
        f\left(\xi+ \frac{\tau}{\sqrt{2}r},\bar{\xi}+ \frac{\bar{\tau}}{\sqrt{2}r}\right)=f\exp \left[ -\frac{2\pi i}{\sqrt{2}r} (\xi-\bar{\xi})+\frac{\pi \tau_2}{r^2}\right]\,.
    \end{gathered}
    \end{equation}
Then let us consider Fourier expansion and assume
    \begin{equation}
        f=\sum_{p_L,p_R} c_{p_L,p_R} e^{2\pi i p_L \xi -2\pi i p_R\bar{\xi}}\,,
    \end{equation}
where the summation runs over all possible values of $p_L,p_R$. The first relation implies
    \begin{equation}
        e^{\frac{2\pi i (p_L-p_R)}{\sqrt{2}r}} = 1\quad \Rightarrow \quad p_L - p_R \in \sqrt{2}r \mathbb{Z}\,,
    \end{equation}
and the other relation gives the recursion relation
    \begin{equation}
        c_{p_L+\frac{1}{\sqrt{2}r},p_R+\frac{1}{\sqrt{2}r}}=c_{p_L,p_R} e^{\frac{2\pi i (p_L \tau - p_R \bar{\tau})}{\sqrt{2}r}- \frac{\pi \tau_2}{r^2}}\,,
    \end{equation}
and we take the ansatz
    \begin{equation}
        c_{p_L,p_R}=\exp \left[i\pi \tau p_L^2-i \pi \bar{\tau}p_R^2 \right]\,.
    \end{equation}

Based on the above discussion, we can parametrize
    \begin{equation}
        p_L = \frac{1}{\sqrt{2}}\left(\frac{\ell +\varphi}{r}+rN \right)\,,\quad p_R = \frac{1}{\sqrt{2}}\left(\frac{\ell +\varphi}{r}-rN \right)\,,
    \end{equation}
with $\varphi\in[0,1)$ and $N\in \mathbb{Z}$, and they satisfy $p_L-p_R=\sqrt{2}rN$ as required. Then the independent wavefunctions are
    \begin{equation}\label{eq:U1-BF-wavefunctions}
        \Psi_{\varphi,N}(\xi,\bar{\xi}) = \frac{1}{|\eta|^2}\sum_{\ell \in \mathbb{Z}} \exp \left[i\pi \tau p_L^2 - i \pi \bar{\tau}p_R^2 +2\pi i (p_L \xi -p_R\bar{\xi})+\frac{\pi}{2\tau_2}(\xi^2 + \bar{\xi}^2)\right]\,.
    \end{equation}
Notice that the labels $(\varphi,N)$ of the wavefunction are exactly the same as the labels of the bulk line $W_{\varphi,N}$. In the bulk $BF$ theory, $\varphi$ parametrizes the continuous family of the Wilson line of $b$, and $N$ is the discrete electric charge associated to the Wilson line of $a$. While in the boundary momentum lattice, the same $N$ occupies the winding slot and
$\ell+\varphi$ occupies the momentum slot.  

In the SymTFT picture, the wavefunctions $\Psi_{\varphi,N}(\xi,\bar{\xi})$ actually give the physical boundary amplitudes. To retrieve the partition function, one needs to include the topological boundary determined by the Lagrangian algebra $\mathcal{L}$. Formally, suppose the wavefunction is labeled by $\Psi_i$, and the Lagrangian algebra is $ \mathcal{L} = \bigoplus_i n_i W_i$, where $i$ labels the anyons in the bulk. Then the partition function associated with $\mathcal{L}$ is written as
    \begin{equation}
        Z_{\mathcal{L}} = \sum_i n_i \Psi_i\,.
    \end{equation}
    
In the present case, the bulk anyon is labeled by $W_{\varphi,N}$. Let us first consider
    \begin{equation}
        \mathcal{L}^{BF}_1 = \bigoplus_N W_{0,N}\,,
    \end{equation}
so that the Wilson line of $a$ can end on the topological boundary, and we can turn on the $a$-background on it. Setting $v=0$ we have
    \begin{equation}
        \xi =\frac{1}{\sqrt{2}} \frac{u}{r}\,,\quad \bar{\xi}=\frac{1}{\sqrt{2}} \frac{\bar{u}}{r}\,,
    \end{equation}
which gives
    \begin{equation}
        Z_{\mathcal{L}^{BF}_1} = \frac{1}{|\eta|^2}\sum_{\ell,N \in \mathbb{Z}} \exp \left[i\pi \tau p_L^2 - i \pi \bar{\tau} p_R^2 +2\pi i (p_L \xi -p_R\bar{\xi})+\frac{\pi}{2\tau_2}(\xi^2 + \bar{\xi}^2)\right]\,,
    \end{equation}
with
    \begin{equation}
        p_L = \frac{1}{\sqrt{2}}\left(\frac{\ell}{r}+rN \right)\,,\quad  p_R = \frac{1}{\sqrt{2}}\left(\frac{\ell}{r}-rN \right)\,.
    \end{equation}
Set $R=r/\sqrt{2}$ and rewrite
    \begin{equation}
        e^{-\frac{\pi}{2\tau_2}(|\xi|^2+|\bar{\xi}|^2)} Z_{\mathcal{L}^{BF}_1} = \frac{1}{|\eta|^2}\sum_{\ell,N \in \mathbb{Z}} e^{2\pi i u_2 N } q^{\frac{1}{2}\left(\frac{\ell+u_1}{2R} + RN \right)^2} \bar{q}^{\frac{1}{2}\left(\frac{\ell+u_1}{2R} - RN \right)^2}\,,
    \end{equation}
where we set $u=u_2 + \tau u_1$. The RHS is exactly the partition function of a compact boson with a generic $U(1)_W$ background parametrized by the spatial and temporal holonomies $(u_1,u_2)$, up to a local Gaussian factor $e^{-\frac{\pi}{2\tau_2}(|\xi|^2+|\bar{\xi}|^2)}$ which can be canceled by adding a local counterterm on the boundary, as in~\eqref{eq:U1-CS-boundary-counterterm}.

Alternatively, we can consider another Lagrangian algebra
    \begin{equation}
        \mathcal{L}^{BF}_2 = \int_0^1 d\varphi W_{\varphi,0}\,,
    \end{equation}
so that the Wilson line of $b$ can end on the topological boundary, and we can turn on the $b$-background on it. Setting $u=0$ we have
    \begin{equation}
        \xi = \frac{r}{\sqrt{2}}v\,,\quad \bar{\xi}=-\frac{r}{\sqrt{2}}\bar{v}\,,
    \end{equation}
and also 
    \begin{equation}
        p_L = p_R = \frac{1}{\sqrt{2}r} (\ell+\varphi)\,.
    \end{equation}
The integral is evaluated as
    \begin{equation}
    \begin{split}
        Z_{\mathcal{L}_2^{\text{BF}}} =& \frac{1}{|\eta|^2} \int_{-\infty}^{+\infty} \sqrt{2}rd\lambda \exp \left( -2\pi \tau_2\lambda^2+2\pi i \lambda (\xi-\bar{\xi}) \right) \exp\left(+\frac{\pi}{2\tau_2}(\xi^2 + \bar{\xi}^2) \right)\\
        =&\frac{\sqrt{2}R}{|\eta|^2 \sqrt{\tau_2}} e^{\frac{\pi}{\tau_2}\xi \bar{\xi}}\,,
    \end{split}
    \end{equation}
so that one has
    \begin{equation}
        e^{-\frac{\pi}{2\tau_2}(|\xi|^2+|\bar{\xi}|^2)} Z_{\mathcal{L}^{BF}_2} = \frac{\sqrt{2}R}{|\eta|^2 \sqrt{\tau_2}} e^{-\frac{2\pi R^2}{\tau_2}|v_2+\tau v_1|^2}\,,
   \end{equation}
where we set $v = v_2 +\tau v_1$. This is the Gaussian partition-function density of a noncompact boson with a translation background.  The continuum-state normalization fixes the prefactor; the infinite target zero-mode volume of an ordinary noncompact trace has been divided out.  It should therefore be compared with a volume-normalized decompactification limit, rather than with the unnormalized trace of a compact theory.

\paragraph{Double $U(1)$ Chern-Simons theory in the large $k$ limit}

Then we turn to the double copy of Chern-Simons theory $U(1)_k \times U(1)_{-k}$, and consider the large $k$ limit. We will show that, with a proper choice of boundary condition, the partition function in the large $k$ limit agrees with those of $U(1)$ non-compact $BF$ theory we just discussed.

Begin with the double Chern-Simons theory
    \begin{equation}
        S=\frac{ik}{4\pi} \int a_+\wedge da_+ -\frac{ik}{4\pi} \int a_- \wedge da_-\,,
    \end{equation}
and we can introduce the $BF$ fields 
    \begin{equation}
        a=\frac{1}{2}(a_++a_-)\,,\quad b= k(a_+-a_-)\,,
    \end{equation}
where the period of $b$ goes to infinity as $k\rightarrow +\infty$. To compare with the previous result, we redefine the Chern-Simons fields according to \eqref{eq:U1-BF-CS} by setting
    \begin{equation}
        A_{\pm} =\frac{1}{\sqrt{2}}\left(\frac{a}{r}\pm rb \right) = \frac{1}{\sqrt{2}} \left( \frac{a_++a_-}{2r}\pm kr (a_+-a_-) \right)\,,
    \end{equation}
and introduce the same holomorphic coordinates as
    \begin{equation}
        \xi = \frac{\tau_2}{i \pi} (A_+)_{\bar{z}}\,,\quad \bar{\xi}=-\frac{\tau_2}{i\pi}(A_-)_{z}\,.
    \end{equation}

We have two independent large gauge transformations for $a_+$ and $a_-$. Denote 
    \begin{equation}
        \mu = \frac{1}{\sqrt{2}}\left(\frac{1}{2r}+kr \right)\,,\quad \nu=\frac{1}{\sqrt{2}}\left(\frac{1}{2r}-kr \right)\,,
    \end{equation}
which satisfies
    \begin{equation}
        \mu+\nu=\frac{1}{\sqrt{2}r}\,,\quad \mu-\nu=\sqrt{2}kr\,,\quad \mu^2-\nu^2=k\,.
    \end{equation}
The large gauge transformation reads
    \begin{equation}
        \delta \xi = \mu D_++\nu D_-\equiv E\,,\quad \delta \bar{\xi} = \nu \bar{D}_+ + \mu \bar{D}_- \equiv F^*\,,
    \end{equation}
where 
    \begin{equation}
        D_{\pm} = n_{\pm}+ m_{\pm}\tau \,,\quad \bar{D}_{\pm} = n_{\pm} + m_{\pm} \bar{\tau}\,.
    \end{equation}
The cocycle $\varphi$ satisfies the composition law
    \begin{equation}
        \varphi(\omega+\omega') = \varphi(\omega)+\varphi(\omega')-\frac{1}{4\pi}\int_{\partial \mathcal{M}} (\omega_{A_+} \wedge \omega'_{A_+}-\omega_{A_-} \wedge \omega'_{A_-})\,,
    \end{equation}
where
    \begin{equation}
        \omega_{A_+} = \mu \omega_+ + \nu \omega_-\,,\quad \omega_{A_-} =\nu \omega_+ + \mu \omega_-\,,
    \end{equation}
so that
    \begin{equation}
        \omega_{A_+} \wedge \omega'_{A_+} - \omega_{A_-} \wedge \omega'_{A_-}=(\mu^2-\nu^2)(\omega_+\wedge \omega'_{+}-\omega_-\wedge \omega'_-)\,,
    \end{equation}
and is solved by
    \begin{equation}
        \varphi = \pi k(m_+ n_+-m_- n_-) \,.
    \end{equation}
    
The wavefunction transforms under the large gauge transformations according to
\begin{equation}
    \Psi(\xi +E,\bar{\xi}+F^*) = \Psi(\xi,\bar{\xi}) \exp \left[\frac{\pi}{2\tau_2}\left(2\xi E^*+EE^* \right) +\frac{\pi}{2\tau_2} \left(2\bar{\xi} F+FF^* \right) + i\varphi\right]\,.
\end{equation}
Set $\Psi=fe^{\frac{\pi}{2\tau_2}(\xi^2 + \bar{\xi}^2)}$, one can check that $f(\xi +E,\bar{\xi}+F^*) = f(\xi,\bar{\xi}) \times \delta f$
transforms under the four generators as the following table
\begin{center}
\begin{tabular}{ccc}
\toprule
$(n_+,m_+,n_-,m_-)$ & $(\delta\xi,\delta\bar\xi)$ & $\delta f$\\
\midrule
$(1,0,0,0)$ & $(\mu,\nu)$ & $1$\\
$(0,0,1,0)$ & $(\nu,\mu)$ & $1$\\
$(0,1,0,0)$ & $(\mu\tau,\nu\bar\tau)$ &
$e^{-2\pi i\mu\xi+2\pi i\nu\bar\xi-i\pi\mu^{2}\tau+i\pi\nu^{2}\bar\tau}$\\
$(0,0,0,1)$ & $(\nu\tau,\mu\bar\tau)$ &
$e^{-2\pi i\nu\xi+2\pi i\mu\bar\xi-i\pi\nu^{2}\tau+i\pi\mu^{2}\bar\tau}$\\
\bottomrule
\end{tabular}
\end{center}
Consider the Fourier expansion
    \begin{equation}
        f(\xi,\bar{\xi})=\sum_{p_L,p_R} c_{p_L,p_R} e^{2\pi i (p_L \xi -p_R \bar{\xi})}\,,
    \end{equation}
and the first two generators give
    \begin{equation}
        p_L \mu-p_R\nu \in \mathbb{Z}\,,\quad p_L \nu - p_R \mu \in \mathbb{Z}\,.
    \end{equation}
Parametrize $p_L,p_R$ according to
    \begin{equation}
        p_L=\mu \hat{n}_+ + \nu \hat{n}_-\,,\quad p_R = \nu \hat{n}_++\mu \hat{n}_-\,,
    \end{equation}
we have
    \begin{equation}
        p_L \mu-p_R\nu = (\mu^2-\nu^2)\hat{n}_+=k \hat{n}_+\,,\quad p_L \nu - p_R \mu = (\nu^2-\mu^2)\hat{n}_-=-k \hat{n}_-\,,
    \end{equation}
so one can further parametrize
    \begin{equation}
        \hat{n}_{\pm} = \frac{j_{\pm}}{k}\,, \quad j_{\pm} \in \mathbb{Z}\,.
    \end{equation}

Then consider the other two generators. The third one shifts $(p_L,p_R)\rightarrow(p_L-\mu,p_R-\nu)$, which is effectively 
    \begin{equation}
        \hat{n}_+\rightarrow \hat{n}_+-1\,,\quad \text{or}\quad j_+\rightarrow j_+-k\,,
    \end{equation}
and it gives the recursion relation
    \begin{equation}
        c_{p_L+\mu,p_R +\nu} = c_{p_L,p_R}e^{2\pi i (\mu \tau p_L - \nu \bar{\tau}p_R)+i\pi \mu^2 \tau - i\pi \nu^2 \bar{\tau}}\,.
    \end{equation}
Similarly, the fourth one shifts $(p_L,p_R)\rightarrow(p_L-\nu,p_R-\mu)$, which is effectively
    \begin{equation}
        \hat{n}_-\rightarrow \hat{n}_--1\,,\quad \text{or}\quad j_-\rightarrow j_--k\,,
    \end{equation}
and it gives the recursion relation
    \begin{equation}
        c_{p_L+\nu,p_R +\mu} = c_{p_L,p_R}e^{2\pi i (\nu \tau p_L - \mu \bar{\tau}p_R)+i\pi \nu^2 \tau - i\pi \mu^2 \bar{\tau}}\,.
    \end{equation}
Both recursion relations are solved by the ansatz
    \begin{equation}
        c_{p_L,p_R} = \exp \left[i \pi \tau p_L^2-i\pi \bar{\tau}p_R^2 \right]\,,
    \end{equation}
and there are exactly $k^2$ solutions.

Based on the above discussion, let us write
    \begin{equation}
        j_{\pm}=k n_{\pm}+\lambda_{\pm}\,, \quad \lambda_{\pm}\in \{0,\cdots,k-1 \}\,, \quad n_{\pm}\in \mathbb{Z}\,,
    \end{equation}
and the $k^2$ independent wavefunctions are
    \begin{equation}
        \Psi_{\lambda_+,\lambda_-}(\xi,\bar{\xi},\tau) = \frac{1}{|\eta|^2} \sum_{n_+,n_-} \exp \left[i \pi \tau p_L^2 - i \pi \bar{\tau} p_R^2+2\pi i (p_L \xi -p_R \bar{\xi})+ \frac{\pi}{2\tau_2}(\xi^2+\bar{\xi}^2) \right]\,,
    \end{equation}
with
\begin{equation}
    p_L=\frac{1}{\sqrt{2}}\left[\frac{j_++j_-}{2kr}+r(j_+-j_-) \right]\,,\quad p_R=\frac{1}{\sqrt{2}}\left[\frac{j_++j_-}{2kr}-r(j_+-j_-) \right]\,.
\end{equation}

We then consider the partition function built by choosing a Lagrangian algebra. Denote the anyons as $W_{\lambda_+,\lambda_-}$, we first consider the diagonal one
    \begin{equation}
        \mathcal{L}^{CS}_1 = \bigoplus_{\lambda} W_{\lambda,\lambda}\,,
    \end{equation}
which sets $\lambda_+=\lambda_-=\lambda$. We have
    \begin{equation}
        \frac{j_++j_-}{2kr}=\frac{k(n_++n_-)+2\lambda}{2kr}\,, \quad r(j_+-j_-)=kr(n_+-n_-)\,,
    \end{equation}
and we label
    \begin{equation}
        w=n_+-n_-\,,\quad m = \frac{k}{2}(n_++n_-)+\lambda = \frac{k}{2}w+k n_-+\lambda\,.
    \end{equation}
For the Chern-Simons to be non-spin, we choose $k$ even, then $w,m$ can be chosen as independent integers. Thus we have
    \begin{equation}
        p_L = \frac{1}{\sqrt{2}}\left(\frac{m}{kr}+ kr w\right)\,,\quad p_R=\frac{1}{\sqrt{2}}\left(\frac{m}{kr}- kr w\right)\,,
    \end{equation}
which matches the momenta of a compact boson with radius $kr$.

On the other hand, if we choose the off-diagonal one
    \begin{equation}
        \mathcal{L}^{CS}_2 = \bigoplus_{\lambda\in\mathbb Z/k\mathbb Z} W_{\lambda,-\lambda}\,,
    \end{equation}
and we set $\lambda_+=\lambda,\lambda_-=k-\lambda$. Then one has
    \begin{equation}
        \frac{j_++j_-}{2kr}=\frac{n_++n_-+1}{2r}\,, \quad r(j_+-j_-)=kr(n_+-n_--1)+2\lambda r\,.
    \end{equation}
We then set
    \begin{equation}
        m=n_++n_-+1\,,
    \end{equation}
and
    \begin{equation}
        w=\frac{(n_+-n_--1)k}{2}+\lambda =  \frac{(m-2n_--2)k}{2}+\lambda = \frac{k}{2}m+\lambda -(n_-+1)k\,,
    \end{equation}
and both $m,w$ can take arbitrary integers when $k$ is even. Therefore we have
    \begin{equation}
        p_L = \frac{1}{\sqrt{2}}\left(\frac{m}{2r}+2rw \right)\,,\quad p_R = \frac{1}{\sqrt{2}} \left(\frac{m}{2r}-2rw \right)\,,
    \end{equation}
which is independent of $k$ and corresponding to the compact boson with radius $2r$. 

As $k\to\infty$, the diagonal boundary decompactifies. The boundary defined by
charge conjugation retains radius $2r$.  At zero background the diagonal partition function grows as $kr/(\sqrt{\tau_2}|\eta(\tau)|^2)$; in the continuum normalization above, $k^{-1}Z_{\mathcal L^{CS}_1}\to Z_{\mathcal L^{BF}_2}$.  Thus the decompactifying boundary comparison includes division by the growing zero-mode volume.  The factor of two for the compact boundary is fixed by
the compact gauge lattice and must be kept when comparing with the $BF$ radius
$r$ above.  In the $BF$ variables, the diagonal family approaches the
flux-condensing boundary $\mathcal L^{BF}_2$.  For the charge-conjugation family,
$j_++j_-=km$ and $j_+-j_-=2w$ at even $k$, so the limiting labels satisfy
$\varphi=m/2\pmod1$ and $N=2w$.  Its boundary sum is therefore
\begin{equation}\label{eq:abelian-charge-conjugation-limit}
 \mathcal L^{CS}_2\ \longrightarrow\
 \mathcal L^{BF}_{\mathrm{cc}}
 :=\bigoplus_{w\in\mathbb Z}
 \left(W_{0,2w}\oplus W_{1/2,2w}\right).
\end{equation}
Summing the two flux offsets converts $\ell+\varphi$ into $m/2$ and gives
$p_{L,R}=2^{-1/2}(m/(2r)\pm2rw)$, as required.  This is a different symmetry
boundary from $\mathcal L^{BF}_1$ at fixed $r$.  Their radii can be matched by
changing the boundary parameter, but the two boundary algebras are not equal.

\subsection[The limiting theory for general G]
{The limiting theory for general $G$}\label{sec:G_WZW}

For $G$ compact, connected, simple, and simply connected, we give four
complementary checks of the candidate boundary theory: a sigma model into
$\mathfrak g$, or $\dim G$ non-compact free scalars.
They test central charge, a local action expansion, global symmetry, and current
algebra, respectively.  Agreement of these checks is evidence for the candidate,
not a uniqueness statement or a proof of convergence of the full CFT.

The first is the central charge. The Sugawara construction gives the level-$k$ WZW model~\cite{Knizhnik:1984nr} central charge
\begin{equation}\label{eq:sugawara-c}
c_k \;=\; \frac{k\,\dim G}{k+h^\vee}\,,
\end{equation}
so that $\lim_{k\rightarrow \infty}c_k = \dim G$. A boundary theory for $\DCS{k}{G}$ built from a chiral and an antichiral WZW factor therefore approaches central charge $(\dim G,\dim G)$, consistent with $\dim G$ real non-compact non-chiral scalars.  Central charge alone is not diagnostic: many inequivalent CFTs can share it.

The second piece of evidence comes from the action itself. The sigma-model metric on the target is proportional to $-k\,\operatorname{tr}(\,\cdot\,\cdot\,)$, so the characteristic radius of the target grows like $\sqrt k$ while its curvature falls like $1/k$. To keep fluctuations canonically normalized we zoom into a finite region of the tangent space at the identity by writing
\begin{equation}\label{eq:g-exp-scaling}
g=\exp\!\Big(\frac{X}{\sqrt k}\Big)\,,\qquad X=X^at_a\in\mathfrak g\,,
\end{equation}
hence the Maurer--Cartan form can be expanded as
\begin{equation}\label{eq:MC-expansion}
g^{-1}dg=\frac{1}{\sqrt k}\,dX-\frac{1}{2k}\,[X,dX]
+\frac{1}{6k^{3/2}}\,[X,[X,dX]]+O(k^{-2})\,.
\end{equation}
Substituting into the kinetic term $-\frac{k}{4\pi}\int d^2z\,\operatorname{tr}\bigl((g^{-1}\partial g)(g^{-1}\bar\partial g)\bigr)$, one would expect a cubic vertex at order $k^{-1/2}$, whose coefficient is proportional to
\begin{equation}\label{eq:cubic-cancellation}
-\operatorname{tr}\bigl(\partial X[X,\bar\partial X]\bigr)-\operatorname{tr}\bigl([X,\partial X]\bar\partial X\bigr)\;=\;0\,,
\end{equation}
and this vanishes by cyclicity of the trace. The kinetic term is therefore free through order $k^{-1/2}$, and its first correction is a quartic vertex at order $1/k$ coming from the curvature of the group metric. The Wess--Zumino term does survive at the intermediate order, since $-\operatorname{tr}(\theta\wedge[\theta,\theta])$ scales as $k^{-3/2}$ against a prefactor of $k$, leaving a cubic interaction suppressed by $k^{-1/2}$. Assembling the two pieces, we have
\begin{equation}\label{eq:WZW-expansion}
S_{\mathrm{WZW},k}\big[e^{X/\sqrt k}\big]
=-\frac{1}{4\pi}\int_\Sigma d^2z\;\operatorname{tr}(t_at_b)\,\partial X^a\bar\partial X^b
\;+\;\frac{1}{\sqrt k}\,S^{(3)}_{\mathrm{WZ}}\;+\;\frac{1}{k}\,S^{(4)}+\cdots\,,
\end{equation}
so that in an orthonormal basis, $-\operatorname{tr}(t_at_b)=\delta_{ab}$, the leading local action is that of $\dim G$ real non-compact free scalars. Geometrically this means that $G$ with metric $-k\,\operatorname{tr}(\,\cdot\,\cdot\,)$ flattens onto its tangent space $\mathfrak g$ with metric $-\operatorname{tr}(\,\cdot\,\cdot\,)$, isometric to $\mathbb R^{\dim G}$.  The boundary coordinate scaling $X/\sqrt{k}$ differs from the bulk field contraction, whose antidiagonal connection scales as $1/k$. Using $X/\sqrt{k+h^\vee}$ in place of \eqref{eq:g-exp-scaling} changes nothing at leading order.  This tangent-space expansion controls perturbative fluctuations in a fixed coordinate neighborhood.

Third, the candidate theory carries the global symmetry we should expect on a boundary of $\CS{}{TG}$. Take
\begin{equation}\label{eq:free-scalar-action}
S=-\int_\Sigma\operatorname{tr}(dX\wedge *dX)\,,\qquad X:\Sigma\to\mathfrak g_\mathrm{ab}\,,
\end{equation}
with $*$ the worldsheet Hodge star and the positive $\mathrm{Ad}$-invariant pairing $-\operatorname{tr}(XY)$. It is invariant under
\begin{equation}\label{eq:TG-symmetry}
X\;\longmapsto\;\mathrm{Ad}_g X+\alpha\,,\qquad g\in G\,,\quad\alpha\in\mathfrak g_\mathrm{ab}\,,
\end{equation}
and defines an action of $TG\simeq G\ltimes_{\mathrm{Ad}}\mathfrak g_{\mathrm{ab}}$.
Its kernel is $Z(G)\times\{0\}$, so its faithful image on the scalar fields is
$(G/Z(G))\ltimes_{\mathrm{Ad}}\mathfrak g_{\mathrm{ab}}$.
The full affine isometry group of the free action is larger,
$O(\dim G)\ltimes\mathbb R^{\dim G}$.  For example, for $G=SU(n)$ the compact
image is $PSU(n)\subseteq SO(n^2-1)$.  Thus the action checks compatibility with
the tangent-group contraction while leaving the center invisible.

The sharpest of the four checks is at the level of the current algebra, where the limit is an \.In\"on\"u--Wigner contraction~\cite{Inonu:1953sp,Saletan:1961iqo} with rescaling. Let $J_m^a$ and $\bar J_m^a$ be the standard holomorphic and antiholomorphic level-$k$ modes.  After folding the antiholomorphic sector we set
$\tilde J_m^a:=\bar J_{-m}^a$; the reversal of the mode number changes the sign of
its central term.  Thus the two commuting affine algebras are written as
$\hat{\mathfrak g}_k\oplus\hat{\mathfrak g}_{-k}$ with
$-\operatorname{tr}(t^at^b)=\delta^{ab}$,
\begin{equation}\label{eq:opposite-level-currents}
[J^a_m,J^b_n]=if^{abc}J^c_{m+n}+km\,\delta^{ab}\delta_{m+n,0}\,,\qquad [\tilde J^a_m,\tilde J^b_n]=if^{abc}\tilde J^c_{m+n}-km\,\delta^{ab}\delta_{m+n,0}\,,
\end{equation}
with $[J^a_m,\tilde J^b_n]=0$. Recombining into a diagonal sum and a rescaled antidiagonal difference,
\begin{equation}\label{eq:current-recombination}
\mathcal J^a_m:=J^a_m+\tilde J^a_m\,,\qquad
\mathcal P^a_m:=\frac{1}{2k}\big(J^a_m-\tilde J^a_m\big)\,,
\end{equation}
leads to
\begin{equation}\label{eq:contracted-brackets}
\begin{aligned}
{}[\mathcal J^a_m,\mathcal J^b_n]&=if^{abc}\mathcal J^c_{m+n}\,,\\
[\mathcal J^a_m,\mathcal P^b_n]&=if^{abc}\mathcal P^c_{m+n}+m\,\delta^{ab}\delta_{m+n,0}\,,\\ 
[\mathcal P^a_m,\mathcal P^b_n]&=\frac{1}{4k^2}\,if^{abc}\mathcal J^c_{m+n}\,.
\end{aligned}
\end{equation}
There are three features of~\eqref{eq:contracted-brackets} that are worth stating explicitly. The level cancels exactly between the two factors in the first bracket, so $\mathcal J$ generates a loop algebra at level zero. The central term survives only in the mixed bracket, where it is normalized to one at every $k$. Lastly, the power $k^{-1}$ is forced if the mixed central term is to have a finite nonzero limit while the third bracket vanishes; the constant $1/2$ in~\eqref{eq:current-recombination} normalizes that mixed term to one at every $k$. At $k\to\infty$ we obtain $\widehat{\mathfrak{tg}}$, and restricting to zero modes removes the central term and leaves $\mathfrak{tg}=\mathfrak g\ltimes\mathfrak g_\mathrm{ab}=\mathrm{Lie}(TG)$.

At the classical current-algebra level, the phase space of $\dim G$ free scalars
realizes the same brackets.  With fields $X^a(\sigma)$, conjugate momenta
$\Pi_a(\sigma)$ and
$\{X^a(\sigma),\Pi_b(\sigma')\}=\delta^a_b\,\delta(\sigma-\sigma')$, set
\begin{equation}\label{eq:free-scalar-currents}
J_a(\sigma)=f_{ab}{}^{c}X^b(\sigma)\Pi_c(\sigma)-\operatorname{tr}(t_at_b)\,\partial_\sigma X^b(\sigma)\,,
\end{equation}
and a direct Poisson-bracket calculation gives
\begin{equation}\label{eq:free-scalar-brackets}
\begin{aligned}
\{J_a(\sigma),J_b(\sigma')\}&=f_{ab}{}^{c}J_c(\sigma)\,\delta(\sigma-\sigma')\,,\\
\{J_a(\sigma),\Pi_b(\sigma')\}&=f_{ab}{}^{c}\Pi_c(\sigma)\,\delta(\sigma-\sigma')
-\operatorname{tr}(t_at_b)\,\partial_\sigma\delta(\sigma-\sigma')\,,\\
\{\Pi_a(\sigma),\Pi_b(\sigma')\}&=0\,,
\end{aligned}
\end{equation}
which is~\eqref{eq:contracted-brackets} at $k=\infty$ under $\mathcal J\leftrightarrow J$ and $\mathcal P\leftrightarrow\Pi$. The only step requiring care is the cross term, where the two orderings of the orbital piece against $\partial_\sigma X$ produce derivative-of-delta contributions proportional to $X^e(\sigma')-X^e(\sigma)$, and smearing converts these into $-f_{ab}{}^{c}\operatorname{tr}(t_ct_d)\partial_\sigma X^d$. The would-be central term in $\{J,J\}$ cancels for the same reason, surviving only in $\{J,\Pi\}$, exactly as in \eqref{eq:contracted-brackets}. Thus the classical free-scalar phase space has the correct contracted Poisson algebra and the required $2\dim G$ phase-space directions.

Before turning to the bulk we note a caveat about chirality, since the four checks above all concern the non-chiral boundary theory in which one WZW factor supplies the holomorphic sector and the other the antiholomorphic one, so that each chirality carries $\hat{\mathfrak g}_{\pm k}$ separately and the limiting central charge is $(\dim G,\dim G)$. A different and equally natural reading of $\DCS{k}{G}$ treats it as a Chern--Simons theory for $G\times G$ at a level of split signature, whose boundary carries $\hat{\mathfrak g}_k\oplus\hat{\mathfrak g}_{-k}$ within a \emph{single} chirality. Contracting that theory produces the WZW model on the cotangent group $T^*G=G\ltimes_{\mathrm{Ad}^*}\mathfrak g^*$, a non-reductive model of the kind introduced by Nappi and Witten and developed in~\cite{Nappi:1993ie,Olive:1993hk,Sfetsos:1993na,Figueroa-OFarrill:1995lva}. For the nondegenerate cotangent pairing its Sugawara central charge is exactly $c=\dim\mathfrak d=2\dim G$~\cite{Tseytlin:1995yw}, and it differs from $2\dim G$ free scalars. Its invariant pairing has split signature $(\dim G,\dim G)$~\cite{Nappi:1993ie,Figueroa-OFarrill:1995lva}. We mention this because the two readings are easy to confuse. They correspond to genuinely different boundary conditions on the same bulk, and only the non-chiral one is the subject of this subsection.

As a separate geometric consistency check, the tangent group also recovers the
line labels of the limiting bulk.  This explains why the boundary candidate and
the regular bulk endpoint use compatible variables; it is not an independent
proof of boundary-CFT convergence.  Put $BF$ theory on $\Sigma\times\mathbb R$
and split $S=-\int\operatorname{tr}(B\wedge F_A)$ into temporal and spatial parts.
The time components carry no derivatives and act as Lagrange multipliers,
$B_0$ imposing $F_A=0$ and $A_0$ imposing $D_AB=0$, while the spatial
components inherit the symplectic form
\begin{equation}\label{eq:BF-symplectic}
\omega=-\int_\Sigma \epsilon^{ij}\,\operatorname{tr}(\delta B_i\wedge\delta A_j)\,.
\end{equation}
The field $B$ is therefore the momentum conjugate to $A$, and on the smooth
regular locus the reduced phase space is the cotangent bundle
$T^*\mathcal M_\mathrm{flat}(\Sigma,G)$. For a surface with boundary, this statement requires boundary conditions and
a choice of which gauge transformations are quotiented.  A useful local
building block is an oriented edge with holonomy $g\in G$ and a momentum
$p\in\mathfrak g^*$, parallel-transported to one endpoint.  Its phase space is
$T^*G$, with canonical one-form $\theta=\langle p,g^{-1}dg\rangle$.
This edge phase space is not, by itself, the reduced phase space of an annulus.

Vertical polarization gives the edge Hilbert space
\begin{equation}\label{eq:L2G}
 \mathcal H_{\mathrm{edge}}=L^2(G,dg)
 \simeq\widehat{\bigoplus_{\lambda\in P_+}}V_\lambda\otimes V_\lambda^*.
\end{equation}
The last equality is the Peter--Weyl decomposition.  Multiplication by functions
of $g$ represents position observables, invariant derivatives represent
momenta.  It explains how compact holonomy variables can coexist with discrete
representation labels.

The algebra that keeps track of a holonomy and its conjugation covariance is
\begin{equation}\label{eq:crossed-product}
 D(G)=C(G)\rtimes_{\mathrm{Ad}}G.
\end{equation}
This is a crossed product, not the algebra of conjugation-invariant functions
obtained by imposing a Gauss constraint.  Its irreducible representations are
classified by induction~\cite{Koornwinder:1996uq}:
\begin{equation}\label{eq:simple-objects}
 ([g],R),\qquad R\in\mathrm{Irr}\,C_G(g).
\end{equation}
The corresponding representation space consists of $L^2$ sections of
$G\times_{C_G(g)}V_R\to[g]$.  At regular $g$, $C_G(g)=T$, and its characters
are labeled by $P$.  The regular sectors are thus labeled by $Y\times P$,
in agreement with the bulk construction.  Combinatorial quantization of
Chern--Simons theory for $G\ltimes\mathfrak g^*$ produces a quantum observable
algebra with a natural $D(G)$ symmetry~\cite{Meusburger:2003hc}; the observable
algebra of a general punctured surface should not be identified with a single
copy of $D(G)$.  For $G=SU(2)$ this quantum double also appears in Euclidean
three-dimensional gravity at vanishing cosmological constant~\cite{Bais:1998yn}.

The polarization and the choice of symmetry boundary are related pieces of the
quantization problem, but they are not interchangeable.  Peter--Weyl changes
basis in $L^2(G)$; a condensation boundary additionally specifies which bulk
lines may end.  The distinct abelian radii above illustrate why the latter
choice carries information beyond a change of basis.

\paragraph{Scope of the boundary analysis.}
The $U(1)$ result is exact at the wavefunction and momentum-lattice level for the
specified polarizations.  For general $G$, the central-charge limit, tangent-space
action, $TG$ symmetry, affine contraction, and classical Poisson realization are
mutually compatible checks.  We have not proved convergence of the complete
boundary Hilbert spaces, quantum current-algebra representations, local fields,
OPE or sewing coefficients, or correlators, and no statement here covers the
singular bulk strata.

\subsection{Boundary wavefunction for the non-abelian case}
\label{sec:nonabelian-BF-wavefunctions}

We then turn to the boundary wavefunctions for non-abelian BF theory and also double Chern-Simons theory. We will focus on the regular gauge fields on the boundary.

\paragraph{Non-compact non-abelian BF theory}

We now construct a regular-sector candidate for the nonabelian $BF$ boundary wavefunction.  Lattice gauge covariance fixes its Cartan theta-function part; the root determinant will be proposed separately and tested against the free-scalar theory.  We do not derive the full nonabelian path integral from this ansatz.  After a boundary counterterm symmetrizes its kinetic term, the action is
    \begin{equation}
        S=\frac{i}{2\pi} \int \operatorname{tr}  b \wedge f = \frac{i}{4\pi}\int \operatorname{tr}  b \wedge da + \frac{i}{4\pi} \int \operatorname{tr} a \wedge db - \frac{1}{2\pi}\int \operatorname{tr} b\wedge a \wedge a\,,
    \end{equation}
where $f=da+i a\wedge a$\footnote{Throughout this subsection, we use Hermitian gauge fields in parallel with the real-field convention
used in the abelian calculation.} and the kinetic terms of $a$ and $b$ are treated on equal footing. The variation of the action gives the boundary term up to equation of motion
    \begin{equation}
        \delta S = -\frac{i}{4\pi} \int_{\partial \mathcal{M}} \operatorname{tr} (b\wedge \delta a) - \frac{i}{4\pi} \int_{\partial \mathcal{M}} \operatorname{tr} (a\wedge \delta b) \,.
    \end{equation}
Following the abelian case, we similarly introduce the combinations
    \begin{equation}
        A_{\pm} = \frac{1}{\sqrt{2}}\left(\frac{a}{r}\pm rb \right)\,,
    \end{equation}
and define the holomorphic and anti-holomorphic components for both $A_{\pm}$ as before,
and redefine the action by
    \begin{equation}
        S' = S -\frac{1}{2\pi} \int_{\partial \mathcal{M}} d^2 z \, \operatorname{tr} \left( |A_+|^2 +  |A_-|^2\right)\,.
    \end{equation}
We also introduce the complex variables as
    \begin{equation}
        \xi = \frac{\tau_2}{i \pi} (A_+)_{\bar{z}}\,,\quad \bar{\xi}=-\frac{\tau_2}{i\pi}(A_-)_{z}\,.
    \end{equation}

Consider a topologically trivial flat background on the boundary torus with regular commuting holonomies.  A gauge transformation gives a constant Cartan representative for $a$, and we denote
    \begin{equation}
        a_{\bar{z}}|_{\partial \mathcal{M}}= \frac{i \pi}{\tau_2}u\,,\quad u \in \mathfrak{t}_{\mathbb{C}}\,.
    \end{equation}
When the root local systems have no zero modes, the shift symmetry $b\rightarrow b+D_a\beta$, together with the linearized flatness equation, removes the non-Cartan cohomology of $b$, so that
    \begin{equation}
        b_{\bar{z}}|_{\partial \mathcal{M}}= \frac{i\pi}{\tau_2}v\,,\quad v \in \mathfrak{t}_{\mathbb{C}}\,.
    \end{equation}
Then we have the parametrization
    \begin{equation}
        \xi = \frac{1}{\sqrt{2}}\left(\frac{u}{r}+vr \right)\,, \quad \bar{\xi} = \frac{1}{\sqrt{2}}\left(\frac{\bar{u}}{r}-\bar{v}r \right)\,,
    \end{equation}
The bar on $\bar\xi$ labels the second polarization coordinate; it does not assert $\bar\xi=\xi^*$.

After we fixed both $a_{\bar{z}}$ and $b_{\bar{z}}$ along the Cartan direction, there remain two types of residual gauge transformations preserving the gauge choice. The first one is the single-valued lattice gauge transformation $g_{m,n}=\exp [2\pi i (n\sigma^2-m\sigma^1)]$ with $n,m \in Q^{\vee} \subset \mathfrak{t}$, where $Q^{\vee}$ is the coroot lattice, which induces
    \begin{equation}
        \xi \rightarrow \xi + \delta\,, \quad \bar{\xi}\rightarrow \bar{\xi}+\delta^*\,,\quad \delta = \frac{1}{\sqrt{2}r}\left(n+m\tau \right)\,,\quad n,m\in Q^{\vee}\,,
    \end{equation}
and one has $\delta \bar{\xi} = \delta \xi^*$. The second is the Weyl group $W=N(T)/T$, where $N(T)$ is the normalizer of $T$, which acts on both $a$ and $b$ and we write
    \begin{equation}
        (\xi , \bar{\xi}) \rightarrow (w \xi ,w \bar{\xi})\,,\quad w\in W\,.
    \end{equation}
The candidate wavefunction must satisfy
    \begin{equation}
        \Psi(\xi + \delta, \bar{\xi}+\delta^*) = \Psi(\xi,\bar{\xi}) \exp \left[ \frac{\pi}{\tau_2} \left(\bigl(\xi , \delta^*\bigr) +\bigl(\bar{\xi},\delta\bigr) + \bigl(\delta,\delta^*\bigr) \right) \right]\,,
    \end{equation}
where $(x,y)=\operatorname{tr}(xy)$ is the geometric Cartan inner product in Eq.~\eqref{eq:geometric-Cartan-form}; all pairings of lattice coordinates, momenta, and roots below use that form. The abelian derivation goes through verbatim if we replace the products by $\bigl(\cdot ,\cdot\bigr)$ and the integer lattice $\mathbb{Z}$ by coroot lattice $Q^{\vee}$. Introduce $f=\Psi e^{-\frac{\pi}{2\tau_2} \left( \bigl(\xi,\xi\bigr) + \bigl(\bar{\xi},\bar{\xi}\bigr) \right)}$ and using $\delta-\delta^*=i \frac{\sqrt{2}}{r} m\tau_2$, we have
    \begin{equation}
    \begin{gathered}
        f\left(\xi + \frac{n}{\sqrt{2}r},\bar{\xi}+\frac{n}{\sqrt{2}r} \right) = f(\xi ,\bar{\xi})\,,\\
        f\left(\xi + \frac{m \tau}{\sqrt{2}r},\bar{\xi}+\frac{m\bar{\tau}}{\sqrt{2}r} \right) = f(\xi ,\bar{\xi}) e^{-\frac{2\pi i}{\sqrt{2}r}\bigl(m,\xi -\bar{\xi}\bigr) + \frac{\pi \tau_2}{r^2} \bigl(m,m\bigr)}\,,
    \end{gathered}
    \end{equation}
for all $n,m \in Q^{\vee}$. 

Consider the Fourier expansion
    \begin{equation}
        f = \sum_{p_L,p_R \in \mathfrak{t}} c_{p_L,p_R} e^{2\pi i \left( \bigl(p_L,\xi\bigr) -\bigl(p_R,\bar{\xi}\bigr) \right)}\,,
    \end{equation}
the first relation requires $\frac{1}{\sqrt{2}r}\bigl(p_L-p_R,n\bigr) \in \mathbb{Z}$ for all $n\in Q^{\vee}$, and hence
    \begin{equation}
        p_L - p_R = \sqrt{2}r \eta^{\sharp}\,, \quad \eta \in P\,,
    \end{equation}
where $P$ is the weight lattice. Here we write the metric identification $\mathfrak t^*\simeq\mathfrak t$
explicitly using the superscript $\sharp$, so that $\eta^{\sharp}\in \mathfrak{t}$ is the metric dual of $\eta\in \mathfrak{t}^*$ defined by $\bigl(\eta^{\sharp},\lambda\bigr) \equiv \eta(\lambda)$ for any $\lambda \in \mathfrak{t}$. The second relation gives the recursion relation
    \begin{equation}
        c_{p_L+\frac{m}{\sqrt{2}r},p_R+\frac{m}{\sqrt{2}r}}= c_{p_L,p_R} \exp\left[\frac{2\pi i}{\sqrt{2}r}\left(\tau \bigl(p_L,m\bigr)-\bar{\tau}\bigl(p_R,m\bigr) \right) - \frac{\pi \tau_2}{r^2} \bigl(m,m\bigr) \right]\,,
    \end{equation}
which is solved by the ansatz
    \begin{equation}
        c_{p_L,p_R}=\exp\left[ i \pi \tau \bigl(p_L,p_L\bigr) -i\pi \bar{\tau} \bigl(p_R , p_R\bigr) \right]\,.
    \end{equation}
Due to the constraint $p_L - p_R = \sqrt{2}r \eta^{\sharp}$ with $\eta \in P$, the independent wavefunctions are labeled by a discrete label $\eta \in P$ and also a continuous label $\varphi \in \mathfrak{t}/Q^{\vee}$, parametrized by
    \begin{equation}\label{eq:non-abelian-BF-momenta}
        p_L = \frac{1}{\sqrt{2}} \left(\frac{\mu + \varphi}{r}+r\eta^{\sharp} \right)\,,\quad p_R = \frac{1}{\sqrt{2}} \left(\frac{\mu + \varphi}{r}-r\eta^{\sharp} \right)\,,\quad \mu \in Q^{\vee}\,,
    \end{equation}
and we define the theta functions
    \begin{equation}\label{eq:non-abelian-BF-theta-function}
        \Theta_{\varphi,\eta}(\xi,\bar{\xi}|\tau) = \sum_{\mu \in Q^{\vee}} \exp\left[ i \pi \tau \bigl(p_L,p_L\bigr) -i\pi \bar{\tau} \bigl(p_R , p_R\bigr) + 2\pi i \left( \bigl(p_L,\xi\bigr) -\bigl(p_R,\bar{\xi}\bigr) \right) \right]\,.
    \end{equation}

Then let us consider the action of Weyl group $W$. The inner product $\bigl(\cdot,\cdot\bigr)$, the coroot lattice $Q^{\vee}$ and the weight lattice $P$ are all Weyl invariant, and one has
    \begin{equation}
        \Theta_{\varphi,\eta}(w\xi,w\bar{\xi}|\tau) = \Theta_{w^{-1}\varphi,w^{-1}\eta}(\xi,\bar{\xi}|\tau)\,,\quad w \in W\,.
    \end{equation}
Physical wavefunction must be Weyl invariant, therefore the labels are reduced to the diagonal Weyl orbits
    \begin{equation}
        [\varphi,\eta] \in (\mathfrak{t}/Q^{\vee} \times P)/W\,.
    \end{equation}
We can use the Weyl rotation to choose a unique representative $\varphi$ in the open fundamental alcove $Y$, the regular part of $(\mathfrak{t}/Q^{\vee})/W$ and leaves $\eta\in P$ unconstrained:
    \begin{equation}
        [\varphi,\eta]\in Y \times P\,,
    \end{equation}
which is precisely the regular part of the simple object set of the quantum double $D(G)$.

These covariance conditions motivate the following Weyl-invariant wavefunction ansatz over the regular sector:
    \begin{equation}
    \label{eq:nonabelian-BF-wavefunctions}
        \Psi_{[\varphi,\eta]}(\xi,\bar{\xi}|\tau) = \frac{1}{C(\xi,\bar{\xi}|\tau)} \frac{e^{\frac{\pi}{2\tau_2} \left( \bigl(\xi,\xi\bigr) + \bigl(\bar{\xi},\bar{\xi}\bigr) \right)}}{|\eta(\tau)|^{2\text{rk}(G)}}\sum_{w\in W}\Theta_{w\varphi,w\eta}(\xi,\bar{\xi}|\tau)\,,
    \end{equation}
where the Cartan oscillator factor is inherited from the abelian calculation.  The factor $C$ is a proposed contribution from the root directions.  The normalization of the Weyl orbit sum is fixed by the convention for $d\varphi$ below; an overall boundary-state normalization remains free.

First impose modular covariance, taking $C$ to be modular invariant.  Poisson summation pairs $Q^\vee$ with its dual character lattice $P$.  We keep the flat Cartan measure $d\varphi$ of section~\ref{sec:Limit_Regular_Simples}, so normalized torus Haar measure is $d\varphi/\operatorname{covol}(Q^\vee)$. The theta functions in~\eqref{eq:non-abelian-BF-theta-function} transform under the $T/S$-transformation as
    \begin{equation}
    \begin{gathered}
        \Theta_{\varphi,\eta}(\xi,\bar{\xi}|\tau+1)= e^{2\pi i \bigl(\varphi,\eta^{\sharp}\bigr)} \Theta_{\varphi,\eta}(\xi,\bar{\xi}|\tau)\,,\\
        \begin{aligned}
        &\Theta_{\varphi,\eta}\left(\frac{\xi}{\tau},\frac{\bar\xi}{\bar\tau}\,\middle|\,-\frac1\tau\right)\\
        &\quad=\frac{|\tau|^{\mathrm{rk}(G)}}{\operatorname{covol}(Q^\vee)}
        e^{\frac{\pi i}{\tau}\bigl(\xi,\xi\bigr)-\frac{\pi i}{\bar\tau}\bigl(\bar\xi,\bar\xi\bigr)}
        \int_{\mathfrak t/Q^\vee}d\varphi'\sum_{\eta'\in P}
        e^{-2\pi i(\bigl(\varphi,\eta'^{\sharp}\bigr)+\bigl(\varphi',\eta^{\sharp}\bigr))}
        \Theta_{\varphi',\eta'}(\xi,\bar\xi|\tau).
        \end{aligned}
    \end{gathered}
    \end{equation}
and the Gaussian factor $G(\xi ,\bar{\xi}|\tau) =e^{\frac{\pi}{2\tau_2} \left( \bigl(\xi,\xi\bigr) + \bigl(\bar{\xi},\bar{\xi}\bigr) \right)}$ transforms as
    \begin{equation}
    \begin{gathered}
        G(\xi ,\bar{\xi}|\tau+1)=G(\xi ,\bar{\xi}|\tau)\,,\\
        G\left(\frac{\xi}{\tau} ,\frac{\bar{\xi}}{\bar{\tau}}|-\frac{1}{\tau}\right) = G(\xi,\bar{\xi}|\tau) \exp \left[-\frac{\pi i}{\tau} \bigl(\xi , \xi\bigr) +
        \frac{i \pi}{\bar{\tau}} \bigl(\bar{\xi},\bar{\xi}\bigr)\right]\,,
    \end{gathered}
    \end{equation}
and the $\eta$-function
    \begin{equation}
        \eta(\tau+1)=e^{\frac{i\pi}{12}} \eta(\tau)\,,\quad \eta\left(-\frac{1}{\tau} \right) = \sqrt{-i \tau} \,\eta(\tau)\,.
    \end{equation}
The Gaussian multipliers cancel, giving the following $T$ and $S$ transformations of the ansatz:
    \begin{equation}
    \label{eq:nonabelian-wavefunction-modular}
    \begin{gathered}
        \Psi_{[\varphi,\eta]}(\xi,\bar{\xi}|\tau+1)=e^{2\pi i \bigl(\varphi,\eta^{\sharp}\bigr)}
        \Psi_{[\varphi,\eta]}(\xi,\bar{\xi}|\tau)\,,\\
        \begin{aligned}
        &\Psi_{[\varphi,\eta]}\left(\frac\xi\tau,\frac{\bar\xi}{\bar\tau}\,\middle|\,-\frac1\tau\right)\\
        &\quad=\frac{1}{\operatorname{covol}(Q^\vee)}\int_Yd\varphi'
        \sum_{\eta'\in P}\sum_{w\in W}
        e^{-2\pi i(\bigl(w\varphi,\eta'^{\sharp}\bigr)+\bigl(\varphi',w\eta^{\sharp}\bigr))}
        \Psi_{[\varphi',\eta']}(\xi,\bar\xi|\tau).
        \end{aligned}
    \end{gathered}
    \end{equation}
These are the regular-sector Fourier kernels of Theorem~\ref{thm:strong-modular}, using its measure convention.  The agreement checks the ansatz but does not prove that covariance determines the boundary theory uniquely.

We choose a candidate for $C$ using three conditions:
    \begin{itemize}
        \item $C(\xi,\bar{\xi}|\tau)$ is the generalization of the $\eta(\tau)$ coupling to the background fields $\xi$ and $\bar{\xi}$.
        \item $C(\xi,\bar{\xi}|\tau)$ can be read from the one-loop contribution of root parts of the parent Chern-Simons theory $G_{k}\times G_{-k}$ in the large $k$ limit.
        \item $C(\xi,\bar{\xi}|\tau)$ is modular invariant.
    \end{itemize}
Recall that the action of the non-abelian $BF$ theory can be written as large $k$ limit of the parent Chern-Simons theory $G_{k}\times G_{-k}$ with field redefinition
    \begin{equation}
        a= \frac{\mathcal{A}_++\mathcal{A}_-}{2}\,,\quad b=k(\mathcal{A}_+-\mathcal{A}_-)\,,\quad \mathcal{A}_{\pm}=a\pm \frac{b}{2k}\,,
    \end{equation}
where we use $\mathcal{A}_{\pm}$ for level-$k$ Chern-Simons theory to distinguish them from $A_{\pm}$. The paired Chern-Simons gauge fields and their holonomies on the boundary have the form
    \begin{equation}
        \mathcal{A}_{\pm}|_{\partial \mathcal{M}} = a|_{\partial \mathcal{M}}\pm\frac{1}{2 k}b|_{\partial \mathcal{M}}\quad \Rightarrow \quad u_{\pm} = u\pm \frac{v}{2k}\,,
    \end{equation}
and in the large $k$ limit, we have both $u_{\pm} \rightarrow u$. One then expects $C(\xi,\bar{\xi}|\tau)$ is the generalization of the $\eta$-functions coupling to the background $u$, and the oscillator contribution for a given root $\alpha$ is therefore proportional to
    \begin{equation}
        \prod_{n\geq 1} \frac{1}{|1-e^{2\pi i \alpha(u)}q^n|^2}\,,
    \end{equation}
and we propose the following expression of $C(\xi,\bar{\xi}|\tau)$
    \begin{equation}
    \label{eq:nonabelian-C-factor}
        C(u|\tau) = \exp\left[-\frac{2\pi h^{\vee}}{\tau_2} \bigl(\textrm{Im}(u),\textrm{Im}(u)\bigr) \right] \prod_{\alpha >0} \left| \frac{i \vartheta_1( \alpha(u)|\tau)}{\eta(\tau)}  \right|^2\,,
    \end{equation}
where the prefactor $\exp\left[-\frac{2\pi h^{\vee}}{\tau_2} \bigl(\textrm{Im}(u),\textrm{Im}(u)\bigr) \right]$ is inserted so that $C(\xi,\bar{\xi}|\tau)$ is a modular invariant.

Before moving on, we will comment on the meaning of $u$ and $v$. Use the same anti-Hermitian coordinate $X\in\mathfrak g$ as in Eq.~\eqref{eq:g-exp-scaling}, so that $g=\exp(X/\sqrt{k})$, and consider the left and right actions as
    \begin{equation}
        g \rightarrow h_L g h_R^{-1}\,, \quad h_L =e^{i \vartheta_L}\,,h_R = e^{i\vartheta_R}\,,\quad \vartheta_L,\vartheta_R \in \mathfrak{t}\,,
    \end{equation}
Here $\vartheta_L,\vartheta_R$ are angular parameters with periods $2\pi Q^\vee$; their normalized holonomy coordinates are $\vartheta_L/(2\pi),\vartheta_R/(2\pi)$. Then it is easy to see that, the translation $P$ along the Cartan direction corresponds to $\vartheta_L =-\vartheta_R$, and the rotation $R$ corresponds to $\vartheta_L=\vartheta_R$. And we can write
\begin{equation}
    g\rightarrow e^{\frac{i}{2} (\vartheta_L+\vartheta_R)} e^{\frac{i}{2} (\vartheta_L-\vartheta_R) } g e^{\frac{i}{2} (\vartheta_L-\vartheta_R)}e^{-\frac{i}{2} (\vartheta_L+\vartheta_R)}=R(\vartheta_L+\vartheta_R)\circ P(\vartheta_L-\vartheta_R)\circ g\,.
\end{equation}
A finite translation of $X$ requires $\vartheta_L-\vartheta_R=O(k^{-1/2})$, while the common rotation remains finite; the difference must therefore be rescaled before taking the limit. Recall that the left and right $U(1)^{\text{rk}G}$ backgrounds are given by $\xi(\tau)$ and $\bar{\xi}(\bar{\tau})$, then we can identify the background for translation and rotation separately as $(\xi - \bar{\xi}^*)/2$ and $(\xi + \bar{\xi}^*)/2$, which gives
    \begin{equation}
    \label{eq:rotation-translation-backgrounds}
        \frac{1}{2}(\xi - \bar{\xi}^*)= \frac{r}{\sqrt{2}}v\,,\quad \frac{1}{2}(\xi+\bar{\xi}^*)=\frac{1}{\sqrt{2}r}u\,.
    \end{equation}
Therefore $u,v$ can be understood as the background of the rotation and translation.

Within the ansatz, we next test a flux-condensing boundary by integrating the neutral flux sectors. Setting $\eta=0$ and $u=0$, we have
    \begin{equation}
        \lambda\equiv \lambda_L=\lambda_R = \frac{1}{\sqrt{2}} \left(\frac{\mu + \varphi}{r}\right)\,,\quad \mu \in Q^{\vee}\,,
    \end{equation}
and 
    \begin{equation}
        \xi = \frac{r}{\sqrt{2}} v\,,\quad \bar{\xi}=-\frac{r}{\sqrt{2}}\bar{v}\,,
    \end{equation}
so that
    \begin{equation}
        \Theta_{\varphi,\eta}(\xi,\bar{\xi}|\tau) = \sum_{\mu \in Q^{\vee}} \exp\left[ i \pi (\tau-\bar{\tau}) \bigl(\lambda,\lambda\bigr) + 2\pi i \bigl(\lambda,(\xi-\bar{\xi})\bigr) \right]\,,
    \end{equation}
and we have
    \begin{equation}
        \int_{\mathfrak{t}/Q^{\vee}} d\varphi \Theta_{\varphi,\eta}(\xi,\bar{\xi}|\tau) = \int_{\mathfrak{t}/Q^{\vee}} d\varphi \sum_{\mu \in Q^{\vee}} \exp\left[ i \pi (\tau-\bar{\tau}) \bigl(\lambda,\lambda\bigr) + 2\pi i \bigl(\lambda,(\xi-\bar{\xi})\bigr) \right]\,.
    \end{equation}
and we can write
    \begin{equation}
        \int_{\mathfrak{t}/Q^{\vee}} d\varphi \sum_{\mu \in Q^{\vee}} = \int_{\mathfrak{t}} d \varphi\,,
    \end{equation}
so that the integral becomes a Gauss integral
    \begin{equation}
        \int_{\mathfrak{t}} d \varphi \exp \left[ - \frac{\pi \tau_2}{r^2} \bigl(\varphi ,\varphi\bigr) + \frac{\sqrt{2}\pi i}{r} \bigl(\varphi , \xi-\bar{\xi}\bigr)  \right] = \left(\frac{2R^2}{\tau_2} \right)^{\frac{1}{2}\text{rk}G} \exp \left[ -\frac{\pi}{2\tau_2}\bigl(\xi -\bar{\xi},\xi-\bar{\xi}\bigr) \right]\,,
    \end{equation}
where $R=r/\sqrt{2}$. Combined with the prefactor $e^{\frac{\pi}{2\tau_2} \left( \bigl(\xi,\xi\bigr) + \bigl(\bar{\xi},\bar{\xi}\bigr) \right)}$, we obtain
    \begin{equation}
        e^{-\frac{\pi}{2\tau_2} (|\xi|^2+|\bar{\xi}|^2)} Z_{\mathcal{L}^{BF}_2}=\frac{1}{C_{\mathrm{reg}}(0|\tau)} \left(\frac{\sqrt{2}R}{|\eta(\tau)|^2 \sqrt{\tau_2}}\right)^{\text{rk}G}e^{-\frac{2\pi R^2}{\tau_2} \bigl(v_2+\tau v_1,v_2 + \bar{\tau}v_1\bigr)}\,,
    \end{equation}
where $v=v_2+\tau v_1$ is the background of the translation symmetry along the Cartan directions.
    
We then focus on the $u\rightarrow 0$ limit of $C(u|\tau)$. Recall that the $\vartheta_1$ function satisfies 
    \begin{equation}
        \vartheta_1(z|\tau) = 2\pi\eta(\tau)^3 z+O(z^3)\,,
    \end{equation}
 in the $z\rightarrow 0$ limit, therefore one has 
    \begin{equation}
        C(u|\tau)\sim \prod_{\alpha>0} 4|\eta(\tau)|^4 |\sin \pi \alpha(u)|^2\rightarrow 0\,.
    \end{equation}
To regularize that, notice that $u$ is the background of the rotation symmetry. Let $X_{\alpha}\in \mathbb{C}$ denote the complex coordinate associated with the pair $(\alpha,-\alpha)$, and the rotation through the angle $\theta=2\pi z$ acts by $g_{\theta} X_{\alpha}=e^{i\theta}X_{\alpha}$. Denote the momentum on the complex plane as $p^2$, and consider the two dimensional heat kernel
    \begin{equation}
        K_s(x,y) = \langle x|e^{-s p^2}|y\rangle = \frac{1}{4\pi s} \exp \left[-\frac{|x-y|^2}{4s} \right]\,,\quad s=2\pi \tau_2
    \end{equation}
and consider the partition function
    \begin{equation}
        \text{Tr}_{L^2(\mathbb{C})}\left(g_{\theta} e^{-s p^2} \right)=\int_{\mathbb{C}}d^2x K_{s}(x,g_{\theta}x)=\frac{1}{4\pi s} \int_{\mathbb{C}} d^2x \exp \left[-\frac{ |1-e^{i\theta}|^2}{4s}|x|^2 \right]\,.
    \end{equation}
For $\theta \neq 0$, it gives 
    \begin{equation}
    \label{eq:rotation-heat-kernel}
        \text{Tr}_{L^2(\mathbb{C})}\left(g_{\theta} e^{-s p^2} \right) = \frac{1}{|1-e^{i\theta}|^2} = \frac{1}{4\sin^2 \pi z}\,.
    \end{equation}
Combining all $X_{\alpha}$, they give the divergent part. On the other hand, if we set $\theta=0$, we have instead
    \begin{equation}
        \text{Tr}_{L^2(\mathbb{C})}\left(e^{-s p^2} \right) = \frac{\text{Vol}(\mathbb{C})}{4\pi s} = \frac{\text{Vol}(\mathbb{C})}{8\pi^2 \tau_2}\,.
    \end{equation}
Therefore at $u\rightarrow 0$, we replace each divergent untwisted zero-mode factor by its volume-regularized heat-kernel trace.  In this prescription
    \begin{equation}
    \label{eq:C-at-zero}
        C_{\mathrm{reg}}(0|\tau):=\prod_{\alpha>0} \frac{8\pi^2|\eta(\tau)|^4 \tau_2}{\text{Vol}(\mathbb{C}_{\alpha})}\,.
    \end{equation}
Finally, we have the partition function
    \begin{equation}
    \label{eq:nonabelian-ZL2-final}
        e^{-\frac{\pi}{2\tau_2} (|\xi|^2+|\bar{\xi}|^2)} Z_{\mathcal{L}^{BF}_2}=\prod_{\alpha>0}\left(\frac{\text{Vol}(\mathbb{C}_{\alpha})}{8\pi^2 |\eta(\tau)|^4 \tau_2}\right)\left(\frac{\sqrt{2}R}{|\eta(\tau)|^2 \sqrt{\tau_2}}\right)^{\text{rk}G}e^{-\frac{2\pi R^2}{\tau_2} \bigl(v_2+\tau v_1,v_2 + \bar{\tau}v_1\bigr)}\,.
    \end{equation}

\paragraph{Double non-abelian Chern-Simons theory}

Let us move to the non-abelian case. The action is
    \begin{equation}
        S=\frac{ik}{4\pi} \int \operatorname{tr}\left( a_+\wedge da_+ + \frac{2i}{3} a_+^3\right) -\frac{ik}{4\pi} \int \operatorname{tr}\left( a_- \wedge da_- + \frac{2i}{3}a_-^3\right)\,,
    \end{equation}
and we similarly introduce the BF variables
    \begin{equation}
        a=\frac{a_++a_-}{2}\,,\quad b= k(a_+-a_-)\,,
    \end{equation}
and the action becomes
    \begin{equation}
        S_k = \frac{i}{4\pi} \int_M \operatorname{tr}(a\wedge db + b\wedge da)-\frac{1}{2\pi} \int_M \operatorname{tr}(b\wedge a \wedge a)-\frac{1}{24\pi k^2} \int_M \operatorname{tr}b^3\,,
    \end{equation}
where $a,b$ are treated in a symmetric way, and the boundary variation is written as
    \begin{equation}
        \delta S_k = -\frac{i}{4\pi} \int_{\partial M} \operatorname{tr}(a\wedge \delta b)-\frac{i}{4\pi} \int_{\partial M} \operatorname{tr}(b\wedge \delta a)\,.
    \end{equation}
Notice that it is independent of the cubic terms and is the same as the variation term in the non-abelian BF theory. Therefore the analysis for the regular sector wavefunction is exactly parallel to the abelian case, and we will summarize the result in the following.

Put $\Psi=fe^{\frac{\pi}{2\tau_2}(\bigl(\xi,\xi\bigr) + \bigl(\bar{\xi},\bar{\xi}\bigr))}$, the independent solution for $f$ can be expanded via the $\Theta$-functions as
    \begin{equation}
        \Theta_{\lambda_+,\lambda_-}(\xi,\bar{\xi}|\tau) =  \sum_{n_+,n_-} \exp \left[i \pi \tau \bigl(p_L,p_L\bigr) - i \pi \bar{\tau} \bigl(p_R,p_R\bigr) +2\pi i (\bigl(p_L,\xi\bigr) - \bigl(p_R,\bar{\xi}\bigr))\right]\,,
    \end{equation}
where the momenta are
\begin{equation}
    p_L=\frac{1}{\sqrt{2}}\left[\frac{j_++j_-}{2kr}+r(j_+-j_-) \right]\,,\quad p_R=\frac{1}{\sqrt{2}}\left[\frac{j_++j_-}{2kr}-r(j_+-j_-) \right]\,,
\end{equation}
with
    \begin{equation}
        j_{\pm} = k m_{\pm} + \lambda^{\vee}_{\pm}\,,\quad m_{\pm} \in Q^{\vee}\,,\quad\lambda^{\vee}_{\pm} \in P^* / k Q^{\vee}\,,
    \end{equation}
where $P^*$ is the dual space of $P$ raised by the inner product $(\, ,\,)$ in Eq.~\eqref{eq:geometric-Cartan-form}, and it is not the same as the dual weight lattice $P^{\vee}$.

Introduce the BF labels
    \begin{equation}
        \varphi = \frac{\lambda^{\vee}_+ + \lambda_-^{\vee}}{2k}\,,\quad \eta =\lambda_+-\lambda_- \in P\,,
    \end{equation}
and write the momenta as
    \begin{equation}
        p_L = \frac{1}{\sqrt{2}} \left[\frac{\mu+w/2+\varphi}{r} + r(kw+\eta^{\vee}) \right]\,,\quad p_R = \frac{1}{\sqrt{2}} \left[\frac{\mu+w/2+\varphi}{r} - r(kw+\eta^{\vee}) \right]\,,
    \end{equation}
with
    \begin{equation}
        \mu=n_-\,,\quad w=n_+-n_-\,.
    \end{equation}
Since the wavefunction is exponentially suppressed by
    \begin{equation}
        \Theta \sim \sum_{n_+,n_-} \exp \left[-\pi \tau_2 \bigl(p_L,p_L\bigr) - \pi \tau_2 \bigl(p_R,p_R\bigr) \right]\,.
    \end{equation}
Keeping $\varphi$ and $\eta$ fixed and taking $k\rightarrow \infty$, the wavefunction is non-vanishing only when $w=0$. Relabel the $\Theta$-function using $\varphi,\eta$, we have in the large $k$ limit
    \begin{equation}
        \Theta_{\varphi,\eta}(\xi,\bar{\xi}|\tau) = \sum_{n \in Q^{\vee}} \exp \left[i \pi \tau \bigl(p_L,p_L\bigr) - i \pi \bar{\tau} \bigl(p_R,p_R\bigr) +2\pi i (\bigl(p_L,\xi\bigr) - \bigl(p_R,\bar{\xi}\bigr))\right]\,, 
    \end{equation}
where the momenta are
    \begin{equation}
        p_L = \frac{1}{\sqrt{2}} \left[\frac{n+\varphi}{r} + r\eta^{\vee} \right]\,,\quad p_R = \frac{1}{\sqrt{2}} \left[\frac{n+\varphi}{r} - r\eta^{\vee}\right]\,.
    \end{equation}
Compared to the non-abelian BF case in \eqref{eq:non-abelian-BF-momenta} and \eqref{eq:non-abelian-BF-theta-function}, we recover the $\Theta$-function given above as the classical contribution. Assuming that the one-loop parts are the same for the two kind of theory, we therefore expect their wavefunctions are the same.

As a comment, the comparison above concerns the classical contribution and hence uses the unshifted level $k$. The quantum formulas in the preceding sections involve $r_k=k+h^\vee$ and $\lambda_\pm+\rho$, but both prescriptions yield the same limiting flux–charge labels in the regular large-$k$ scaling regime.

\subsection{Flat gauging of the candidate boundary theory}
\label{sec:flat-gauging}

The free-scalar candidate of section~\ref{sec:G_WZW} admits the compact adjoint
symmetry $X\mapsto\mathrm{Ad}_gX$.  We examine its formal continuous orbifold by
$G$, following the flat-holonomy construction used in
\cite{Gaberdiel:2011aa}.  Flatness is an additional constraint on the gauge field,
which can be imposed by a Lagrange multiplier; simply dropping a Yang--Mills
kinetic term would not impose it.  The twisted free-field blocks can be computed
exactly.  Their integral over holonomies is infrared divergent, however, and a
complete gauged CFT requires a regulator, a prescription for the singular
sectors, and consistent correlation functions.  We therefore treat this
construction as a boundary proposal, not as an identification with the
Runkel--Watts theory.

\paragraph{Flat gauging on the torus.}
Coupling the scalars to a flat background computes partition functions
twisted by its holonomies, so the gauged partition function is the integral
of the twisted partition functions over the moduli of flat $G$-bundles on
$T^2$: commuting pairs of holonomies modulo conjugation.  For $G$ compact,
connected, simple, and simply connected every commuting pair lies in a common
maximal torus, and the residual identification is the
diagonal Weyl group, so the moduli space is
$\bigl((\mathfrak t/Q^\vee)\times(\mathfrak t/Q^\vee)\bigr)/W$.  This is not
a new arena: the pair $(u_1,u_2)$, with $e^{2\pi iu_1}$ the spatial and
$e^{2\pi iu_2}$ the temporal holonomy, is the rotation background
$u=u_2+\tau u_1$ of section~\ref{sec:nonabelian-BF-wavefunctions}, its
$Q^\vee$-periodicities being the lattice gauge transformations there and its
Weyl-reduced spatial factor being the alcove $Y$ that carries the flux label.
The formal expression is
\begin{equation}\label{eq:flat-gauging-def}
 Z_{\mathfrak g/G}(\tau)
 \;\stackrel{\mathrm{formal}}{=}\;\frac{\mathcal N}{|W|\operatorname{covol}(Q^\vee)}
 \int_{(\mathfrak t/Q^\vee)^2}du_1\,du_2\;
 Z_{\mathfrak g}[u_1,u_2](\tau)\,.
\end{equation}
Here $du_1$ and $du_2$ are flat Cartan measures; the displayed covolume makes the temporal-holonomy integral a normalized projection.  The regular-stratum measure is flat: reducing the class integral over
$G$ to $T$ costs the Weyl density
$\prod_{\alpha>0}|2\sin\pi\alpha(u_1)|^2
=\prod_{\alpha>0}|1-e^{2\pi i\alpha(u_1)}|^{2}$, which cancels against the
Jacobian $\prod_{\alpha>0}|1-e^{2\pi i\alpha(u_1)}|^{-2}$ from solving the
commuting constraint for the second holonomy, per root plane the same
determinant as the twisted heat-kernel
trace~\eqref{eq:rotation-heat-kernel}.  This regular-stratum measure is also preserved by the modular action
$(u_1,u_2)\mapsto(-u_2,u_1)$.  The constant $\mathcal N$ records a choice of
gauge-volume and continuum-state normalization.  Such a constant cannot remove
holonomy-dependent infrared singularities.

Although the scalar action factors through $G/Z(G)$, gauging the two groups is
not automatically the same operation.  The quotient group admits additional
bundle sectors and discrete theta-angle choices; gauging a trivially acting
finite center can lead to decomposition rather than a single universal
decoupled factor~\cite{Hellerman:2006zs}.  In this subsection we keep the simply
connected group $G$ and its commuting-holonomy sectors.  An effective adjoint-group
orbifold requires separate global data.

\paragraph{Twisted blocks from the boundary wavefunctions.}
Under $\mathrm{Ad}(T)$ the target splits into $\text{rk}G$ neutral Cartan
directions and one complex plane $\mathbb C_\alpha$ per positive root,
rotated by the angle $2\pi\alpha(u)$.  The Cartan scalars are untwisted and
neutral, so they contribute the factor
$\bigl(\sqrt2R/(|\eta(\tau)|^2\sqrt{\tau_2})\bigr)^{\text{rk}G}$
of~\eqref{eq:nonabelian-ZL2-final}, independently of $u$; this uses the same fixed Cartan zero-mode density normalization as in the abelian calculation.  Each root plane
contributes the twisted complex-boson block: writing $u=u_2+\tau u_1$,
$s=\alpha(u_1)\in(0,1)$ and $t=\alpha(u_2)$, the product representation of
$\vartheta_1$ gives
\begin{equation}\label{eq:twisted-block-expansion}
 e^{2\pi\tau_2s^2}
 \left|\frac{\eta(\tau)}{\vartheta_1\bigl((t+s\tau)\,|\,\tau\bigr)}\right|^{2}
 =(q\bar q)^{\frac{s(1-s)}{2}-\frac1{12}}
 \prod_{n\ge0}\bigl|1-q^{n+s}e^{2\pi it}\bigr|^{-2}
 \prod_{n\ge1}\bigl|1-q^{n-s}e^{-2\pi it}\bigr|^{-2}\,,
\end{equation}
the trace over a twisted Fock space with modings shifted by $\pm s$, the
temporal twist grading the states by their $\mathfrak t$-weight; the ground
state is spinless, $h=\bar h=\tfrac12s(1-s)$, because the twist is spatial
and shifts left and right movers symmetrically.  Multiplying over positive
roots and using $\sum_{\alpha>0}\alpha\otimes\alpha=h^\vee\bigl(\cdot,\cdot\bigr)$,
under which the prefactors assemble into that
of~\eqref{eq:nonabelian-C-factor},
\begin{equation}\label{eq:twisted-block}
 Z_{\mathfrak g}[u_1,u_2](\tau)
 \;=\;\left(\frac{\sqrt2R}{|\eta(\tau)|^{2}\sqrt{\tau_2}}\right)^{\text{rk}G}
 \frac{1}{C(u_2+\tau u_1\,|\,\tau)}\,,
\end{equation}
with $C$ exactly as proposed in~\eqref{eq:nonabelian-C-factor}.  The free-field calculation reproduces the proposed factor $C$ and its modular
completion.  It is an independent consistency check on the ansatz, not a
derivation of the full $BF$ path integral or of the parent WZW limit.  There are
two useful checks on its normalization.
First, at the untwisted point the regularization~\eqref{eq:C-at-zero}
reduces the integrand to the flux-condensed partition
function~\eqref{eq:nonabelian-ZL2-final} at vanishing background,
\begin{equation}\label{eq:untwisted-point-check}
 Z_{\mathfrak g}[0,0]
 =\left(\frac{\sqrt2R}{|\eta(\tau)|^{2}\sqrt{\tau_2}}\right)^{\text{rk}G}
 \prod_{\alpha>0}\frac{\text{Vol}(\mathbb C_\alpha)}{8\pi^2|\eta(\tau)|^4\tau_2}
 \;=\;Z_{\mathcal L^{BF}_2}\Big|_{u=v=0}\,,
\end{equation}
in the heat-kernel normalization used there.  The zero modes disappear for
nontrivial root twists, but their singularities persist in neighborhoods of the
untwisted locus.  Indeed, if $z=\alpha(u_2+\tau u_1)$ approaches zero, then
\begin{equation}\label{eq:orbifold-root-divergence}
 \left|\frac{\eta(\tau)}{\vartheta_1(z|\tau)}\right|^2
 \sim \frac{1}{4\pi^2|\eta(\tau)|^4|z|^2}.
\end{equation}
Since $d\alpha(u_1)\,d\alpha(u_2)$ is proportional to $d^2z$, this gives a
logarithmic divergence $\int d^2z/|z|^2$ already for $SU(2)$.  Intersections of
root loci can be more singular.  Removing a measure-zero set does not make the
integral finite.

Second, each twisted block transforms covariantly under the modular group, and
the formal measure is invariant.  To turn this into a statement about
$Z_{\mathfrak g/G}$ one must use a regulator that transforms with the holonomies
and control its removal; a cutoff imposed only on the spatial alcove need not
preserve $S$.  In the SymTFT description one would expect gauging to change the
symmetry boundary, but deriving the charge-condensing boundary requires data at
identity flux, which lie outside the regular ansatz.  We do not infer that
boundary from the regular formula alone; see
\cite{Jia:2026vcr,Jia:2026bbo} for continuous symmetry
categories and their gauging.

\paragraph{The regular twisted sectors and their ground states.}
On Weyl-invariant integrands
$\frac1{|W|}\int_{(\mathfrak t/Q^\vee)^2}du_1du_2
=\int_Yd\varphi\int_{\mathfrak t/Q^\vee}du_2$ up to the measure-zero
singular twists, and we rename the spatial twist $\varphi\in Y$ to match the
bulk flux label.  The $u_2$-integral projects each twisted sector onto the
states of vanishing $\mathfrak t$-weight, the invariants of the residual
centralizer $T$ of the regular twist, and this projection is level matching:
from the charge assignments in~\eqref{eq:twisted-block-expansion}, a state
of total weight $\lambda$ has
\begin{equation}\label{eq:level-matching}
 h-\bar h\;\equiv\;\lambda(\varphi)\ \ \mathrm{mod}\ 1\,,
\end{equation}
so the retained states carry integer spin, while the continuous-spin states,
such as $a^{(\alpha)}_{-s_\alpha}|0\rangle_{\mathrm{tw}}$ with
$h-\bar h=s_\alpha$, are removed.  Carrying out the projection
in~\eqref{eq:twisted-block-expansion} gives
\begin{equation}\label{eq:flat-gauged-Z}
 Z_{\mathfrak g/G}(\tau)
 \stackrel{\mathrm{formal}}{=}\mathcal N\left(\frac{\sqrt2R}{|\eta(\tau)|^{2}\sqrt{\tau_2}}\right)^{\text{rk}G}
 \int_Yd\varphi\;
 (q\bar q)^{\,h_{\mathrm{tw}}(\varphi)-\frac{\dim G-\text{rk}G}{24}}\,
 \mathcal F_\varphi(q,\bar q)\,,
\end{equation}
with $s_\alpha:=\alpha(\varphi)\in(0,1)$ for $\varphi\in Y$ and
\begin{equation}\label{eq:projected-tower}
\begin{split}
 \mathcal F_\varphi(q,\bar q)
 &=\frac{1}{\operatorname{covol}(Q^\vee)}\int_{\mathfrak t/Q^\vee}\!\!du_2
 \prod_{\alpha>0}\prod_{n\ge0}\bigl|1-q^{n+s_\alpha}e^{2\pi i\alpha(u_2)}\bigr|^{-2}
 \prod_{n\ge1}\bigl|1-q^{n-s_\alpha}e^{-2\pi i\alpha(u_2)}\bigr|^{-2}\\
 &=1+\sum_{\alpha>0}\Bigl[(q\bar q)^{s_\alpha}+(q\bar q)^{1-s_\alpha}\Bigr]
 +\tfrac{\dim G-\text{rk}G}{2}\,(q+\bar q)+\cdots.
\end{split}
\end{equation}
The displayed $(q+\bar q)$ contribution includes, for each positive root, the holomorphic
invariant $a^{(\alpha)}_{-s_\alpha}a^{(-\alpha)}_{-(1-s_\alpha)}$ at integer
level.  In higher rank, other zero-weight root combinations can contribute at the same order; the displayed series lists selected terms.  These excitations are compatible with the invariant chiral algebra: the
$G$-invariant subalgebra of the $\mathfrak g$-valued Heisenberg algebra
retains no spin-one currents but contains the Casimir currents of spins
$e_i+1$, where $e_i$ are the exponents of $\mathfrak g$, at central charge $\dim G$.

The twisted ground-state weight has the closed form
\begin{equation}\label{eq:twisted-weight}
 h_{\mathrm{tw}}(\varphi)=\bar h_{\mathrm{tw}}(\varphi)
 =\frac12\sum_{\alpha>0}\alpha(\varphi)\bigl(1-\alpha(\varphi)\bigr)
 =\rho(\varphi)-\frac{h^\vee}{2}\,\bigl(\varphi,\varphi\bigr)\,,
\end{equation}
strictly concave on the closed alcove, positive on $Y$, and vanishing
precisely at the vertices at which $e^{2\pi i\varphi}$ is central, where the
adjoint twist trivializes.  Its unique maximum sits at the interior point
$\varphi_*$ determined by $h^\vee\bigl(\varphi_*,x\bigr)=\rho(x)$ for all
$x\in\mathfrak t$, where the Freudenthal--de Vries strange formula
$\bigl(\rho,\rho\bigr)=h^\vee\dim G/12$ gives
\begin{equation}\label{eq:twisted-weight-bound}
 0<h_{\mathrm{tw}}(\varphi)\;\le\;
 h_{\mathrm{tw}}(\varphi_*)=\frac{\bigl(\rho,\rho\bigr)}{2h^\vee}
 =\frac{\dim G}{24}\,;
\end{equation}
one checks $\theta(\varphi_*)=(h^\vee-1)/h^\vee<1$ for the highest root
$\theta$, so $\varphi_*\in Y$.  The total ground-state exponent
in~\eqref{eq:flat-gauged-Z} is $h_{\mathrm{tw}}(\varphi)-\dim G/24$,
consistent with the limiting central charge~\eqref{eq:sugawara-c}, and it
vanishes at $\varphi_*$, the nonabelian analogue of the antiperiodic point.
Restoring the Cartan momentum continuum from the factor
$\tau_2^{-\text{rk}G/2}\propto\int_{\mathfrak t^*}dp\,(q\bar q)^{\frac12\bigl(p,p\bigr)}$,
the oscillator ground-state families have weights
\begin{equation}\label{eq:flat-gauged-spectrum}
 h=\bar h=h_{\mathrm{tw}}(\varphi)+\tfrac12\bigl(p,p\bigr)\,,\qquad
 (\varphi,p)\in Y\times\mathfrak t^*\,,
\end{equation}
in the normalization fixed by~\eqref{eq:WZW-expansion}; regular alcove points have trivial Weyl stabilizer, while their centralizer in $G$ is $T$.  The latter acts trivially on the Cartan momenta $p$.  Excited twisted sectors can contain further primary representations; the ground-state formula does not classify them.

The continuous twist label resembles the continuous-orbifold description of
Runkel--Watts, but the two spectra should not be identified.  Here the central
charge is $\dim G$, and the twist energy is bounded by $\dim G/24$; an unbounded
Cartan momentum continuum is present as well.  Projection ensures integer
conformal spin by~\eqref{eq:level-matching}, not spin zero for every primary.
For $G=SU(2)$ the alcove is an
interval, $s=\alpha(\varphi)\in(0,1)$, and
$h_{\mathrm{tw}}=\tfrac12s(1-s)\in(0,\tfrac18]$ with
$\tfrac18=\tfrac3{24}=\dim SU(2)/24$ attained at $s=\tfrac12$.  Before
projection, the $\varphi$-twisted sector is graded by the
$\mathfrak t$-weight,
$\mathcal H_\varphi=\bigoplus_{\eta}\mathcal H_{\varphi,\eta}$, the
weight-$\eta$ states being candidate junction sectors of the bulk line
$W_{[\varphi,\eta]}$, in parallel with the fixed-charge sectors of
section~\ref{sec:runkel-watts}: the defect sectors of the flat-gauged
boundary are organized by the regular flux--charge label
set~\eqref{eq:introduction-flux-charge-revised} underlying
Theorem~\ref{thm:strong-modular}, realized for $\eta$ in the root lattice
$Q\subseteq P$, with $P/Q\cong\widehat{Z(G)}$ labeling center sectors
invisible to the adjoint fields (cf.\ section~\ref{sec:G_WZW}).  The gauged
theory also carries the dual noninvertible lines $\mathrm{Rep}(G)$, the
Wilson lines of the flat gauge field, acting on the twisted continuum by
$\chi_R\bigl(e^{2\pi i\varphi}\bigr)$; characters distinguish the regular holonomy sectors.  This diagonal action alone does not prove that these sectors are superselection sectors of the full local operator algebra.

\paragraph{What remains to define the orbifold.}
The regular blocks contain continuous ground-state labels $(\varphi,p)$ and
therefore suggest a nonrational theory, once a Hilbert space and a renormalized
spectral measure have been constructed.  In the free noncompact theory the
constant zero-momentum wavefunction is not normalizable.  This observation,
and the continuous spectrum of the regular sectors, do not determine how an
identity operator or additional singular sectors should enter a completed
orbifold.

The first problem is to regulate~\eqref{eq:orbifold-root-divergence} consistently
with modular transformations and factorization.  The second is to include twists
with larger centralizers, whose invariant-state projection differs from the
regular $T$ projection.  The third is to determine the operator products of the
resulting twist fields.  Continuous orbifolds and the Runkel--Watts construction
provide useful comparisons~\cite{Gaberdiel:2011aa,Runkel:2001ng}, but neither
supplies these data for adjoint-valued noncompact scalars.  The present
calculation fixes candidate regular-sector building blocks and identifies these
specific obstructions to completing the boundary theory.

% section_07.tex
\section[The large-k limit of interface theories]
{The large-$k$ limit of interface theories}
\label{sec:interfaces}

So far we hold the level fixed inside each comparison and treat the family
\begin{equation}\label{eq:doubled-family}
\mathsf T_k\;:=\;\DCS{k}{G}\,,\qquad k\in\mathbb Z_{>0}\,,
\end{equation}
as an externally parametrized sequence. A physical realization of the limit
should place different members of \eqref{eq:doubled-family} in one spacetime and
specify what joins them.  This section is an extension of, rather than an input
to, the bulk convergence theorems.  Subsection~\ref{sec:interface-chain}
introduces a finite-weight refinement criterion for this problem.
Proposition~\ref{prop:adjacent-wall-nonrefinement} then proves an exact
obstruction for $SU(2)$, an adjacent level jump, and nonnegative matrices
obeying the necessary genus-one $S$- and $\theta$-conditions.  The coset
branching rules and weight--displacement identities are exact properties of
the proposed gapless chiral algebra, whereas construction of the physical
doubled bulk-coupled interface and its Runkel--Watts limit remain conjectural.

\subsection{The interface chain and refinement}\label{sec:interface-chain}

The category of the doubled theory is the Drinfeld center
$Z(\mathcal C_k)\simeq\mathcal C_k\boxtimes\mathcal C_k^{\mathrm{rev}}$.
Divide a three-manifold $M$ into parallel slabs $R_K,\dots,R_{K+n}$ meeting
along two-dimensional surfaces
$R_{K+j}\cap R_{K+j+1}=\Sigma_{K+j+1/2}$, and place
$\mathsf T_{K+j}$ in $R_{K+j}$.  This gives the chain
\begin{equation}\label{eq:interface-chain}
 \mathsf T_K\ \big|_{\Sigma_{K+1/2}}\ \mathsf T_{K+1}\ \big|_{\Sigma_{K+3/2}}\ \cdots\ \big|_{\Sigma_{K+n-1/2}}\ \mathsf T_{K+n}\,,\qquad K\geq1\,,
\end{equation}
which is illustrated in figure~\ref{fig:interface-chain}.

\begin{figure}[htbp]
\centering
\begin{tikzpicture}[x=1cm,y=1cm,>=Stealth,line width=0.9pt,font=\small]
 \def\H{2.9}
 \fill[black!3]  (0,0)   rectangle (2.7,\H);
 \fill[black!8]  (2.7,0) rectangle (5.4,\H);
 \fill[black!13] (5.4,0) rectangle (7.7,\H);
 \fill[black!18] (7.7,0) rectangle (9.0,\H);
 \fill[black!24] (9.0,0) rectangle (11.7,\H);
 \draw[black!40] (0,0) rectangle (11.7,\H);
 \foreach \x in {2.7,5.4,9.0} \draw[line width=1.4pt] (\x,0) -- (\x,\H);
 \draw[black!40,dashed] (7.7,0) -- (7.7,\H);
 \node at (1.35,2.35) {$\mathsf T_K$};
 \node at (4.05,2.35) {$\mathsf T_{K+1}$};
 \node at (6.55,2.35) {$\mathsf T_{K+2}$};
 \node at (8.35,2.35) {$\cdots$};
 \node at (10.35,2.35) {$\mathsf T_{K+n}$};
 \node[anchor=south,font=\footnotesize] at (2.7,\H+0.04) {$\Sigma_{K+1/2}$};
 \node[anchor=south,font=\footnotesize] at (5.4,\H+0.04) {$\Sigma_{K+3/2}$};
 \node[anchor=south,font=\footnotesize] at (9.0,\H+0.04) {$\Sigma_{K+n-1/2}$};
 \node[anchor=north,font=\footnotesize] at (2.7,-0.06) {$\mathfrak J_K$};
 \node[anchor=north,font=\footnotesize] at (5.4,-0.06) {$\mathfrak J_{K+1}$};
 \node[anchor=north,font=\footnotesize] at (9.0,-0.06) {$\mathfrak J_{K+n-1}$};
 \draw[->,line width=1.1pt] (0.3,1.35) -- (11.4,1.35);
 \foreach \x in {2.7,5.4,9.0} \fill (\x,1.35) circle (2.1pt);
 \node[anchor=south,font=\footnotesize] at (1.5,1.45) {$J$};
 \node[anchor=south,font=\footnotesize] at (4.0,1.45) {$I$};
 \node[anchor=north,font=\footnotesize] at (5.85,1.02)
  {$(y_J,\omega_J)\ \longmapsto\ (x_I,\eta_I)$};
\end{tikzpicture}
\caption{The chain \eqref{eq:interface-chain}. Each slab carries one member of the family \eqref{eq:doubled-family}, and each thick vertical line carries an interface theory $\mathfrak J_k$ coupled to the two slabs it separates. The horizontal arrow follows a line of $\mathsf T_K$ across the chain. At every fixed conformal-weight cutoff, the refinement requirement \eqref{eq:refining-wall} asks that such crossings preserve the charge $\eta$ and move the flux $x$ by a bounded multiple of the mesh $(k+h^\vee)^{-1}$, with a bound independent of $k$.}
\label{fig:interface-chain}
\end{figure}

We seek a two-dimensional theory $\mathfrak J_k$ on $\Sigma_{k+1/2}$ coupled to both adjacent bulks. The coupling must specify how a line of $\mathsf T_k$ continues as a line of $\mathsf T_{k+1}$ and cancel the anomaly localized by the level jump. An intrinsic two-dimensional theory without these bulk couplings is therefore not the complete interface datum, and the detailed construction of the doubled bulk-coupled interfaces $\mathfrak J_k$ remains open. For a gapped interface, only topological transmission data survive in the infrared, described categorically by a bimodule over the two line categories~\cite{Kitaev:2011dxc,Fuchs:2012dt,Kapustin:2010if,Galviz:2026xnp}. At a gapless conformal fixed point, the interface also carries local operators with nonzero conformal weights~\cite{Bachas:2001vj}.

We use the regular coordinates in \eqref{eq:regular-lattice}, with
$a=\lambda_+$ and $b=\lambda_-$.  We write
$I\in\Irr Z(\mathcal C_{k'})$ with coordinates $(x_I,\eta_I)$ and
$J\in\Irr Z(\mathcal C_k)$ with coordinates $(y_J,\omega_J)$, keeping $I$
in the outgoing role and $J$ in the incoming role throughout.  For an
interface $\mathfrak J_k$ between neighboring levels, let
$\mathcal H^{\mathfrak J_k}_{IJ}$ denote the space of local interface
operators at which an incoming line $J$ continues as an outgoing line $I$.
When $\mathfrak J_k$ is conformal, this space is graded by the scaling
dimension $\Delta=h+\bar h$, and we write
$\mathcal H^{\mathfrak J_k}_{IJ}[\Delta\leq\Delta_0]$ for its subspace of
states of dimension at most $\Delta_0$.

\begin{definition}[Finite-weight refinement]
For the convergence problem considered here, we call the neighboring interfaces
$\mathfrak J_k$, $k\geq1$, \emph{uniformly refining at finite conformal weight}
if, for every fixed $0\leq\Delta_0<\infty$, there is a constant
$C_{\Delta_0}$, independent of $k$, such that
\begin{equation}\label{eq:refining-wall}
\mathcal H^{\mathfrak J_k}_{IJ}[\Delta\leq\Delta_0]\neq0
\quad\Longrightarrow\quad
\eta_I=\omega_J
\ \text{ and }\
|x_I-y_J|\leq\frac{C_{\Delta_0}}{k+h^\vee}
\end{equation}
holds for every $k\geq1$. Here $\mathcal H^{\mathfrak J_k}_{IJ}[\Delta\leq\Delta_0]\neq0$ means that a local interface operator of total scaling dimension at most $\Delta_0$ joins the incoming line $J$ to the outgoing line $I$. The pairs $(y_J,\omega_J)$ and $(x_I,\eta_I)$ are their respective flux--charge coordinates, $h^\vee$ is the dual Coxeter number of $G$, and $C_{\Delta_0}$ may depend on the fixed cutoff $\Delta_0$ but not on $k$. The scale $(k+h^\vee)^{-1}$ is the mesh of the flux coordinate in~\eqref{eq:regular-lattice}, since moving an integrable weight by one lattice step moves $x$ by $O((k+h^\vee)^{-1})$. Thus, at every fixed conformal-weight cutoff,~\eqref{eq:refining-wall} allows a number of lattice steps bounded independently of $k$, while the charges agree exactly.  This criterion models identity-like transport in the regular $BF[G]$ limit; it is not a universal consistency condition on interfaces.
\end{definition}

For a gapped interface, the surviving junction spaces are topological and ungraded. Regarding them as concentrated in degree zero and writing their multiplicities as a tunneling matrix $\mathsf W^{k\to k+1}$, condition~\eqref{eq:refining-wall} specializes to
\begin{equation}
 \bigl(\mathsf W^{k\to k+1}\bigr)_{IJ}\neq0
 \quad\Longrightarrow\quad
 \eta_I=\omega_J,
 \qquad
 |x_I-y_J|\leq\frac{C_0}{k+h^\vee}.
\end{equation}
In the next subsection we test this gapped specialization and relate the general finite-weight condition to the conformal spectrum of the unit-jump coset candidate.

\subsection{Gapped obstruction and the gapless coset candidate}\label{sec:gapped-walls}

We note that gapped interfaces exist in this setting. Every Drinfeld center has trivial Witt class~\cite{Davydov:2010kfz}, so $[Z(\mathcal C_k)]=[Z(\mathcal C_{k+1})]=0$ and the Witt obstruction to a topological surface defect between the two doubled phases vanishes~\cite{Fuchs:2012dt}. Placing a gapped boundary of each phase back to back produces one, with a transport map that factorizes as $\mathsf W=uv^{\mathsf T}$, so that the admissible outputs depend on $u$ alone and the admissible inputs on $v$ alone. Such an interface is reflecting, namely for such $\mathsf W$, the allowed outgoing labels are independent of the incoming label, so this interface leaves the near-diagonal transport required by~\eqref{eq:refining-wall} unrealized. More general gapped interfaces can have nonfactorized tunneling matrices. We therefore test arbitrary nonnegative tunneling matrices against the necessary genus-one consistency conditions. To be more precise, recall that $\mathsf W_{IJ}$ counts the multiplicity with which an incoming line $J$ from the level-$k$ side crosses as an outgoing line $I$ on the level-$k'$ side. Compatibility with the two generators of the torus mapping-class-group action requires
\begin{equation}\label{eq:tunneling-conditions}
 \mathcal S_{k'}\mathsf W=\mathsf W\mathcal S_k\,,
 \qquad
 \theta^{(k')}_I\mathsf W_{IJ}=\theta^{(k)}_J\mathsf W_{IJ}\,,
 \qquad
 \mathsf W_{IJ}\in\mathbb Z_{\geq0}\,,
\end{equation}
tested on $T^2\times I$ with an interface at $t_0\in I$, where $\mathcal S$ and $\theta$ are the modular $S$-matrix and the twists of the two doubled phases~\cite{Lan:2014uaa,Kitaev:2011dxc}. These conditions are necessary, so they serve as an obstruction test, while lifting a given intertwiner to a bimodule category remains a further question. For $G=SU(2)$ and neighboring levels they are restrictive in the following sense.

For this $SU(2)$ test we write the alcove coordinate as $x=t\varpi_1$
and use $d(x,y)=|t(x)-t(y)|$ in the numerical bounds below.  If instead the basic
invariant norm of section~\ref{sec:Limit_Regular_Simples} is used, every distance
bound is multiplied by $\|\varpi_1\|=1/\sqrt2$; the obstruction to uniform
refinement is unchanged.

The physical tunneling condition in~\eqref{eq:tunneling-conditions} requires nonnegative
integer entries.  The following stronger obstruction uses only nonnegative real entries
and therefore applies a fortiori to physical tunneling matrices.
\begin{proposition}[$SU(2)$ obstruction to uniformly refining gapped transport]\label{prop:adjacent-wall-nonrefinement}
Let $G=SU(2)$, let $k\geq1$, and let $\mathsf W\neq0$ be a nonnegative real matrix obeying the two linear conditions in~\eqref{eq:tunneling-conditions} with $k'=k+1$. Then $\mathsf W$ carries a nonzero entry $\mathsf W_{IJ}$ with $\eta_I\neq\omega_J$, or a nonzero charge-preserving entry with
\begin{equation}\label{eq:wall-spread}
 \eta_I=\omega_J
 \qquad\text{and}\qquad
 |x_I-y_J|\geq\frac12-\frac1{k+3}\,.
\end{equation}
Consequently, no sequence of nonzero matrices satisfying the necessary genus-one conditions can be the tunneling data of a uniformly refining gapped interface in the sense of~\eqref{eq:refining-wall}.
\end{proposition}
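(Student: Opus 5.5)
The plan is to argue by contradiction and focus on the vacuum row of $\mathsf W$. Suppose $\mathsf W\neq0$ is nonnegative, satisfies $\mathcal S_{k+1}\mathsf W=\mathsf W\mathcal S_k$ and the twist condition, and every nonzero entry is charge preserving with $|x_I-y_J|<\tfrac12-\tfrac1{k+3}$. Throughout I use the $SU(2)$ coordinates of \eqref{eq:SU2-flux-charge-example} with $r=k+2$ on the incoming side and $r'=k+3$ on the outgoing side. The outgoing vacuum $I_0=(0,0)$ has $t=1/(k+3)$ and $\eta=0$. The aim is to find a nonzero entry $\mathsf W_{I_0J}$ with $\omega_J=0$ and $t_J\geq\tfrac12$. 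That already gives \eqref{eq:wall-spread}, since $|t_J-1/(k+3)|\geq\tfrac12-\tfrac1{k+3}$.

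\emph{Step 1 (the vacuum row is nonzero).} Pick any nonzero column $J$ of $\mathsf W$. The doubled vacuum row $(\mathcal S_{k+1})_{I_0I}=S_{0a'}S_{0b'}$ is strictly positive, because $S_{0a}=\sqrt{2/r'}\sin(\pi(a+1)/r')>0$. Hence $(\mathcal S_{k+1}\mathsf W)_{I_0J}>0$. By the intertwining relation this equals $\sum_{J'}\mathsf W_{I_0J'}(\mathcal S_k)_{J'J}$, so the vacuum row $v:=\mathsf W_{I_0\,\cdot}$ is nonzero. Under the contradiction hypothesis, $v$ is supported on charge-zero labels $J'=(c,c)$, with $t_{J'}=(c+1)/(k+2)$. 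The twist condition is automatic there, since both doubled twists equal $1$ at $\eta=0$.

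\emph{Step 2 (invariance of $v$ under the simple current).} This is the key step. Let $\sigma$ denote fusion with the simple current $(k,k)$ of $Z(\mathcal C_k)$. It sends $(c,c)\mapsto(k-c,k-c)$, i.e.\ $t\mapsto1-t$. It acts on $\mathcal S_k$ by a sign: $(\mathcal S_k)_{\sigma J',J}=(-1)^{c+d}(\mathcal S_k)_{J'J}$ for $J=(c,d)$, with no sign on the row index when $J'$ has charge zero. The level $k+1$ side has the analogous current $\sigma'=(k+1,k+1)$ with the same sign rule.

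First I would show that $\mathsf W$ intertwines the gradings: a nonzero entry $\mathsf W_{IJ}$ forces $(-1)^{a'+b'}=(-1)^{c+d}$. For charge-preserving entries this parity is $(-1)^{\eta}$ on both sides, so it holds trivially. Next I would compare $v\,\mathcal S_k=(\mathcal S_{k+1})_{I_0\,\cdot}\mathsf W$ with the analogous identity for the row $\sigma'I_0$; the two rows $(\mathcal S_{k+1})_{I_0\,\cdot}$ and $(\mathcal S_{k+1})_{\sigma'I_0\,\cdot}$ coincide up to the sign $(-1)^{a'+b'}$. Using the parity compatibility and invertibility of $\mathcal S_k$, I expect to obtain $\mathsf W_{\sigma'I_0,\sigma J}=\mathsf W_{I_0J}$ for charge-zero $J$.

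I expect this step to be the main obstacle. The natural identity relates the pair $(I_0,J)$ to the pair $(\sigma'I_0,\sigma J)$, not to $(I_0,\sigma J)$, so it may only show that the vacuum row and the $\sigma'$-row are mirror images of each other. If the direct argument fails, I would fall back on the entry $\mathsf W_{\sigma'I_0,\sigma J}$ itself. The current $\sigma'I_0$ sits at $t=(k+2)/(k+3)$ and $\sigma J$ sits at $1-t_J$. Then either $|t_J-\tfrac1{k+3}|$ or $|(1-t_J)-\tfrac{k+2}{k+3}|=|t_J-\tfrac1{k+3}|$ is the relevant displacement, and this alone does not reach $\tfrac12$.

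So the real fallback is different. Note that $(\mathcal S_{k+1}\mathsf W)_{I_0J}$ and $(\mathcal S_{k+1}\mathsf W)_{\sigma'I_0,J}$ differ only on columns of odd charge. Positivity of both for the same nonzero column $J$ should then force $v$ and its $\sigma$-reflection to produce the same $\mathcal S_k$-image on even-charge columns. The even-charge columns of $\mathcal S_k$ separate charge-zero labels modulo $\sigma$. The resulting symmetry $v_{\sigma J}=v_J$ is what I would try to prove first.

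\emph{Step 3 (conclusion).} Given the invariance, pick $J$ with $v_J\neq0$. Then $v_{\sigma J}\neq0$ too, and one of $t_J$ and $1-t_J$ is at least $\tfrac12$. The corresponding entry in the vacuum row is charge preserving with $|x_{I_0}-y|\geq\tfrac12-\tfrac1{k+3}$, contradicting the hypothesis.

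For the final "consequently" clause: a uniformly refining gapped interface needs $|x_I-y_J|\leq C_0/(k+2)$ and exact charge preservation. This is incompatible with a spread bounded below by $\tfrac12-\tfrac1{k+3}$ as soon as $C_0/(k+2)<\tfrac12-\tfrac1{k+3}$, which holds for all large $k$. Only nonnegativity and the two linear conditions are used, so the obstruction applies a fortiori to integer tunneling matrices.
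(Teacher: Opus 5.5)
Your Steps 1 and 3 are fine, but Step 2 is not proved, and the mechanism you suggest for it cannot work. Write $\sigma$ for fusion with $(k,k)$ and $p=k+2$. For every vector $v$ one has $(\mathcal S_k)_{\sigma J',J}=(-1)^{\omega_J}(\mathcal S_k)_{J'J}$, and hence
\[
 \bigl((v-\sigma v)\mathcal S_k\bigr)_J=\bigl(1-(-1)^{\omega_J}\bigr)(v\mathcal S_k)_J .
\]
Agreement of $v\mathcal S_k$ and $(\sigma v)\mathcal S_k$ on even-charge columns is therefore automatic and carries no information; the unit vector at $(0,0)$ already satisfies it. The ``positivity of both'' rows is also unavailable. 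On an odd-charge column $J$ with $\mathsf W_{\cdot J}\neq0$, the $I_0$-entry of $\mathcal S_{k+1}\mathsf W$ is positive, while the $\sigma'I_0$-entry is $(-1)^{\omega_J}$ times it, hence negative.

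The display shows that $v=\sigma v$ is \emph{equivalent} to $(v\mathcal S_k)_J=(\mathcal S_{k+1}\mathsf W)_{I_0J}=0$ for every odd-charge $J$. Because the vacuum row of $\mathcal S_{k+1}$ is strictly positive, this is in turn equivalent to the vanishing of every odd-charge column of $\mathsf W$. That cannot follow from the $S$-relation and nonnegativity alone. It is exactly where the twist condition must enter, and you use the twist condition only at $\eta=0$, where it is vacuous.

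The missing ingredient is a twist computation on charged labels. The paper uses the single column $J_*=(1,0)$. Against any charge-preserving output $(q+1,q)$, the twist equation reduces to $2pq=3$, so this column of $\mathsf W$ vanishes. Evaluating the vacuum row of $\mathcal S_{k+1}\mathsf W=\mathsf W\mathcal S_k$ at $J_*$ then gives $\sum_a u_a\sin(2\pi(a+1)/p)\sin(\pi(a+1)/p)=0$. Since the summands are positive for $a+1<p/2$ and the $u_a\ge0$ are not all zero, some $u_a>0$ with $a+1\geq p/2$.

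Your reflection route can be repaired in the same spirit. Write $\theta^{(k)}_{(c,d)}=\exp[2\pi i\,\omega s/(4p)]$ with $s=c+d+2$, and similarly $s'=a'+b'+2$ at level $k+1$. A charge-preserving match then requires $\eta\,(ps'-(p+1)s)\in 4p(p+1)\mathbb Z$. For odd $\eta$ both $s$ and $s'$ are odd, so $ps'-(p+1)s$ is odd and there is no solution. All odd-charge columns of $\mathsf W$ therefore vanish. The display then gives $v=\sigma v$ by invertibility of $\mathcal S_k$, and your Step 3 concludes.

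Once repaired, your route proves a stronger structural fact: the vacuum row is symmetric under $t\mapsto 1-t$. The paper's argument is more economical, needing only one vanishing column and one sign.
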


\begin{proof}
Label the simple objects of $Z(\mathcal C_k)$ by pairs $(a,b)$ with $0\leq a,b\leq k$, and put $p=k+2$. We identify the $SU(2)$ alcove with $[0,1]$ by writing $x=t\varpi_1$ where $\varpi_1$ is the fundamental weight of $SU(2)$ and use the scalar coordinate $t$ in the absolute value in~\eqref{eq:refining-wall}, namely we measure distance in the alcove by $d(x,y)=|t(x)-t(y)|$. In these labels the chiral and doubled $S$-matrices and the doubled twist are
\begin{align}
 S^{(k)}_{ac}
 &=\sqrt{\frac2p}\,
   \sin\!\left(\frac{\pi(a+1)(c+1)}p\right),
 &
 \mathcal S_{k,(a,b),(c,d)}
 &=S^{(k)}_{ac}S^{(k)}_{bd},
 \label{eq:su2-doubled-S-for-wall}\\
 \theta^{(k)}_{(a,b)}
 &=\exp\!\left[
   2\pi i\frac{a(a+2)-b(b+2)}{4p}
   \right].
 \label{eq:su2-doubled-twist-for-wall}
\end{align}
Suppose first that every nonzero entry of $\mathsf W$ preserves the charge. Consider the incoming level-$k$ label $J_\ast=(1,0)$. Every charge-preserving outgoing label has the form $I_q=(q+1,q)$, where $0\leq q\leq k$. For $\mathsf W_{I_q,J_*}\neq 0$, the $\theta$-equation of~\eqref{eq:tunneling-conditions} would require
\begin{equation}\label{eq:charge-one-twist-obstruction}
 \frac{2q+3}{4(p+1)}\equiv\frac{3}{4p}\pmod 1.
\end{equation}
Both representatives in~\eqref{eq:charge-one-twist-obstruction} lie strictly between $0$ and $1/2$, so the congruence is an equality. It would give $2pq=3$, which is impossible. The entire $J_\ast$-column of $\mathsf W$ therefore vanishes.

The row of the outgoing vacuum $I_{\rm vac}=(0,0)$ must be nonzero. Indeed, if it vanished, then the $I_{\rm vac}$-row of $\mathcal S_{k+1}\mathsf W=\mathsf W\mathcal S_k$ would give, for every incoming label $J$,
\begin{equation}
 0=\sum_I \mathcal S_{k+1,I_{\rm vac}I}\mathsf W_{IJ}.
\end{equation}
The vacuum row of the doubled $SU(2)$ matrix is strictly positive, and $\mathsf W$ is nonnegative, so this would force $\mathsf W=0$. Charge preservation now allows us to write the nonzero vacuum row as
\begin{equation}
 u_a:=\mathsf W_{(0,0),(a,a)}\geq0,
 \qquad 0\leq a\leq k,
\end{equation}
with not all $u_a$ zero. Since the $J_\ast$-column of $\mathsf W$ vanishes,
$(\mathcal S_{k+1}\mathsf W)_{(0,0),J_\ast}=0$. The $S$-intertwining
equation, charge preservation in the outgoing-vacuum row, and
\eqref{eq:su2-doubled-S-for-wall} therefore give
\begin{equation}
\begin{split}
	0 &=(\mathcal S_{k+1}\mathsf W)_{(0,0),J_\ast}
  =(\mathsf W\mathcal S_k)_{(0,0),J_\ast} \\ 
  &=\sum_{c,d=0}^{k}\mathsf W_{(0,0),(c,d)}
   \mathcal S_{k,(c,d),(1,0)} =\sum_{a=0}^{k}u_a\,
   \mathcal S_{k,(a,a),(1,0)}
 =\sum_{a=0}^{k}u_a\,S^{(k)}_{a1}S^{(k)}_{a0} \\ 
 &=\frac2p\sum_{a=0}^{p-2}u_a
   \sin\!\left(\frac{2\pi(a+1)}p\right)
   \sin\!\left(\frac{\pi(a+1)}p\right).
\end{split}
\label{eq:vacuum-row-sign-obstruction}
\end{equation}
The second sine is strictly positive, while the first is strictly positive for $a+1<p/2$. Since the coefficients $u_a$ are nonnegative and not all zero,~\eqref{eq:vacuum-row-sign-obstruction} forces $u_a>0$ for some $a+1\geq p/2$. For the corresponding nonzero entry the charges vanish and
\begin{equation}
 x_{I_{\rm vac}}=\frac1{p+1},
 \qquad
 y_{(a,a)}=\frac{a+1}{p}\geq\frac12,
\end{equation}
which proves~\eqref{eq:wall-spread}. If the support of $\mathsf W$ was not charge-preserving to begin with, the first alternative of the proposition already holds. Thus every nonzero solution violates the gapped specialization of~\eqref{eq:refining-wall} unless its degree-zero refinement constant obeys
\begin{equation}
 C_0\geq p\left(\frac12-\frac1{p+1}\right),
\end{equation}
whose right-hand side diverges as $p\to\infty$.
\end{proof}

\paragraph{Scope of the obstruction.}
Only special gapped interfaces preserve the charge. We impose it in~\eqref{eq:refining-wall} because the regular $BF[G]$ limit keeps $\eta$ as an unscaled discrete weight, since a flux displacement of order $k^{-1}$ vanishes in the limit, whereas $\eta_I-\omega_J\neq0$ remains a nontrivial action on the limiting charge sectors. This selection rule would follow from an appropriate symmetry preserved by an interface, and it lies beyond what gappedness or a bound on the interface central charge supplies. A charge-changing gapped interface may therefore be physically admissible, but it would realize a nontrivial $BF[G]$ defect in place of the uniformly refining transport required here. The proposition rules out gapped interfaces only as realizations of this special kind of transport, namely their genus-one data either change the charge or contain a charge-preserving transmission separated by at least $1/2-1/(k+3)$ in the flux coordinate.

Within the scope just stated, the obstruction to a gapped interface satisfying~\eqref{eq:refining-wall} motivates a gapless candidate whose local sectors correlate the line labels on the two sides. For an $SU(2)$ level-$r$ jump, a natural chiral algebra is the diagonal coset
\begin{equation}\label{eq:interface-coset}
 \begin{aligned}
  \mathcal V^{(r)}_k
  &:=\frac{SU(2)_k\times SU(2)_r}{SU(2)_{k+r}}\,,\\
  c\bigl(\mathcal V^{(r)}_k\bigr)
  &=\frac{3k}{k+2}+\frac{3r}{r+2}-\frac{3(k+r)}{k+r+2}\,,
 \end{aligned}
\end{equation}
with the denominator embedded diagonally. This coset construction is standard~\cite{Goddard:1984vk}, and a fermionic construction of interfaces related to this $SU(N)$ coset appears in~\cite{Gaiotto:2026qai}\footnote{In the resolved description of
\cite[Fig.~4 and Eq.~(6)]{Gaiotto:2026qai}, a level-$r$ jump
involves a $G_r$ Chern--Simons segment with one end at the $\mathcal V^{(r)}_k$ coset junction and the other end at an additional
two-dimensional boundary sector $\mathcal A_r/G_r$, where
$\mathcal A_r$ is a matter theory carrying the level-$r$
$G$ current algebra. Shrinking this segment brings the
additional sector into the interface. Here we only focus on the
$\mathcal V^{(r)}_k$ coset junction and its conjugate in the
doubled theory. One may seek a completion in which the
additional boundary sector $\mathcal A_r/G_r$ and its conjugate
are gapped together. We do not establish such a completion here.}.  There the level-one factor is completed by a $U(1)$ current into complex fermions, and the resulting bulk couplings are part of the construction.  Our fixed-rank, large-level doubled interface still requires its own coupling and charge-selection rule. For the unit jump the coset is the unitary Virasoro minimal model of~\cite{Friedan:1983xq,Belavin:1984vu},
\begin{equation}\label{eq:interface-minimal-model}
 \mathcal V^{(1)}_k\;\simeq\;\mathcal M(p,p+1)\,,
 \qquad
 c\bigl(\mathcal V^{(1)}_k\bigr)=1-\frac{6}{(k+2)(k+3)}\,,
\end{equation}
and, writing $L_\ell(a)$ for the integrable affine $SU(2)_\ell$ module with highest-weight label $a$, the affine branching rule reads~\cite[Eq.~(2.20)]{Goddard:1986ee}
\begin{equation}\label{eq:unit-jump-affine-branching}
 L_k(a)\otimes L_1(s)
 =\bigoplus_{\substack{0\leq b\leq k+1\\ a+s+b\equiv0\ ({\rm mod}\ 2)}}
 B^{\,b}_{a,s}\otimes L_{k+1}(b)\,,\qquad s\in\{0,1\}\,,
\end{equation}
up to the usual coset field identification, with $B^{\,b}_{a,s}$ the corresponding $\mathcal M(p,p+1)$ branching module. In this decomposition, $B^{\,b}_{a,s}$ pairs the incoming label $a$ at level $k$ and the auxiliary label $s$ at level one with the outgoing label $b$ at level $k+1$.

To test compatibility with the bulk level jump, we fold the interface by reversing the orientation of the level-$(k+r)$ bulk. The resulting boundary of $SU(2)_k\times SU(2)_{-(k+r)}$ Chern--Simons theory has relative gauge anomaly $k-(k+r)=-r$, so we introduce a boundary $\widehat{su}(2)_r$ current algebra with anomaly $+r$. Its current combines with the level-$k$ current to form $J_{\rm diag}=J^{(k)}+J^{(r)}$ at level $k+r$, matching the gauge symmetry of the folded outgoing bulk. Gauging this diagonal symmetry leaves the coset~\eqref{eq:interface-coset} as the gauge-invariant chiral algebra. Repeating the construction for the negative-level factors and choosing the diagonal modular invariant gives a standalone nonchiral completion.  This anomaly-matching argument identifies a compatible candidate; it does not construct the required interface couplings to both bulks.

In the doubled theory, we write the incoming and outgoing labels as $J=(a_+,a_-) \in\Irr(\mathcal C_k)\times\Irr(\mathcal C_k^{\mathrm{rev}}) \cong \{0,\ldots,k\}^2$ at level $k$ and $I=(b_+,b_-)\in\Irr(\mathcal C_{k+1})\times\Irr(\mathcal C_{k+1}^{\mathrm{rev}}) \cong \{0,\ldots,k+1\}^2$ at level $k+1$, and apply~\eqref{eq:unit-jump-affine-branching} to the two chiralities with auxiliary labels $s_\pm\in\{0,1\}$. To exhibit refining channels, we choose a fixed integer $\delta$, independent of $k$, and set $b_\pm=a_\pm+\delta$ and $s_\pm\equiv\delta\ ({\rm mod}\ 2)$ provided $0\le a_\pm+\delta\le k+1$. Then the branching rule $a_\pm+s_\pm+b_\pm\equiv0\pmod2$ is satisfied and the common shift preserves the charge
\begin{equation}
 \eta_I=b_+-b_-=a_+-a_-=\omega_J
 \qquad\text{in units of }\varpi_1,
\end{equation}
and $|x_I-y_J|\leq(|\delta|+1)/(k+3)$. Thus a fixed $\delta$ moves the line by only a bounded number of microscopic label steps. The same branching rule also permits shifts of order $k$, but the calculation below places them at conformal weights of order $k^2$, so they stay outside every fixed-weight cutoff. Hence the coset contains the required finite-weight near-diagonal sectors. Establishing~\eqref{eq:refining-wall} for a physical interface nevertheless requires its doubled bulk coupling to select charge-preserving left--right combinations in every bounded-weight sector, which is beyond the scope of this work.

We can further relate this displacement to conformal weight without restricting to zero charge. Consider a charge-preserving channel in~\eqref{eq:unit-jump-affine-branching} with common charge $\eta:=\eta_I=\omega_J$. Charge preservation forces the common shift $b_+-a_+=b_--a_-=\delta$ introduced above. Writing $h_\pm$ for the lowest weights of the two chiral coset modules, we have~\cite[Eq.~(1.7)]{Goddard:1986ee}
\begin{equation}\label{eq:coset-weight}
 h_\pm
 =
 \frac{\bigl((p+1)(a_\pm+1)-p(b_\pm+1)\bigr)^2-1}{4p(p+1)}
 =\frac{(a_\pm+1-p\delta)^2-1}{4p(p+1)}\,.
\end{equation}
The definitions of the flux and charge give
\begin{equation}
 a_\pm+1=p\,y_J\pm\frac{\eta}{2},
 \qquad
 x_I-y_J=\frac{\delta-y_J}{p+1}.
\end{equation}
Substituting these identities into~\eqref{eq:coset-weight} yields the exact relations
\begin{equation}\label{eq:weight-vs-displacement}
 \begin{aligned}
 \Delta:=h_++h_-
 &=
 \frac{p(p+1)}2\,(x_I-y_J)^2
 +\frac{\eta^2/4-1}{2p(p+1)},\\
 h_+-h_-
 &=-\frac{\eta}{2}\,(x_I-y_J).
 \end{aligned}
\end{equation}
For fixed $\eta$, the flux part of~\eqref{eq:refining-wall} reads $|x_I-y_J|\leq C_{\Delta_0}/p$ at a fixed cutoff $\Delta_0$, and~\eqref{eq:weight-vs-displacement} gives
\begin{equation}
 \Delta\leq\frac{p+1}{2p}\,C_{\Delta_0}^2
 +\frac{\eta^2/4-1}{2p(p+1)}
 =O(1)\,.
\end{equation}
Conversely, for fixed $\eta$, a uniform bound on $\Delta$ implies $p|x_I-y_J|=O(1)$ and hence the flux part of~\eqref{eq:refining-wall}. On a cylinder of circumference $L$, the corresponding excitation above the vacuum satisfies
\begin{equation}\label{eq:interface-cylinder-energy}
 \frac{L}{2\pi}\bigl(E-E_{\rm vac}\bigr)=\Delta\,.
\end{equation}
Thus, in every fixed-charge channel, bounded displacement in lattice units is equivalent to uniform boundedness of the dimensionless cylinder energy. Moreover,~\eqref{eq:weight-vs-displacement} gives $h_+-h_-=O(p^{-1})$ under refinement, so these sectors become spinless as $k\to\infty$. For $\eta\neq0$ the two finite-level chiral modules are paired off-diagonally and describe candidate junction operators attached to bulk lines, distinct from the intrinsic primaries of the standalone diagonal minimal model.

The charge-preserving alternative of Proposition~\ref{prop:adjacent-wall-nonrefinement} forces a transition with $\eta=0$ and macroscopic flux displacement.  If the same pair of labels is assigned the coset weights in~\eqref{eq:weight-vs-displacement}, its dimension is $\Delta\gtrsim p^2/8$.  Such a coset channel leaves every fixed-weight sector.  A gapped tunneling matrix has no conformal grading with which to make this separation; its failure of refinement is already established directly by the proposition.

Proposition~\ref{prop:adjacent-wall-nonrefinement} constrains only the gapped specialization of~\eqref{eq:refining-wall}. Its hypotheses describe the transmission by a single ungraded nonnegative matrix $\mathsf W$ satisfying the topological $S$- and $\theta$-intertwining conditions. For a gapless interface, $\mathcal H^{\mathfrak J_k}_{IJ}$ retains its conformal grading, so the ungraded matrix $\mathsf W$ alone leaves the transmission underdetermined. In the charge-preserving coset sectors above, fixed shifts $\delta=O(1)$ and fixed charges $\eta=O(1)$ occur at bounded conformal weight, whereas an $O(1)$ flux displacement has $\Delta=O(k^2)$ by~\eqref{eq:weight-vs-displacement}. Refinement~\eqref{eq:refining-wall} is therefore a condition on the uniformly bounded-weight part of this graded transmission, in place of the support obtained after forgetting the grading. The gapless candidates thus lie exactly outside the hypotheses of the proposition.

\subsection{The large-level limit of the unit-jump interface}
\label{sec:runkel-watts}

Let $\mathfrak J^{(1)}_k:\mathsf T_k\to\mathsf T_{k+1}$ denote a prospective elementary unit-jump interface with candidate chiral algebra $\mathcal V^{(1)}_k$. Holding the jump $r$ fixed in~\eqref{eq:interface-coset}, we find that the first and third terms cancel in the limit, so the exact chiral central charge obeys
\begin{equation}\label{eq:interface-central-charge-limit}
 c\bigl(\mathcal V^{(r)}_k\bigr)\;\xrightarrow[k\to\infty]{}\;\frac{3r}{r+2}\,,
 \qquad\text{so}\qquad
 c\bigl(\mathcal V^{(1)}_k\bigr)\;\longrightarrow\;1\,.
\end{equation}
The positive limiting central charge is incompatible with a purely topological limit of this fixed conformal sector.  If this candidate is realized by a physical bulk-coupled interface, conformal degrees of freedom survive on each interface of the chain~\eqref{eq:interface-chain}. For fixed $n$, its large-$K$ limit would then suggest $BF$ regions separated by nontrivial limiting interface theories. We now ask which $c=1$ theory could occupy an elementary interface.

Using~\eqref{eq:interface-minimal-model} we recognize the chiral family as exactly the unitary Virasoro minimal series $\mathcal M(p,p+1)$, taken with the diagonal modular invariant. Runkel and Watts study that same limit and propose a nonrational $c=1$ theory $\mathcal M_\infty$ whose nonchiral primary spectrum is a continuum of spinless representations with weights
\begin{equation}\label{eq:RW-spectrum}
 h=\bar h=\frac{\mathsf x^2}{4}\,,\qquad \mathsf x\in S:=\mathbb R_{>0}\setminus\mathbb Z_{>0}\,,
\end{equation}
with the removed integers labeling modules with null vectors~\cite[Sections~1 and~2.1]{Runkel:2001ng}. The proposal is neither rational nor quasi-rational.  Its later interpretation in terms of the twisted sector of a continuous orbifold provides additional structural support~\cite[Sections~1 and~4]{Gaberdiel:2011aa}; neither construction supplies the bulk couplings required of the interface considered here.

For $G=SU(2)$, we first compare neutral diagonal primaries across levels by selecting sequences in different finite-level Hilbert spaces. For every sufficiently large $k$, choose an incoming label $J=(a_k,a_k)$ at level $k$ and an outgoing label $I=(b_k,b_k)$ at level $k+1$ such that, for a fixed $\delta\in\mathbb Z$,
\begin{equation}\label{eq:RW-fixed-flux-scaling}
 \frac{a_k+1}{k+2}\longrightarrow t\in(0,1),
 \qquad
 b_k-a_k=\delta
 \quad\Longrightarrow\quad
 \frac{b_k+1}{k+3}\longrightarrow t,
\end{equation}
so the incoming and outgoing fluxes approach the same macroscopic alcove point $t$, while $\delta$ encodes their fixed microscopic label shift. Substitution in~\eqref{eq:coset-weight} gives the Runkel--Watts spectrum~\eqref{eq:RW-spectrum}:
\begin{equation}\label{eq:coset-weight-limit}
 h\;\xrightarrow[k\to\infty]{}\;\frac{(t-\delta)^2}{4}\,,
 \qquad\text{that is}\qquad
 \mathsf x=|t-\delta|\,.
\end{equation}
The limiting primary is therefore specified by both the macroscopic flux $t$ and the microscopic shift $\delta$, while fixing $t$ alone leaves $\mathsf x$ undetermined. As $t$ runs over the open alcove and $\delta$ over the integers, $|t-\delta|$ fills $S$, with the usual coset field identification relating redundant labels, and the integers are missed precisely because $t$ stays regular. Holding the weights fixed in place of the flux gives $(a_k+1)/(k+2)\to0$ and returns the removed values $\mathsf x=|\delta|$.

The same limiting weights occur at every fixed charge. Choose a charge-preserving sequence with $y_J\to t$, $\omega_J=a_+-a_-=\eta_I=b_+-b_-=\eta$ fixed, and $b_\pm-a_\pm=\delta$. Equation~\eqref{eq:coset-weight} gives
\begin{equation}\label{eq:RW-fixed-charge-limit}
 h_\pm\longrightarrow\frac{(t-\delta)^2}{4},
 \qquad
 h_+-h_-=\frac{\eta(y_J-\delta)}{2(k+3)}
 \longrightarrow0.
\end{equation}
Thus, along every refining sequence with fixed charge $\eta$, the conformal spin $h_+-h_-$ tends to zero as $k\to\infty$. As $t$ and $\delta$ vary, the charge-preserving branching sequences at every fixed $\eta$ reproduce the same set $S$ of limiting conformal weights. Comparing~\eqref{eq:weight-vs-displacement} with~\eqref{eq:RW-spectrum}, we identify the Runkel--Watts parameter with the flux displacement across the interface measured in mesh units,
\begin{equation}\label{eq:RW-parameter-is-displacement}
 \mathsf x=\lim_{k\to\infty}(k+2)\,|x_I-y_J|\,,
\end{equation}
so, for the branching sequences above, the candidate Runkel--Watts spectral label equals the rescaled flux displacement, and the integers removed from $S$ are reached at the alcove walls. This prescription selects sequences of primaries in place of a canonical identification of the finite-level Hilbert spaces. At nonzero $\eta$, the finite-level pairs in~\eqref{eq:RW-fixed-charge-limit} are candidate bulk-line-attached junction sectors, distinct from the intrinsic fields of the standalone diagonal minimal model. The calculation fixes their limiting weights, and establishing that they occur in the physical bulk-coupled interface, together with convergence of their normalized fields, OPE coefficients, and correlation functions, remains open.

\begin{conjecture}[Large-level limit of the unit-jump interface]\label{conj:runkel-watts}
For $G=SU(2)$, let $\mathfrak J^{(1)}_k$ be a unit-jump interface theory between $\mathsf T_k$ and $\mathsf T_{k+1}$ whose chiral algebra is $\mathcal V^{(1)}_k$ and whose intrinsic nonchiral completion is diagonal. Suppose its neutral diagonal primaries are compared across levels along the sequences specified in~\eqref{eq:RW-fixed-flux-scaling}. We conjecture that, after the continuum averaging and normalization required to form the limiting fields, this sector converges as $k\to\infty$ to the Runkel--Watts theory $\mathcal M_\infty$ of~\cite{Runkel:2001ng}. We further conjecture that the physical bulk-coupled interface contains the fixed-charge junction sectors of~\eqref{eq:RW-fixed-charge-limit}, whose limiting conformal-weight set is $S$ in every fixed regular $BF[SU(2)]$ charge sector. For each fixed $n$, we conjecture that the $K\to\infty$ limit of the finite chain~\eqref{eq:interface-chain} consists of $n+1$ $BF[SU(2)]$ regions separated by $n$ such bulk-enriched two-dimensional interfaces.
\end{conjecture}

\paragraph{Evidence for Conjecture~\ref{conj:runkel-watts}.}
The central charge matches by~\eqref{eq:interface-central-charge-limit}, the chiral family and the modular invariant are the ones Runkel and Watts take, the spectrum produced by~\eqref{eq:RW-fixed-flux-scaling} reproduces $S$ through~\eqref{eq:coset-weight-limit}, fixed charge leaves this limiting spectrum unchanged by~\eqref{eq:RW-fixed-charge-limit}, and by~\eqref{eq:RW-parameter-is-displacement} the continuum label is the flux displacement across the interface.  These are spectral and algebraic matches; they do not establish convergence of normalized fields, OPE coefficients, or correlators.

\paragraph{Nonuniqueness of the limiting prescription.}
A competing option is the degeneration of the same minimal series studied by Roggenkamp and Wendland, whose limit acquires a geometric interpretation on an interval together with a commutative algebra of vanishing-energy zero modes~\cite[Section~4]{Roggenkamp:2003qp}, from which both the limiting metric and the limiting brane geometry can be read off~\cite{Roggenkamp:2008jm}. This possibility is relevant here because the degenerating interval could match the alcove carrying $x$ in~\eqref{eq:regular-lattice}. Runkel and Watts already flag the ambiguity and attribute the alternative to Roggenkamp and Wendland~\cite[Section~1]{Runkel:2001ng}, and the two limits are distinct as the $\mathcal N=2$ minimal series already shows, where distinct scalings of one coset family reach either a free theory or a continuous orbifold of it~\cite{Fredenhagen:2012rb,Fredenhagen:2012bw,Restuccia:2013tba}. They arise from different prescriptions for comparing primaries across levels. Conjecture~\ref{conj:runkel-watts} adopts~\eqref{eq:RW-fixed-flux-scaling}; a construction of the doubled bulk-coupled interface must determine which sequences remain at finite weight relative to the bulk gap. We also note that $\mathcal M_\infty$ was proposed conjecturally in~\cite{Runkel:2001ng}, with analytic and numerical crossing tests, so our conjecture rests on that limiting proposal as well as on the unconstructed bulk coupling.

\paragraph{Scope of the interface analysis.}
The proved ingredients of this section are Proposition~\ref{prop:adjacent-wall-nonrefinement} under its stated $SU(2)$ and genus-one hypotheses, together with the exact coset central-charge, branching, and weight identities.  The existence of a physical doubled bulk-coupled coset interface, the selection of its bounded-weight line-attached sectors, their Runkel--Watts convergence, and the limiting chain remain open as stated in Conjecture~\ref{conj:runkel-watts}.  None of these interface claims is used in the proofs of the bulk modular-convergence theorem.

\section{Discussion and outlook}
\label{sec:discussion}

The large-level comparison established here preserves normalized regular
torus wave packets and their modular transformations.  The average of the
two chiral weights becomes a continuous magnetic coordinate, while their
difference remains an electric charge.  The lattice-cell normalization
and the no-escape estimate are both needed to turn the formal kernel
limit into strong-$*$ convergence.  The boundary calculations then
identify several compatible local limits and expose the dependence on
polarization, zero modes, and the chosen boundary algebra.  The
interface analysis gives a more selective statement: a particular
charge-preserving refinement is obstructed for gapped adjacent-level
$SU(2)$ interfaces, while coset branching supplies channels with the
required displacement and finite conformal weight.

\subsection{Singular sectors, sewing, and observables}

The singular conjugacy classes contain physically indispensable lines.
At the identity, electric charges are representations of $G$ rather
than characters of $T$; on a general wall the relevant representation
group is the enhanced centralizer.  Their absence as positive-measure
subsets of the regular chart does not remove them from the line
category.  A useful next calculation would start with sequences whose
distance from a fixed alcove face is of order $r_k^{-1}$, retain the
coordinates along that face, and resolve the transverse lattice labels
under its centralizer.  One could then ask whether these states define
additional Hilbert-space components, distributional states, or a
different scaling limit.  The compact-double classification already
provides the expected stabilizer data
\cite{Koornwinder:1996uq,Koornwinder:1998xg}.
Matching that data with the finite-level vacuum and pure-electric
sequences would be the first test of an extension beyond the regular
sector.

Higher-genus amplitudes require control of sewing.  At finite level a
cut circle introduces a sum over simple labels and fusion multiplicity
spaces; at the $BF$ endpoint the corresponding decomposition contains
direct integrals and measurable multiplicity fields
\cite{Koornwinder:1998xg}.  Numerical asymptotics of fusion
coefficients would only describe dimensions or densities.  To sew
states one must also compare normalized intertwiners, their inner
products, and the changes of channel.  A four-punctured sphere is the
first place where this demands an $F$-operator limit.  Compatibility
with the pentagon and hexagon relations is then the finite set of local
consistency tests suggested by the Moore--Seiberg construction
\cite{Moore:1988qv,Bakalov:2001}.
Passing those identities through a limit requires bounds on the
intermediate direct integrals, including regions where a sewn
holonomy becomes singular.  This is the analytic step needed before
the genus-one result can enter a higher-genus modular functor.

An operator formulation would make the meaning of categorical
convergence more precise.  One can first choose bounded, smeared line
observables supported in regular flux windows and ask for a common
representation in which their products and adjoints converge.  The
annular or tube algebra is a natural finite-level object to compare
with the conjugation-groupoid algebra at the endpoint
\cite{Neshveyev:2018wbn,Jia:2026vcr}.
Such a comparison must specify the class of observables, the topology,
and the representations used; agreement of $S$ and $T$ alone cannot
supply the missing tensor structure
\cite{Mignard:2017mms}.
If compatible boundary modules and line junctions can also be included,
the comparison could address the center construction itself, rather
than only its genus-one representation.

The present proof also suggests a quantitative question at finite
$k$.  On a fixed smooth packet supported away from the walls, Poisson
summation controls the unequal Weyl pairs to arbitrary inverse powers
of the level.  The remaining errors come from sampling, the shifted
level, and the approximation of the packet by cells.  A systematic
expansion should separate these representation-dependent errors from
corrections to an observable amplitude.  Charles's Fourier-integral
description gives a semiclassical comparison
\cite{Charles:2010ym}.  Uniform estimates in a Sobolev class
would also permit a study of modular words whose lengths grow with
$k$.  Such estimates are additional information: pointwise
strong convergence for each fixed mapping class gives no uniform
control over a level-dependent sequence of mapping classes.

Closed three-manifold amplitudes present a related issue.  A mapping
torus is formally obtained by taking a trace of its mapping-class
operator, but a unitary operator on $\mathcal H_0$ need not have a
trace.  A fixed positive trace-class operator $R$ gives a well-defined
intermediate quantity, and the theorem does imply
\begin{equation}
 \operatorname{Tr}\!\left(
 R^{1/2}\rho_k^\#(\gamma)R^{1/2}\right)
 \longrightarrow
 \operatorname{Tr}\!\left(
 R^{1/2}\rho_0(\gamma)R^{1/2}\right).
 \label{eq:outlook-regulated-trace}
\end{equation}
Indeed, expand the trace in an eigenbasis of $R$ and use strong
convergence for each eigenvector together with the summable
eigenvalues of $R$.  Removing such a regulator, choosing it compatibly
with gluing, and recovering the singular flat-connection contributions
are further problems.  The different prescriptions studied in
\cite{Blau:2022krm} provide explicit amplitudes against
which this extension could be tested.

\subsection{Boundary theories and level-changing interfaces}

For the boundary theory, the local expansion around the identity of
$G$ is only one part of a large-radius limit.  A global construction
must fix the normalization of the noncompact zero modes, the space of
allowed background holonomies, and the way boundary operators are
compared across levels.  The abelian calculation shows why the
boundary algebra must be kept in this comparison: selecting a
momentum or winding sublattice can survive the bulk limit.  For
nonabelian $G$, one could begin with torus wavefunctions in a regular
commuting background and derive their oscillator determinants and
zero-mode measure from the finite-level WZW theory.  This would test
the theta-function expressions independently of modular covariance
\cite{Elitzur:1989nr,Knizhnik:1984nr}.
It would also distinguish the local nonchiral scalar limit, with
central charge $\dim G$ in each chirality, from a chiral WZW model for
a nonsemisimple tangent or cotangent group
\cite{Nappi:1993ie,Figueroa-OFarrill:1995lva,Tseytlin:1995yw}.

Flat gauging requires a prescription for the holonomies at which
additional bosonic zero modes appear.  Their locus can have measure
zero while its neighborhoods give a divergent integral; deleting the
locus is therefore insufficient.  A constructive approach would
introduce a common zero-mode regulator before integrating over
holonomies, determine its modular transformation, and compare the
result with regulated finite-level partition functions.  The
centralizer projection must be imposed on the full oscillator and
zero-mode Hilbert space.  It enforces the appropriate invariance and
level matching, but does not make every surviving primary spinless.
The continuous-orbifold construction of
\cite{Gaberdiel:2011aa} and the candidate-gauging analysis of
\cite{Jia:2026tsl} give concrete settings in which to
study these questions.  In particular, the adjoint free-scalar
candidate of Section~\ref{sec:flat-gauging} has not been identified
with the $c=1$ Runkel--Watts theory.

The global form of the gauged group matters even when its Lie algebra
action is fixed.  An adjoint action is insensitive to the center, so
gauging $G$ and gauging its faithful quotient need not give the same
theory.  The sum over bundles, possible discrete topological weights,
and sectors associated with an ineffective finite subgroup must be
specified.  Finite orbifold operations and continuous abelian
gaugings already display the role of this data
\cite{Dijkgraaf:1989pz,Gaiotto:2020iye,
Antinucci:2024zjp}.
Comparing these choices in the $SU(2)$ and $SO(3)$ examples would test
which global information survives the local free-scalar expansion.
It would also clarify which boundary condition is represented by a
given choice of regular torus measure.

For interfaces, the next step is an action coupled to both doubled
bulks.  Anomaly matching identifies the diagonal coset chiral algebra,
but the bulk coupling must additionally specify which left--right
branching channels occur as line junctions and which conserved
symmetry, if any, enforces charge preservation.  The fermionic
Chern--Simons interfaces of Gaiotto and Zeng
\cite{Gaiotto:2026qai} offer a concrete construction to compare with
this requirement.  A successful doubled construction would determine
the graded transmission spaces
$\mathcal H^{\mathfrak J_k}_{IJ}$ and test refinement at a fixed
conformal-weight cutoff.  It would also distinguish an interface that
approaches identity transport from a charge-changing interface that
survives as a nontrivial $BF$ defect.  The gapped-interface obstruction
in Section~\ref{sec:gapped-walls} concerns the former question.

The proposed Runkel--Watts limit should next be tested on normalized
fields.  In the neutral diagonal channels, one can average the
finite-level coset primaries over a small interval of the limiting
spectral parameter, fix their two-point normalization, and compare
their three-point coefficients with those of
\cite{Runkel:2001ng}.  At fixed nonzero electric charge, the same
procedure must be carried out for fields attached to bulk lines.
Their coefficients may retain charge information even though their
limiting conformal weights agree with the neutral spectrum.
Other limits of minimal-model families show that the identification of
fields across levels is essential
\cite{Roggenkamp:2003qp,Fredenhagen:2012rb,Fredenhagen:2012bw}.
The exact relation between displacement and conformal weight gives
a physical criterion for selecting sequences, but the bulk coupling
must decide whether those sequences are present.

Finally, interfaces can be composed.  Bringing two elementary
level-changing interfaces together introduces both sums over bulk
line labels and a limit of two-dimensional defect theories.
It would be useful to determine whether this operation reproduces
the fixed-jump diagonal coset, requires an additional renormalization
group flow, or leaves further junction degrees of freedom, see e.g.~\cite{Gaiotto:2012np,Stanishkov:2016pvi,Chen:2026byx}.
The finite categorical treatment of domain walls
\cite{Kitaev:2011dxc,Fuchs:2012dt} supplies the gapped comparison,
while the graded coset transmission contains information that an
ungraded tunneling matrix forgets.  The order of interface fusion
and the large-level limit is itself an observable question.

\subsection{Global forms, anomalies, and three-dimensional gravity}

The restriction to simply connected $G$ removes bundle sectors that
would be present for $G/\Gamma$, with $\Gamma$ a subgroup of the center.
On a torus, commuting quotient-group holonomies can lift to elements
whose commutator is a nontrivial element of $\Gamma$.  Such components
are not described by the single chart $(T\times T)/W$ used here.
An extension should include these components, the allowed
Chern--Simons levels, and any discrete topological choices before
constructing its sampling maps.  This would give a direct test of
whether a finite one-form gauging commutes with the large-level
limit.  The cohomological description of Chern--Simons actions and the
higher-form-symmetry formulation provide the relevant global data
\cite{Dijkgraaf:1989pz,Gaiotto:2014kfa}.

A fixed residual anomaly is accessible already at the classical
level.  Replace the levels $(k,-k)$ by $(k+m,-k)$ with $m$ fixed and
use the same near-diagonal fields.  On a closed manifold and for
fixed smooth $A,B$, the additional term is
$m\,\operatorname{CS}(A+B/(2k))/(4\pi)$, so
\begin{equation}
 \frac{k+m}{4\pi}\operatorname{CS}(A_+)
 -\frac{k}{4\pi}\operatorname{CS}(A_-)
 \longrightarrow
 -\frac{1}{2\pi}\int_M\operatorname{tr}(B\wedge F_A)
 +\frac{m}{4\pi}\operatorname{CS}(A).
 \label{eq:outlook-residual-level}
\end{equation}
The expected endpoint is thus $BF+m\,CS$, with the relative lifts
and global level conditions understood.  Its defect Hilbert spaces
and twisted conjugation-groupoid algebra are studied in
\cite{Jia:2026vcr,Jia:2026bbo}, and a non-compact non-semisimple version has also been studied in~\cite{Balasubramanian:2025rcr, Balasubramanian:2026ymu, Balasubramanian:2026vlr}.
The quantum problem is to modify the two sampling lattices and retain
the anomaly phases in the modular action.  This would compare a
family of finite-level anomalous boundary symmetries with a fixed
continuous anomaly, rather than sending every level to zero at the
endpoint.

For $SU(2)$, a comparison with three-dimensional gravity should retain
both the gravitational observables and the regulator.  The cubic term
in Eq.~\eqref{eq:intro-action-contraction} is of order $k^{-2}$,
consistent with the scale at which a cosmological interaction enters
the quantum-group regularization
\cite{Mizoguchi:1991hk}.  A useful test would insert particles
with fixed regular holonomy and compare their normalized transition
amplitudes before attempting the partition function of a closed
manifold.  Quantum-double particle labels and gauge-fixed
Ponzano--Regge observables are available for this purpose
\cite{Bais:1998yn,Freidel:2004vi}.
Extending the comparison to a state sum requires asymptotics of the
$q$-$6j$ symbols, control of internal label sums, and treatment of
reducible flat connections.  The regular torus theorem supplies none
of these automatically.  The gauge-fixing and torsion analysis of
\cite{Barrett:2008wh} provides a way to formulate the
endpoint amplitudes before asking whether the Turaev--Viro regulator
converges to them.

A lattice realization would permit a different test of the same
limit.  String-net Hamiltonians built from $\mathcal C_k(G)$ give
finite local label spaces at each $k$
\cite{Levin:2004mi}.  One could compare smeared ground-state
amplitudes and ribbon observables as the local label cutoff grows,
using the same flux--charge normalization as on the torus.
The limiting plaquette flatness constraint is distributional, so its
normalization and its action on states require explicit control.
Convergence of the torus modular operators alone says nothing about
a uniform many-body spectral gap or convergence of the local
Hamiltonians.  Establishing even a fixed-lattice comparison would
explain which features of continuous $BF$ symmetry can be obtained
from finite local Hilbert spaces and which require an additional
limit.

\section*{Acknowledgement}

SG is supported by National Natural Science Foundation of China under Grant No. 126B1001. QJ is supported by National Research Foundation of Korea (NRF) Grant No. RS-2024-00405629 and Jang Young-Sil Fellow Program at the Korea Advanced Institute of Science and Technology. JT is supported by National Natural Science Foundation of China under Grant No. 12405085 and by the Natural Science Foundation of Shanghai (Grant No. 24ZR1419300). JT would like to thank Ying Zhang for her love and support. YNW is supported by National Natural Science Foundation of China under Grant No. 12422503.

\appendix

\section{Endpoint modular action and the regular holonomy chart}
\label{app:endpoint-modular-action}

This appendix proves the endpoint properties stated after
Eqs.~\eqref{eq:BF-S-operator}--\eqref{eq:BF-T-operator} and used in
section~\ref{sec:Modular_Limit}.  The construction is independent of
the finite-level convergence argument.  We first realize the generators
as unitary operators in the flux--charge representation, and then relate
them to the holonomy action and its common invariant conull domain.

\subsection{Fourier realization and unitary extension}
\label{app:endpoint-Fourier-unitarity}

The endpoint $S$ kernel integrates over flux and sums over arbitrarily
large charges.  Its Fourier realization makes norm preservation explicit:
the generators permute Fourier characters, so unitarity and the modular
relations can be checked without multiplying integral kernels.
In the flux--charge representation the Dehn twist is the
Aharonov--Bohm phase $\chi_\eta(g_x)=e^{2\pi i(\eta,x)}$.

\begin{proposition}[Fourier realization and endpoint modular action]
\label{prop:regular-Fourier-unitarity}
Under the last unitary in
Eq.~\eqref{eq:regular-BF-Hilbert-chain-early}, the operator $\mathcal S_0$ is
the restriction of the Pontryagin transform
\begin{equation}
 (\mathcal F F)(x,\eta)
 =\sum_{\omega\in P}\int_T
 e^{-2\pi i((x,\omega)+(\eta,y))}F(y,\omega)\,dm_T(y).
 \label{eq:torus-Pontryagin-transform}
\end{equation}
Consequently $\mathcal S_0$ extends uniquely to a unitary on $\mathcal H_0$.
Together with $\mathcal T_0$, it defines a unitary representation
$\rho_0$ of $SL(2,\mathbb Z)$.
\end{proposition}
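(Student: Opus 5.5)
Under the unitary $U=(|W|\gamma_G)^{-1/2}\operatorname{Ext}$ of Proposition~\ref{prop:regular-BF-Hilbert-model}, the plan is to compute $U\mathcal S_0U^{*}$ directly on test functions and identify it with $\mathcal F$ restricted to $L^2(T\times P)^W$. The remaining claims then follow from properties of $\mathcal F$ on the full space $L^2(T\times P)$.

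\emph{Step 1: identification.} For $f\in C_c^\infty(Y\times P)$, evaluate $\mathcal F(\operatorname{Ext}f)$ at $(x,\eta)$ with $x\in Y$. Decompose $T_{\rm reg}=\bigsqcup_{w}wY$ up to a null set, and use $dm_T=\gamma_G\,dy$ in each alcove chart. Substituting $y\mapsto wy$, $\omega\mapsto w\omega$ and using Weyl invariance of $(\,,\,)$ turns the phase into $(w^{-1}x,\omega)+(w^{-1}\eta,y)$. Relabelling $w\mapsto w^{-1}$ gives exactly $\gamma_G\sum_\omega\int_Y\sum_w e^{-2\pi i((wx,\omega)+(w\eta,y))}f(y,\omega)\,dy=(\mathcal S_0f)(x,\eta)$. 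The same substitution shows that $\mathcal F\operatorname{Ext}f$ is $W$-invariant, hence equals $\operatorname{Ext}\mathcal S_0f$. The normalizations $(|W|\gamma_G)^{\mp1/2}$ on the two sides cancel, so $U\mathcal S_0=\mathcal F U$ on the dense test space.

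\emph{Step 2: unitarity of $\mathcal F$.} Write $\mathcal F=\sigma\circ(\mathcal P\otimes\mathcal P^{-1})$ schematically. Here $\mathcal P:L^2(T,dm_T)\to\ell^2(P)$ is the Fourier series $F\mapsto\int_T F(y)e^{-2\pi i(\eta,y)}dm_T(y)$, and $\mathcal P^{-1}$ is the inverse map $\ell^2(P)\to L^2(T)$ in the other variable, up to a sign. Concretely, $\mathcal F$ sends the orthonormal basis $e_{\eta_0}(y)\delta_{\omega_0}$, with $e_{\eta_0}(y)=e^{2\pi i(\eta_0,y)}$, to $e_{-\omega_0}(x)\delta_{-\eta_0}$. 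Both $\{e_\eta\otimes\delta_\omega\}$ and its image are orthonormal bases of $L^2(T\times P)$, by Plancherel for the compact torus as already used in Proposition~\ref{prop:regular-BF-Hilbert-model}. Hence $\mathcal F$ is a unitary permutation of a basis. It commutes with the diagonal $W$-action, since the basis transformation is Weyl-equivariant, so it preserves $L^2(T\times P)^W$. Therefore $\mathcal S_0=U^*\mathcal F U$ extends uniquely to a unitary on $\mathcal H_0$.

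\emph{Step 3: modular relations.} In the basis of Step~2, $\mathcal F$ acts as $(\eta_0,\omega_0)\mapsto(-\omega_0,-\eta_0)$ on label pairs, up to the position/charge swap. Thus $\mathcal F^2$ is $F(y,\omega)\mapsto F(-y,-\omega)$, which is the central element $-1$, and $\mathcal F^4=1$. For the twist, conjugate $\mathcal T_0$ by $U$; it becomes multiplication by $e^{2\pi i(\eta,x)}$, which is well defined on $T\times P$ because $P=(Q^\vee)^*$. On Fourier--holonomy wavefunctions $\Psi(x,t)$, the transform $\mathcal T_0$ is $\Psi(x,t)\mapsto\Psi(x,t+x)$ and $\mathcal S_0$ is $\Psi(x,t)\mapsto\Psi(\pm t,\mp x)$, both measure-preserving automorphisms of $T\times T$. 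I would verify $(\mathcal S_0\mathcal T_0)^3=\mathcal S_0^2$ by composing these linear maps of $T\times T$. The check reduces to the matrix identity $(ST)^3=S^2$ in $SL(2,\mathbb Z)$, applied by pullback, using the identification of $\mathscr P_2$ in~\eqref{eq:partial-holonomy-Fourier}.

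\emph{Main obstacle.} The delicate point is bookkeeping of signs and of the pullback order. Pullback reverses composition, so one must confirm that the assignment yields a representation rather than an anti-representation; inverting the matrices fixes this if needed. The Haar-null singular set is harmless because linear automorphisms preserve Haar measure. The other point to check is that $\mathcal S_0$ matches the holonomy rotation with the correct orientation. I would first check both issues in the $SU(2)$ example~\eqref{eq:SU2-endpoint-kernel-check}.
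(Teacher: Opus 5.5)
Your route is the paper's own. You use the alcove decomposition of $T$ to get $\mathcal F\operatorname{Ext}f=\operatorname{Ext}\mathcal S_0f$ on the test space, unitarity follows because $\mathcal F$ permutes the Fourier basis $e_\lambda\delta_\nu$ Weyl-equivariantly, and the modular relations come from a linear $SL(2,\mathbb Z)$ action. The only difference is in Step~3. The paper defines $\widetilde\rho_0(A)(e_\lambda\delta_\nu)=e_{a\lambda+b\nu}\delta_{c\lambda+d\nu}$ for every $A$ and checks the group law, so it never needs the presentation $\langle S,T\mid S^4=1,(ST)^3=S^2\rangle$. Your holonomy pullback is the same action as in Appendix~\ref{app:regular-holonomy-action}, where the transposed convention $\Phi_A(x,y)=(ax+cy,bx+dy)$ settles your representation-versus-anti-representation worry.

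There is, however, a sign error exactly at the point you call delicate. Character orthogonality gives $\mathcal F(e_{\eta_0}\delta_{\omega_0})=e_{-\omega_0}\delta_{\eta_0}$, not $e_{-\omega_0}\delta_{-\eta_0}$. Unitarity is unaffected, but your label map $(\eta_0,\omega_0)\mapsto(-\omega_0,-\eta_0)$ has determinant $-1$ and is an involution. It would therefore give $\mathcal F^2=1$, contradicting the correct formula $(\mathcal F^2F)(x,\eta)=F(-x,-\eta)$ that you state next.

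The correct map is $(\lambda,\nu)^t\mapsto S(\lambda,\nu)^t$, which on holonomy wavefunctions reads $\Psi(x,t)\mapsto\Psi(t,-x)$. This fixes the sign you left as $\pm$ in Step~3, and that sign matters. The other choice, $\Psi(x,t)\mapsto\Psi(-t,x)$, is $\rho_0(S^{-1})=\mathcal S_0^{-1}$. For it, $(\mathcal S_0^{-1}\mathcal T_0)^3=\rho_0((S^{-1}T)^3)=1$, while $\mathcal S_0^{-2}=\rho_0(-1)$ is the charge conjugation $f\mapsto f(-w_0x,-w_0\eta)$. That operator is nontrivial on $\mathcal H_0$ whenever $-w_0\neq1$, for instance for $SU(N)$ with $N\geq3$.

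Your planned $SU(2)$ check cannot detect this. There $-w_0=1$, so $\mathcal S_0$ is a self-adjoint involution on $\mathcal H_0$; its kernel $\cos\pi(tm+nu)$ is real and symmetric, and both orientations give the same operator. The orientation has to be read off directly from the definition of $\mathcal F$, as in the paper's basis computation $\mathcal F(e_\lambda\delta_\nu)=e_{-\nu}\delta_\lambda$.
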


\begin{proof}
Write $e_\lambda(x)=e^{2\pi i(\lambda,x)}$.  On the Fourier basis
$e_\lambda\delta_\nu$ of $L^2(T\times P)$,
$\mathcal F(e_\lambda\delta_\nu)=e_{-\nu}\delta_\lambda$.  Indeed, only
the summand $\omega=\nu$ survives, and character orthogonality gives
$\int_T e^{2\pi i(\lambda-\eta,y)}dm_T(y)=\delta_{\lambda,\eta}$.
Thus $\mathcal F$ permutes an orthonormal basis and extends to a unitary.
The permutation commutes with simultaneous Weyl transformations of the
labels, so the invariant subspace is preserved.

The normalization of the alcove kernel can also be checked explicitly.
For $x\in Y$, decompose $T_{\rm reg}$ into $wY$ and put
$y=wt$, $\omega=w\zeta$ in each term.  This gives
\begin{align*}
 (\mathcal F\operatorname{Ext}f)(x,\eta)
 &=\gamma_G\sum_{w\in W}\sum_{\zeta\in P}\int_Y
 e^{-2\pi i((w^{-1}x,\zeta)+(w^{-1}\eta,t))}
 f(t,\zeta)\,dt\\
 &=(\mathcal S_0f)(x,\eta).
\end{align*}
Replacing $w^{-1}$ by $w$ gives exactly
Eq.~\eqref{eq:BF-S-operator}, with no extra factor of $|W|$.
The calculation is justified first for the finite-charge test space; the
finite Weyl sum and the integral are then absolutely convergent.
Weyl invariance extends the identity to all regular $x$, so
$\mathcal F\operatorname{Ext}f=\operatorname{Ext}\mathcal S_0f$
on the test space.  With
$U=(|W|\gamma_G)^{-1/2}\operatorname{Ext}$ from
Proposition~\ref{prop:regular-BF-Hilbert-model}, this says that
$\mathcal S_0$ agrees there with
$U^{-1}\mathcal F|_{L^2(T\times P)^W}U$.
The latter operator is unitary on all of $\mathcal H_0$, and the test
space is dense.  It therefore gives the unique bounded extension of
$\mathcal S_0$, which is in particular unitary.  The intertwining identity
then holds on all of $\mathcal H_0$.

For the group action, use the generators
\[
 S=\begin{pmatrix}0&-1\\1&0\end{pmatrix},\qquad
 T=\begin{pmatrix}1&1\\0&1\end{pmatrix}
\]
of $SL(2,\mathbb Z)$.
For $A=\left(\begin{smallmatrix}a&b\\c&d\end{smallmatrix}\right)
\in SL(2,\mathbb Z)$, define
\begin{equation}
 \widetilde\rho_0(A)(e_\lambda\delta_\nu)
 =e_{a\lambda+b\nu}\delta_{c\lambda+d\nu}.
 \label{eq:natural-SL2-action}
\end{equation}
This permutes an orthonormal basis and commutes with the diagonal Weyl
action.  To fix the representation convention, regard $(\lambda,\nu)^t$ as
a column vector.  Then
\begin{equation}
 \widetilde\rho_0(A)\widetilde\rho_0(B)
   (e_\lambda\delta_\nu)
 =\widetilde\rho_0(AB)(e_\lambda\delta_\nu),
 \label{eq:endpoint-group-law-check}
\end{equation}
because the rightmost operator first replaces $(\lambda,\nu)^t$ by
$B(\lambda,\nu)^t$ and the left operator then applies $A$.
Hence the formula defines a representation.
Transporting the restriction of $\widetilde\rho_0$ through
$U=(|W|\gamma_G)^{-1/2}\operatorname{Ext}$ gives the unitary representation
\[
 \rho_0:SL(2,\mathbb Z)\longrightarrow U(\mathcal H_0),\qquad
 \rho_0(S)=\mathcal S_0,\quad \rho_0(T)=\mathcal T_0 .
\]
For the first generator, Eq.~\eqref{eq:torus-Pontryagin-transform} gives
$\widetilde\rho_0(S)=\mathcal F$.  For the second, the basis formula reads
\[
 \widetilde\rho_0(T)(e_\lambda\delta_\nu)
 =e_{\lambda+\nu}\delta_\nu
 =e^{2\pi i(\nu,\,\cdot\,)}e_\lambda\delta_\nu.
\]
Consequently
$(\widetilde\rho_0(T)Uf)(x,\eta)
=e^{2\pi i(\eta,x)}(Uf)(x,\eta)=U\mathcal T_0f(x,\eta)$, which proves the
two displayed generator identities.
In particular $\mathcal S_0^4=1$ and
$(\mathcal S_0\mathcal T_0)^3=\mathcal S_0^2$.
There is also a direct check of the central element: on $T\times P$,
$\mathcal F^2F(x,\eta)=F(-x,-\eta)$.  Since $-w_0$ preserves $Y$,
restriction of the Weyl-invariant extension gives
\begin{equation}
 (\mathcal S_0^2 f)(x,\eta)=f(-w_0x,-w_0\eta).
 \label{eq:endpoint-charge-conjugation}
\end{equation}
This is the flux--charge form of finite charge conjugation
$(a,b)\mapsto(a^*,b^*)$: the identity $-w_0\rho=\rho$ ensures that the
shifted finite coordinates transform in exactly the same way.

\end{proof}

\subsection{Holonomy representation and a common conull domain}
\label{app:regular-holonomy-action}

The representation just constructed is unitarily equivalent to the
pullback action on holonomy wavefunctions.  Put
$\mathcal H_{\rm hol}=L^2(T\times T,dm_T\otimes dm_T)^W$.
For $A=\left(\begin{smallmatrix}a&b\\c&d\end{smallmatrix}\right)
\in SL(2,\mathbb Z)$, define
\begin{equation}
 \Phi_A(x,y)=(ax+cy,bx+dy),\qquad
 (\rho_{\rm hol}(A)\Psi)(x,y)=\Psi\bigl(\Phi_A(x,y)\bigr).
 \label{eq:holonomy-modular-pullback}
\end{equation}
Each $\Phi_A$ is a torus automorphism, so it preserves normalized product
Haar measure.  It also commutes with the diagonal Weyl action.
With this convention $\Phi_A\circ\Phi_B=\Phi_{BA}$, and consequently
$\rho_{\rm hol}(A)\rho_{\rm hol}(B)=\rho_{\rm hol}(AB)$.  Thus $\rho_{\rm hol}$ is a unitary representation on
$\mathcal H_{\rm hol}$.

On the full holonomy space, before restricting to Weyl invariants,
$\rho_{\rm hol}(A)$ sends $e_\lambda(x)e_\nu(y)$ to
$e_{a\lambda+b\nu}(x)e_{c\lambda+d\nu}(y)$.
The partial Fourier transform $\mathscr P_2$ of
Eq.~\eqref{eq:partial-holonomy-Fourier} therefore intertwines this action
with Eq.~\eqref{eq:natural-SL2-action}.
In particular, with $U=(|W|\gamma_G)^{-1/2}\operatorname{Ext}$ from
Proposition~\ref{prop:regular-BF-Hilbert-model}, the unitary map
$\mathscr V=U^{-1}\mathscr P_2:\mathcal H_{\rm hol}\to\mathcal H_0$
satisfies
\begin{equation}
 \begin{gathered}
 (\mathscr V\Psi)(x,\eta)
 =\sqrt{|W|\gamma_G}\int_T
   \Psi(x,y)e^{-2\pi i(\eta,y)}\,dm_T(y),\qquad x\in Y,\\
 \rho_0(A)=\mathscr V\rho_{\rm hol}(A)\mathscr V^{-1}.
 \end{gathered}
 \label{eq:holonomy-flux-charge-intertwiner}
\end{equation}
The coefficient formula holds for almost every $x$ and defines the map
in $L^2$.  Thus the holonomy and flux--charge formulas describe the same
modular representation in different realizations.

The condition that the first holonomy is regular singles out a convenient
coordinate chart, but a modular transformation changes the chosen cycle.
A given regular pair can therefore leave that chart.  For normalizable
states this causes no loss of information: the exceptional pairs for any
fixed transformation have zero measure.  Countability of the modular
group allows one to discard all these exceptional sets at once.

\begin{remark}[A common conull domain for the modular action]
\label{rem:regular-chart-ae-modular}
Put $R=T_{\rm reg}\times T$, and use $\Phi_A$ from
Eq.~\eqref{eq:holonomy-modular-pullback}.
The first component of $\Phi_A(x,y)$ is singular precisely when
\[
 a\alpha(x)+c\alpha(y)\in\mathbb Z
\]
for some root $\alpha$.  Since $(a,c)$ is primitive, the character
$(x,y)\mapsto e^{2\pi i(a\alpha(x)+c\alpha(y))}$ is nontrivial.
Its kernel has Haar measure zero in the connected torus $T\times T$.
There are finitely many roots, so $\Phi_A^{-1}(R)$ is conull for each
fixed $A$.  Since $SL(2,\mathbb Z)$ is countable,
\begin{equation}
 D=\bigcap_{A\in SL(2,\mathbb Z)}\Phi_A^{-1}(R)
 \label{eq:common-modular-conull-domain}
\end{equation}
is also conull.  If $p\in D$, then for every $A,B\in SL(2,\mathbb Z)$,
$\Phi_A(\Phi_B(p))=\Phi_{BA}(p)\in R$; hence $\Phi_B(p)\in D$.
Applying the same argument to $B^{-1}$ gives $\Phi_B(D)=D$.
The set $D$ is also invariant under the diagonal Weyl action and
therefore descends to a common invariant conull subset of
$(T\times T)/W$.

Because removing a null set does not change an $L^2$ space, the
holonomy pullbacks define unitary operators on the quotient--Haar
space in Eq.~\eqref{eq:regular-BF-Hilbert-chain-early}; their
flux--charge realization is $\rho_0$ by
Eq.~\eqref{eq:holonomy-flux-charge-intertwiner}.
This is an almost-everywhere statement, not pointwise invariance of
the whole chart: for $x\in T_{\rm reg}$,
$\Phi_S(x,0)=(0,-x)\notin R$.
Even $T_{\rm reg}\times T_{\rm reg}$ is not invariant under the full
modular group, since $\Phi_T(x,-x)=(x,0)$.
\end{remark}

The conull-domain statement concerns the endpoint representation.
It does not supply a bound on concentration near singular holonomies
in a finite-level limit; that bound is provided by the no-escape
argument in Theorem~\ref{thm:no-escape}.

\bibliographystyle{JHEP}
\bibliography{biblio}

\end{document}